%

%% Produced by Scientific Word
%% Version 2021101411
%% Created Mon Apr 25 2022 13:30:31 GMT+0800
%% Last revised Mon Apr 25 2022 13:30:46 GMT+0800
\documentclass{article}

\usepackage{algorithm}
\usepackage{algorithmic}
% preamble
\usepackage[a4paper]{geometry}
\usepackage{amssymb,amsmath,mathtools,xcolor,graphicx,xspace,colortbl,ragged2e,rotating} %
\usepackage{amsfonts} %% 010
\usepackage{amsmath} %% 010
\usepackage{amssymb} %% 100
\usepackage{amsthm} %% 100
\usepackage{arydshln}
\usepackage{color} %% 100
\usepackage{graphicx} %% 100
\usepackage{caption}
\usepackage{subcaption}
\usepackage{mathtools} %% 100
\usepackage{hyperref} %% 10000
\usepackage{enumerate}
\usepackage{authblk}
\newcommand{\adots}{\mathinner{\mkern2mu \raisebox{0.1em}{.}\mkern2mu\raisebox{0.4em}{.}
\mkern2mu\raisebox{0.7em}{.}\mkern2mu}}
\DeclareGraphicsExtensions{.pdf,.eps,.ps,.png,.jpg,.jpeg}
\setcounter{MaxMatrixCols}{30}
\providecommand{\U}[1]{\protect\rule{.1in}{.1in}}
\allowdisplaybreaks
\newtheorem{theorem}{Theorem}

\newtheorem{assumption}[theorem]{Assumption}

\newtheorem{definition}[theorem]{Definition}
\newtheorem{example}[theorem]{Example}

\newtheorem{lemma}[theorem]{Lemma}
\newtheorem{notation}[theorem]{Notation}
\newtheorem{problem}[theorem]{Problem}
\newtheorem{proposition}[theorem]{Proposition}
\newtheorem{remark}[theorem]{Remark}

\usepackage{cite}
\usepackage{amsmath,amssymb}
\usepackage{tikz}
\usetikzlibrary{decorations.pathreplacing}

\usepackage{ytableau}

\iffalse
\newenvironment{proof}{\noindent{\em Proof:}}{$\Box$~\\}
\fi
\begin{document}

\title{An  Algorithm for Discriminating the Complete Multiplicities of a Parametric Univariate Polynomial }
\author[a]{Simin Qin}
\author[b]{Bican Xia}
\author[a]{Jing Yang\thanks{%
Corresponding author: yangjing0930@gmail.com.}}
\affil[a]{{\small HCIC--School of Mathematical Sciences,
Center for Applied Mathematics of Guangxi,
Guangxi Minzu University, Nanning 530006, China}}
\affil[b]{\small School of Mathematical Sciences, Peking University, Beijing 100091, China}

%{\footnotesize qinsimin0084@163.com; xbc@math.pku.edu.cn; yangjing0930@gmail.com}}
\date{}
\maketitle

\begin{abstract}
In this paper, we tackle the parametric complete multiplicity problem for a univariate polynomial. Our approach to the parametric complete multiplicity problem has a significant difference from the classical method, which relies on repeated gcd computation. Instead, we introduce a novel technique that uses incremental gcds of the given polynomial and its high-order derivatives. This approach, formulated as non-nested subresultants, sidesteps the exponential expansion of polynomial degrees in the generated condition. We also uncover the hidden structure between the incremental gcds and pseudo-remainders. Our analysis reveals that the conditions produced by our new algorithm are simpler than those generated by the classical approach in most cases. Experiments show that our algorithm is faster than the one based on repeated gcd computation for problems with relatively big size.\end{abstract}

\section{Introduction}
Solving univariate polynomials for roots is a very fundamental problem in computer algebra with numerous applications.  Depending on whether the coefficients contain indeterminates, we categorize the problem into two sub-problems. 
\begin{itemize}
\item For polynomials with constant coefficients, ``solving" means finding/isolating the roots. 
\item For polynomials with parametric coefficients, ``solving" means classifying the \textit{complete root structures} the given polynomial may have and providing a set of conditions under which each structure occurs. By complete root structure, we mean the 
numbers of real and imaginary roots and their multiplicities. 
Thus complete root structure is also called \textit{complete multiplicity structure}.
\end{itemize}

In this paper, we tackle the second problem. More explicitly, we consider a polynomial of the form $P=a_nx^n+\cdots+a_0$, where 
$a_n\ne0$ and $a_i$'s are parameters which take values over the real field $\mathbb{R}$, and 
explore the problem of determining a set of conditions on $a_i$'s (which provides a partition of the parameter set) so that $P$ exhibits a specific complete multiplicity structure under each condition. 
For instance, we consider a quartic polynomial $P=a_{4}x^{4}+a_{3}x^{3}+a_{2}x^{2}+a_{1}x+a_{0}$, where 
$a_4\ne0$ and $a_i$'s are real parameters. All the potential complete multiplicity structures of $P$ are:
\[
\begin{array}{llll}
\left((1,1,1,1);
 ()\right), &\left((1,1); (1,1)\right), &\left((2,1,1); ()\right), &\left((2); (1,1)\right),\\ \left((2,2); ()\right), &\left((); (2,2)\right), &\left((3,1); ()\right),\ &\left((4); ()\right),
\end{array}\] 
where $(\boldsymbol{\mu_R};\boldsymbol{\mu_I})$ with $\# \boldsymbol{\mu_R}+\#\boldsymbol{\mu_I}=n$ is interpreted as follows:
\begin{itemize}
\item 
the real roots of $P$ have the multiplicity structure $\boldsymbol{\mu_R}$, and 
\item 
the imaginary roots of $P$ have the multiplicity structure $\boldsymbol{\mu_I}$. 
\end{itemize}
For instance, if $P$ has a complete multiplicity structure $((2);(1,1))$, then $P$ has a double real root and a pair of simple imaginary roots. 
With this setting, we aim to find a set of conditions $C_0,C_1,\ldots$ on  $a_{i}$'s so that
\[
\text{the complete multiplicity structure of $P$}=\left\{
\begin{array}{ll}
((1,1,1,1);()) & \text{iff $C_0$ holds} \\
((1,1);(1,1)) & \text{iff $C_1$ holds} \\
\vdots & \vdots \\
((4);()) & \text{iff $C_7$ holds}
\end{array}\right.
\]

\noindent In general, the problem is stated as follows:

\medskip
\noindent\textbf{Problem}: \emph{For every} $\boldsymbol{\mu_c}=\left(  \boldsymbol{\mu_R};\boldsymbol{\mu_I}\right)$ \textit{where} $\boldsymbol{\mu_R}=(\mu_{R,1},\ldots,\mu_{R,m_1})$ \textit{and} $\boldsymbol{\mu_I}=(\mu_{I,1},\ldots,\mu_{I,m_2})$ \textit{are such that} 
\begin{itemize}
\item $\mu_{R,1}\ge\ldots\ge\mu_{R,m_1}>0$,
\item $\mu_{I,1}=\mu_{I,2}\ge\ldots\ge\mu_{I,m_2-1}=\mu_{I,m_2}>0$, \emph{and}
\item $\sum_{i=1}^{m_1}\mu_{R,i}+\sum_{i=1}^{m_2}\mu_{I,i}=n,$

\end{itemize}
\emph{find a necessary and sufficient condition on the
coefficients of a polynomial }$P$ \emph{{over $\mathbb{R}$} of degree~}$n$ \textit{such
that the complete multiplicity structure of} $P$ \textit{is} $\boldsymbol{\mu_c}$.
\medskip

This problem is significant as it has numerous applications in various fields, such as mathematics, science, and engineering. Due to its importance, the
problem and several related problems have already been extensively studied
(e.g., see \cite{1998_Gonzalez_Recio_Lombardi,2021_Hong_Yang,2006_Liang_Jeffrey,2008_Liang_Jeffrey_Maza,1999_Liang_Zhang,1996_Yang_Hou_Zeng}).
In \cite{1996_Yang_Hou_Zeng}, Yang, Hou and Zeng gave an algorithm to generate a condition for discriminating different complete multiplicity structures of a univariate polynomial (referred to as YHZ's condition hereinafter) by making use of repeated gcd computation for parametric polynomials \cite{1971_Brown_Traub,1967_Collins,1983_Loos}. 
A similar idea was adopted by Gonzalez-Vega et al. \cite{1998_Gonzalez_Recio_Lombardi} for solving the real root classification and quantifier elimination problems by using Sturm-Habicht sequences. Roughly speaking, this method computes multiple factors at different levels, which can be realized by repeated gcd computation, i.e., computing the gcd of $P$ and its first derivative, the gcd of the previous gcd and its derivative, and so on. 
For the gcd at each level, we determine the condition that it has a certain number of distinct real roots/pairs of imaginary roots.
By conjoining all the conditions at different levels, one can get a condition that $P$ has a given complete multiplicity structure.
It should be pointed out that in YHZ's method, the polynomials in the conditions are computed from repeated gcds whose coefficients are nested determinants. Thus, the ``size" of these polynomials increases dramatically when the degree of $P$ grows, causing a huge computational burden.

A coarser version of the problem, which is stated as the parametric multiplicity problem, has been well studied (mainly via repeated gcd computation) in classical subresultant theory (e.g., \cite{1999_Yap}). In the parametric multiplicity structure, we do not differentiate real and imaginary roots, and thus, the multiplicity structure considered is called a complex multiplicity structure. Recently,  Hong and Yang revisited the problem in \cite{2021_Hong_Yang_JSC,2024_Hong_Yang:non-nested},  resulting in two non-nested conditions for determining the multiplicity structure. It is shown that the generated conditions by \cite{2021_Hong_Yang_JSC,2024_Hong_Yang:non-nested} are smaller than that produced via repeated gcd computation. Note that the complete multiplicity can be viewed as the refinement of complex multiplicity. We aim to find conditions with ``small" size for the complete multiplicity problem, measured in terms of the number of polynomials appearing in the conditions and their maximum degree.

The main contribution of this paper is to provide such a condition,
which has a
\emph{smaller} size than that in the previous methods. The key idea for generating the new condition is to replace the repeated gcds with the gcds of $P$ and its high-order derivatives of different orders for describing the multiple factors at different levels. For this purpose, we introduce the concept of incremental gcd and prove that it can be written as some specific subresultant of $P$ and its derivatives (see Theorem \ref{thm:icgcd}). Then, we devise an algorithm for computing the condition for every possible complete multiplicity structure $P$ may have. 
% The main workflow can be described as follows.
% \begin{enumerate}
% \item Formulate the condition for the coarser version of  every $\boldsymbol{\mu_c}$ via multiplicity discriminant in \cite{2021_Hong_Yang};
% \item Formulate the incremental gcds of $P$ and its derivatives (i.e., the multiple factors of $P$ at different level) via subresultant;
% \item Write down the conditions for the incremental gcds have a certain number of distinct real roots/distinct pairs of imaginary roots. 
% \item Output the conjunction of conditions generated in Steps 1 and 3.
% \end{enumerate}
It is shown that the output condition has a smaller number of polynomials, and the maximal degree of polynomials in the condition is also significantly smaller than those in YHZ's method. Furthermore, we also explore the relationship among these incremental gcds and identify an interesting structure (see Proposition \ref{prop:sres_prem}) which is similar to the generalized Habicht's theorem in \cite{2024_Hong_Yang}.

The paper is structured as follows. 
In Section \ref{sec:problem}, we first present the problem to be addressed in a formal way. In Section \ref{sec:preliminaries}, we review the concept of subresultant for multiple polynomials and its equivalent form in roots. 
In Section \ref{sec:main}, we present the main result of the paper (Theorem \ref{thm:icgcd}), which is followed by a detailed proof in Section \ref{sec:proof}. 
In Section \ref{sec:algorithm}, we design an algorithm for solving the proposed problem with the newly developed tool. 
In Section \ref{sec:comparison}, we compare the performance of the algorithm and the  size of polynomials output by the algorithm and those
given by previous works. 
To keep the presentation of the main result in a tight manner,
we postpone the proofs for two secondary results (which we hope could be useful for tackling other related problems) to Appendices A and B.

\section{Problem Statement}\label{sec:problem}
\begin{notation}\label{notation_poly}
\
\begin{itemize}
\item $P = \sum_{i = 0}^{n}a_{i}x^{i}$, where $a_{n} \ne 0$;
\item $\operatorname{mult}(P) = (\mu_{1}, \dots, \mu_{m})$ is the multiplicity vector of $P$, where $\mu_{1} \ge \cdots \ge \mu_{m} \ge 1$. 
\end{itemize}
\end{notation}

Without loss of generality, we assume that the polynomial considered in this paper has degree greater than $1$.

\begin{definition} [Complete multiplicity]
 Given $P \in \mathbb{R}[x]$, assume $P$ has
 $m_1$ distinct real roots of $P$ with multiplicities $\mu_{R,1}, \dots, \mu_{R,m_{1}}$, and 
 $m_2$ distinct imaginary roots of $P$ with multiplicities $\mu_{I,1}, \dots, \mu_{I,m_{2}}$, respectively,
where $\mu_{R,1}\ge\dots\ge\mu_{R,m_{1}}\ge1$ and $\mu_{I,1}=\mu_{I,2}\ge\dots\ge\mu_{I,m_{2}-1}=\mu_{I,m_{2}}\ge1$. Then the \emph{complete multiplicity} of $P$, written as $\operatorname{cmult}(P)$, is defined by
 $$ \operatorname{cmult}(P) = ((\mu_{R,1}, \dots, \mu_{R,m_{1}}); (\mu_{I,1}, \dots, \mu_{I,m_{2}})). $$
\end{definition}

\begin{example}
 Let $P = x^5 - 4x^4 + 6x^3 - 6x^2 + 5x - 2$. Then $\operatorname{mult}(P) = (2, 1, 1, 1)$, since it can be verified that $P = (x - 1)^2(x - 2)(x^2 + 1)$. It is easy to see that
\[\operatorname{cmult}(P) =((2,1);(1,1)).\]
\end{example}

\begin{notation}\
\begin{itemize}
    \item $\mathcal{M}(n) :=\{(\mu_{1}, \dots, \mu_{m}) : \mu_{1} + \cdots + \mu_{m} = n, \mu_{1} \ge \cdots \ge \mu_{m} \ge 1\}$;

    \item $\overline{\mathcal{M}}(n):=$
\\
$\left\{
((\mu_{R,1}, \dots, \mu_{R,m_{1}}); (\mu_{I,1}, \dots, \mu_{I,m_{2}})):\bigwedge\left(
\begin{array}{l}
\mu_{R,1}\ge\dots\ge\mu_{R,m_{1}}\ge1\\
\mu_{I,1}=\mu_{I,2}\ge\dots\ge\mu_{I,m_{2}-1}=\mu_{I,m_{2}}\ge1\\
\sum_{i=1}^{m_1}\mu_{R,i}+\sum_{i=1}^{m_2}\mu_{I,i}=n
\end{array}
\right)
\right\}$.
\end{itemize}
\end{notation}

Obviously, every $\boldsymbol{\mu_c}\in\overline{\mathcal{M}}(n)$ can be viewed as a $2$-partition of some $\boldsymbol{\mu}\in\mathcal{M}(n)$ with the first part representing the multiplicities of real roots and the second part representing those of imaginary roots. 
For example, $\operatorname{cmult}(P) = ((2, 1); (1, 1))\in\overline{\mathcal{M}}(5)$ is a $2$-partition of $\operatorname{mult}(P) = (2, 1, 1, 1)\in\mathcal{M}(5)$, where $(2,1)$ indicates that $P$ has two real roots with one of them to be of multiplicity $2$ and $(1,1)$  indicates that $P$ has a pair of simple imaginary roots.

Now we are ready to give a formal statement of the problem addressed in the current paper.

\begin{problem}[Parametric  complete multiplicity problem]\label{problem}
\ 
\begin{description}
\item[In:\ \ ] $P=\sum_{i=0}^na_ix^i\in\mathbb{Z}[a_0,\ldots,a_n][x]$ where $n\ge2$ and $a_i$'s take values over $\mathbb{R}$ with $a_n$ assumed to be nonzero.
\item[Out:] for each $\boldsymbol{\mu_c}\in\overline{\mathcal{M}}(n)$, find a necessary and sufficient condition $C_{\boldsymbol{\mu_c}}$ on $a_i$'s such that $\operatorname*{cmult}(P)=\boldsymbol{\mu_c}$.
\end{description}
\end{problem}

\section{Preliminaries}\label{sec:preliminaries}

A useful tool for formulating the multiple factors of a polynomial is subresultant for multiple polynomials, which is proposed by Hong and Yang in \cite{2021_Hong_Yang, 2024_Hong_Yang} with several variants developed in \cite{2023_Wang_Yang} and is usually defined in the form of determinant polynomial.

\begin{definition} [Determinant polynomial]
 Let $\boldsymbol{M}$ be a $p \times q$ matrix where $p \leq q$. Then the determinant polynomial of $\boldsymbol{M}$, written as $\operatorname*{dp}\boldsymbol{M}$, is defined as
 $$ \operatorname*{dp}\boldsymbol{M} = \sum_{i=0}^{q-p}\boldsymbol{M}^{(i)}x^{i} $$
 where $\boldsymbol{M}^{(i)} = \left|\boldsymbol{M}_{1}, \ldots, \boldsymbol{M}_{p - 1}, \boldsymbol{M}_{q- i}\right|$ and $\boldsymbol{M}_{k}$ stands for the $k$-th column of $\boldsymbol{M}$.
\end{definition}

\begin{notation}\label{notation}\
\begin{itemize}
\item $\boldsymbol{F} = (F_0, F_1, \ldots, F_t)\subseteq \mathbb{Z}[a][x]$, where 
\begin{itemize}
    \item $t \geq 1$,
    \item $d_i=\deg F_i$, and
    
    \item $F_i = \sum_{j=0}^{d_i}a_{ij}x^j$;
\end{itemize}

\item $\mathcal{P}(d_0,t)=\{ (\delta_{1}, \ldots, \delta_{t})\in \mathbb{N}^{t}_{\geq 0}:\,  \delta_{1} + \cdots + \delta_{t}\le d_{0}\}$.

\item For $\boldsymbol{\delta}=(\delta_1,\ldots,\delta_{t})\in\mathcal{P}(d_0,t)$, $|\boldsymbol{\delta}|:=\delta_1+\cdots+\delta_{t}$.
\end{itemize}
\end{notation}

\begin{definition} [Generalized Sylvester matrix]
 Given $\boldsymbol{F}$  as in Notation \ref{notation} and $\boldsymbol{\delta}\in\mathcal{P}(d_0,t)$, the $\boldsymbol{\delta}$-th Sylvester matrix of $\boldsymbol{F}$, written as $\boldsymbol{M}_{\boldsymbol{\delta}}(\boldsymbol{F})$, is defined as
 $$ \boldsymbol{M}_{\boldsymbol{\delta}}(\boldsymbol{F}) =
\begin{bmatrix}
 \boldsymbol{B}_{0}\\
 \boldsymbol{B}_{1} \\
 \vdots \\
 \boldsymbol{B}_{t}
\end{bmatrix}$$
 where
\begin{align}
 \boldsymbol{B}_{i} =&
\begin{bmatrix}
 \cdots & \cdots & a_{i1} & a_{i0} & & \\
 & \ddots & & \ddots & \ddots & \\
 & & \cdots & \cdots & a_{i1} & a_{i0}
\end{bmatrix}_{\delta_i\times(\delta_0+d_0)}\notag\\
 \delta_0=&
 \left\{\begin{array}{cl}
 \max_{\substack{1\le i\le t\\\delta_i\ne0}}(d_i+\delta_i)-d_0,&\text{if}\ \ \max_{\substack{1\le i\le t\\\delta_i\ne0}}(d_i+\delta_i)\ge d_0,\\[10pt]
 1,&\text{otherwise}.
\end{array}\right.\label{eqs:delta0}
\end{align}
\end{definition}

\begin{remark}
 One may check that the number of rows in $\boldsymbol{M}_{\boldsymbol{\delta}}(\boldsymbol{F})$ is $\delta_0+|\boldsymbol{\delta}|$ and that of columns is $\delta_0+d_0$. Since $|\boldsymbol{\delta}|\le d_0$, we immediately see that $\boldsymbol{M}_{\boldsymbol{\delta}}(\boldsymbol{F})$ is either a wide or square matrix. Thus it is meaningful to compute its determinant polynomial. 
\end{remark}

\begin{definition}
The $\boldsymbol{\delta}$-th subresultant $R_{\boldsymbol{\delta}}$ of $\boldsymbol{F}$ with respect $x$ is defined as
$$ R_{\boldsymbol{\delta}}(\boldsymbol{F}) := \operatorname{dp}\boldsymbol{M}_{\boldsymbol{\delta}}(\boldsymbol{F}). $$
The principal leading coefficient of $R_{\boldsymbol{\delta}}(\boldsymbol{F})$, i.e., the coefficient of the term $x^{d_0-|\boldsymbol{\delta}|}$, is called the $\boldsymbol{\delta}$-th principal subresultant coefficient and is denoted by $\overline{R_{\boldsymbol{\delta}}(\boldsymbol{F})}$.
\end{definition}

\begin{remark}\
\begin{itemize}
    \item When the meaning of $\boldsymbol{F}$ is clear from the context, $R_{\boldsymbol{\delta}}(\boldsymbol{F})$ and $\overline{R_{\boldsymbol{\delta}}(\boldsymbol{F})}$ can be abbreviated as $R_{\boldsymbol{\delta}}$ and $\overline{R_{\boldsymbol{\delta}}}$, respectively.
    
    \item If the length of $\boldsymbol{\delta}$ is $1$, that is, $\boldsymbol{\delta}=(\delta_1)$, we also write $R_{\boldsymbol{\delta}}$ as $R_{\delta_1}$ for simplicity.
\end{itemize}
\end{remark}

\smallskip
In \cite{2021_Hong_Yang}, Hong and Yang provided an equivalent expression in the roots of $F_0$ for $R_{\boldsymbol{\delta}}(\boldsymbol{F})$. To present the formula, we introduce the following notations.

\begin{notation}\label{notation:root}\
\begin{itemize}
\item $\boldsymbol{\alpha}=(\alpha_1,\ldots,\alpha_{d_0})$ where $\alpha_{1}, \ldots, \alpha_{d_{0}}$ are the complex roots of $F_{0}$;

\item $\boldsymbol{X}_{\varepsilon}=(x^{\varepsilon},\ldots,x^0)^T$;

\item $\boldsymbol{X}_{\varepsilon}(\alpha_i)=(\alpha_i^{\varepsilon},\ldots,\alpha_i^0)^T$;

\item $\boldsymbol{X}_{\varepsilon}(\boldsymbol{\alpha})=\begin{bmatrix}
\boldsymbol{X}_{\varepsilon}(\alpha_1)&\cdots&\boldsymbol{X}_{\varepsilon}(\alpha_{d_0})\end{bmatrix}$;
\item $ V(\boldsymbol{\alpha}) = |\boldsymbol{X}_{d_0-1}(\boldsymbol{\alpha})|$.
\end{itemize}
\end{notation}

\begin{theorem}\label{thm:Subrespoly}
Given $P$ and $\boldsymbol{\delta}\in\mathcal{P}(d_0,t)$ as in Notation \ref{notation}, we have $$ R_{\boldsymbol{\delta}}(\boldsymbol{F}) =(-1)^{\varepsilon}\cdot a^{\delta_{0}}_{0d_{0}}\cdot
 \begin{vmatrix}\boldsymbol{M}_1\\\vdots\\\boldsymbol{M}_t\\\boldsymbol{X}_{\varepsilon}(\boldsymbol{\alpha})&\boldsymbol{X}_{\varepsilon}\end{vmatrix}\Big/ V(\boldsymbol{\alpha}) $$
 where
\begin{itemize}
\item $\delta_0$ is as in \eqref{eqs:delta0}, $\varepsilon = d_{0} - |\boldsymbol{\delta}| $,
\item $\boldsymbol{M}_i=\begin{bmatrix}
(x^{\delta_{i} - 1}F_{i})(\alpha_{1}) & \cdots & (x^{\delta_{i} - 1}F_{i})(\alpha_{d_{0}}) & \\
 \vdots & & \vdots & \\
 (x^{0}F_{i})(\alpha_{1}) & \cdots & (x^{0}F_{i})(\alpha_{d_{0}})
 \end{bmatrix}$, and
\item $\boldsymbol{X}_{\varepsilon}$, $\boldsymbol{X}_{\varepsilon}(\boldsymbol{\alpha})$ and $V(\boldsymbol{\alpha})$ are as in Notation \ref{notation:root}.
\end{itemize}
\end{theorem}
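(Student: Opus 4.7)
My plan is to adapt the classical Sylvester--Vandermonde identity to the multi-polynomial subresultant setting.  The central observation, which drives everything, is that each row of the block $\boldsymbol{B}_0$ in $\boldsymbol{M}_{\boldsymbol{\delta}}(\boldsymbol{F})$ lists the coefficients of some shift $x^k F_0$, so when multiplied on the right against the column $\boldsymbol{X}_{q-1}(\alpha_j)=(\alpha_j^{q-1},\dots,1)^T$ (with $q=\delta_0+d_0$ and $\alpha_j$ a root of $F_0$) it produces the scalar $\alpha_j^k F_0(\alpha_j)=0$.  By contrast, a row of $\boldsymbol{B}_i$ (for $i\ge 1$) so multiplied yields $\alpha_j^{k}F_i(\alpha_j)$, which is exactly an entry of the block $\boldsymbol{M}_i$ appearing in the target formula.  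Hence the right formula should fall out by computing $\det\bigl(\boldsymbol{M}_{\boldsymbol{\delta}}(\boldsymbol{F})\cdot\boldsymbol{U}\bigr)$ in two different ways, where $\boldsymbol{U}$ is a $q\times q$ matrix whose first $d_0$ columns are the Vandermonde columns $\boldsymbol{X}_{q-1}(\alpha_j)$ and whose remaining $\delta_0$ columns are standard-basis vectors chosen to expose the factors $a_{0d_0}^{\delta_0}$ and $\boldsymbol{X}_\varepsilon$ cleanly.

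Concretely, I would proceed as follows.  First, recast $R_{\boldsymbol{\delta}}=\operatorname{dp}\boldsymbol{M}_{\boldsymbol{\delta}}(\boldsymbol{F})$ as a $(\delta_0+|\boldsymbol{\delta}|)\times(\delta_0+|\boldsymbol{\delta}|)$ determinant by using multilinearity of the determinant to collapse the last $\varepsilon+1$ columns of $\boldsymbol{M}_{\boldsymbol{\delta}}(\boldsymbol{F})$ into the single column $\sum_{i=0}^{\varepsilon}x^i\boldsymbol{M}_{q-i}$; this is the column that will eventually carry $\boldsymbol{X}_\varepsilon$.  Next, build $\boldsymbol{U}$ by appending to the $d_0$ Vandermonde columns a block of $\delta_0$ standard-basis vectors designed so that (a) $\det \boldsymbol{U}$ equals $V(\boldsymbol{\alpha})$ up to sign, by cofactor expansion along the extra columns, and (b) these extra columns, when acting on $\boldsymbol{B}_0$, produce a $\delta_0\times\delta_0$ upper-triangular block with the leading coefficient $a_{0d_0}$ on its diagonal (since those rows encode the shifts $x^{\delta_0-1}F_0,\dots,F_0$).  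Combining the two computations of $\det(\boldsymbol{M}_{\boldsymbol{\delta}}(\boldsymbol{F})\cdot\boldsymbol{U})$, and exploiting that the $\boldsymbol{B}_0$--by--Vandermonde block vanishes so that the product is block-triangular, will yield
\[
R_{\boldsymbol{\delta}}(\boldsymbol{F})\cdot V(\boldsymbol{\alpha}) \;=\; \pm\, a_{0d_0}^{\delta_0}\cdot \begin{vmatrix}\boldsymbol{M}_1\\\vdots\\\boldsymbol{M}_t\\\boldsymbol{X}_{\varepsilon}(\boldsymbol{\alpha})&\boldsymbol{X}_{\varepsilon}\end{vmatrix},
\]
which is the claim modulo the sign.

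The hardest part will be the sign bookkeeping needed to pin down the prefactor as $(-1)^{\varepsilon}$.  The sign accumulates from three independent sources: (i) the reversal of column order when we fold $\sum_{i=0}^{\varepsilon}x^i\boldsymbol{M}_{q-i}$ into the natural orientation $(x^\varepsilon,\dots,1)^T$ of $\boldsymbol{X}_\varepsilon$, (ii) the cofactor expansions along the unit columns of $\boldsymbol{U}$ that extract $V(\boldsymbol{\alpha})$ and along the last column of the target determinant carrying $\boldsymbol{X}_\varepsilon$, and (iii) any row or column transpositions needed to realign the blocks of the product with the blocks $\boldsymbol{M}_i$ as written.  As a sanity check I would specialize to $t=1$, $\boldsymbol{\delta}=(\delta_1)$ and recover the classical subresultant-in-roots formula (and, at $\varepsilon=0$, the identity $\operatorname{Res}(F_0,F_1)=a_{0d_0}^{d_1}\prod_j F_1(\alpha_j)$), which fixes the overall sign uniquely.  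A secondary point needing care is the degenerate regime $\max_{\delta_i\ne0}(d_i+\delta_i)<d_0$ where \eqref{eqs:delta0} assigns $\delta_0=1$ by convention; here one must verify that the construction of $\boldsymbol{U}$ and the block-triangular factorization still go through, but the argument is essentially unchanged.
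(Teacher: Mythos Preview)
The paper does not actually prove Theorem~\ref{thm:Subrespoly}; it is quoted from Hong--Yang~\cite{2021_Hong_Yang} as a preliminary fact, so there is no in-paper argument to compare against.  Your Sylvester--Vandermonde strategy is indeed the standard route to such ``subresultant in roots'' formulas and is almost certainly the approach taken in the cited source.

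That said, your write-up contains a dimension inconsistency you will have to fix before the argument goes through.  You propose to compute $\det\bigl(\boldsymbol{M}_{\boldsymbol{\delta}}(\boldsymbol{F})\cdot\boldsymbol{U}\bigr)$ in two ways, with $\boldsymbol{U}$ a $q\times q$ matrix ($q=\delta_0+d_0$); but $\boldsymbol{M}_{\boldsymbol{\delta}}(\boldsymbol{F})$ is only $p\times q$ with $p=\delta_0+|\boldsymbol{\delta}|=q-\varepsilon$, so the product is $p\times q$ and has no determinant, and the multiplicativity step $\det(\boldsymbol{M}\boldsymbol{U})=\det\boldsymbol{M}\cdot\det\boldsymbol{U}$ is unavailable.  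Your preliminary ``collapse to a $p\times p$ determinant'' does not help here either, since that square matrix then cannot be multiplied by the $q\times q$ matrix~$\boldsymbol{U}$.  The usual repair is to enlarge rather than collapse: adjoin to $\boldsymbol{M}_{\boldsymbol{\delta}}(\boldsymbol{F})$ the $\varepsilon+1$ coefficient rows of $x^{\varepsilon},\dots,x^{0}$ together with one extra ``evaluation at $x$'' column, producing a genuine $(q+1)\times(q+1)$ matrix whose determinant still equals $R_{\boldsymbol{\delta}}$ (up to a controllable sign), and then right-multiply by the $(q+1)\times(q+1)$ matrix whose first $d_0+1$ columns are $\boldsymbol{X}_{q}(\alpha_1),\dots,\boldsymbol{X}_{q}(\alpha_{d_0}),\boldsymbol{X}_{q}(x)$ and whose remaining $\delta_0$ columns are suitable unit vectors.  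With that adjustment your block-triangular analysis (the $\boldsymbol{B}_0$-by-Vandermonde block vanishes, the unit columns expose $a_{0d_0}^{\delta_0}$, the appended monomial rows produce $\boldsymbol{X}_\varepsilon(\boldsymbol{\alpha})$ and $\boldsymbol{X}_\varepsilon$) goes through exactly as you describe, and your plan for chasing the sign via the $t=1$ specialization is sensible.
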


\section{Main Results}\label{sec:main}

\begin{definition} [Conjugate]
 Let $\boldsymbol{\mu}= (\mu_{1}, \dots, \mu_{m}) \in \mathcal{M}(n)$. Then the conjugate of $\boldsymbol{\mu}$ is defined by $\boldsymbol{\bar{\mu}} = (\bar{\mu}_{1}, \dots, \bar{\mu}_{n})$ where
$$\bar{\mu}_{i} = \# \{j \in [1, \dots, m] : \mu_{j} \geq i \}.$$
\end{definition}

\begin{definition} [Incremental gcd]
 Given $\boldsymbol{F}$ as in Notation \ref{notation}, let
 \[G_i = \gcd(F_0,F_1,\ldots,F_i).\]
 Then we call $\boldsymbol{G} = (G_1,\ldots,G_n)$ the \emph{incremental gcd} of $\boldsymbol{F}$, written as $\operatorname{icgcd} \boldsymbol{F}$.
\end{definition}

\begin{theorem}[Main theorem]\label{thm:icgcd}
 Given $P\in\mathbb{R}[x]$ of degree $n$ with $\operatorname*{mult}(P) = \boldsymbol{\mu}$, assume $\boldsymbol{\bar{\mu}} = (\bar{\mu}_{1},\ldots, \bar{\mu}_{n})$. Let $\boldsymbol{F} = (P^{(0)},P^{(1)},\ldots,$
$P^{(n)})$ where $P^{(k)}$ is the $k$-th derivative of $P$ with respect to $x$ for $0 \le k \le n$. Then we have
 \[\operatorname*{icgcd} \boldsymbol{F} = (G_1, \ldots, G_n)\]
 where 
 \[G_i = R_{(\bar{\mu}_{1},\ldots, \bar{\mu}_{i})}\big(P^{(0)},P^{(1)},\ldots,
P^{(i)}\big)\]
\end{theorem}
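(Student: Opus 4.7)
The goal is to show, for each $i$, that $G_i := \gcd(P, P', \ldots, P^{(i)})$ equals $R_i := R_{(\bar\mu_1,\ldots,\bar\mu_i)}(P^{(0)}, \ldots, P^{(i)})$. My plan combines a degree count, a divisibility argument via Theorem \ref{thm:Subrespoly}, and a leading-coefficient check.

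First I would match degrees. Writing $P = a_n \prod_{j=1}^m(x-\beta_j)^{\mu_j}$ for the $m$ distinct complex roots $\beta_j$, an elementary computation yields that, up to a nonzero scalar, $G_i = \prod_{j:\,\mu_j>i}(x-\beta_j)^{\mu_j-i}$, so
\[
\deg G_i \;=\; \sum_j \max(0,\mu_j-i) \;=\; n-\sum_{k=1}^i\bar\mu_k \;=\; n-|\boldsymbol{\delta}|,
\]
where $\boldsymbol{\delta}=(\bar\mu_1,\ldots,\bar\mu_i)$. Since $R_i$ is the determinant polynomial of $\boldsymbol{M}_{\boldsymbol{\delta}}(\boldsymbol{F})$ with $\boldsymbol{F}=(P^{(0)},\ldots,P^{(i)})$, the definition of $\operatorname{dp}$ gives $\deg R_i \le n-|\boldsymbol{\delta}|$ immediately.

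Next I would show $G_i\mid R_i$ using Theorem \ref{thm:Subrespoly}. Because $P$ has repeated roots, both the numerator determinant and $V(\boldsymbol{\alpha})$ in that formula vanish identically, so I would perturb $P$ to a nearby polynomial $\tilde P$ with distinct roots $\tilde\alpha_1,\ldots,\tilde\alpha_n$ (clustering $\mu_j$ perturbed roots near each $\beta_j$), apply the root formula there, and then take the limit $\tilde P\to P$. Fix $\beta_j$ with $\mu_j>i$. The cluster of $\mu_j$ perturbed roots near $\beta_j$ contributes a vanishing factor of order $\binom{\mu_j}{2}$ to $V(\tilde{\boldsymbol{\alpha}})$ by the standard confluent Vandermonde expansion. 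In the numerator determinant, the same columns of $\boldsymbol{X}_\varepsilon(\tilde{\boldsymbol{\alpha}})$ contribute the same $\binom{\mu_j}{2}$ Vandermonde vanishing; however, each row of $\boldsymbol{M}_k$ with $k<\mu_j$ evaluates $x^\ell P^{(k)}$ at the clustering $\tilde\alpha$'s, and since $P^{(k)}(\beta_j)=0$ when $k<\mu_j$, these entries carry additional vanishing of order $\mu_j-k$ per clustering column. After performing the column operations underlying the confluent expansion (replacing clustered columns by successive finite-difference columns, which in the limit become Hermite-type derivative columns), the ratio $\det\boldsymbol{N}/V(\tilde{\boldsymbol{\alpha}})$ retains exactly a factor of $(x-\beta_j)^{\mu_j-i}$ in the limit. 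Doing this for every $\beta_j$ with $\mu_j>i$ shows $\prod_{j:\,\mu_j>i}(x-\beta_j)^{\mu_j-i}$, and hence $G_i$, divides $R_i$.

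With matching degrees and divisibility in hand, $R_i=\lambda_i G_i$ for a constant $\lambda_i$. To pin down $\lambda_i$, I would compute the coefficient of $x^{n-|\boldsymbol{\delta}|}$ in $R_i$, namely the principal subresultant coefficient, from an explicit minor of $\boldsymbol{M}_{\boldsymbol{\delta}}(\boldsymbol{F})$, and compare it with the leading coefficient of $\prod_{j:\,\mu_j>i}(x-\beta_j)^{\mu_j-i}$ (into which a suitable power of $a_n$ enters). This fixes the normalization implicit in the theorem's statement and yields the desired equality. The main obstacle is the second step: simultaneously tracking the vanishing orders of $\det\boldsymbol{N}$ and $V(\tilde{\boldsymbol{\alpha}})$ under the perturbation, and verifying that the surplus vanishing of the numerator beyond the $\binom{\mu_j}{2}$ Vandermonde cancellation is precisely $\mu_j-i$ in the variable $x$ at each $\beta_j$. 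This is where the specific choice $\boldsymbol{\delta}=(\bar\mu_1,\ldots,\bar\mu_i)$ is essential, as smaller or larger $\delta_k$ would shift the bookkeeping and produce the wrong vanishing order.
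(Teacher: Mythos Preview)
Your plan lands on the same track as the paper's proof, but the weight you assign to the three steps is off, and the missing idea in step~3 is exactly the one the paper supplies.

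First, step~2 is doing far more work than needed. Divisibility $G_i\mid R_i$ is elementary and does not require any perturbation or vanishing-order bookkeeping: expanding the determinant polynomial $\operatorname{dp}\boldsymbol{M}_{\boldsymbol{\delta}}$ along cofactors shows that $R_{\boldsymbol{\delta}}(P,P',\ldots,P^{(i)})$ is a polynomial combination $\sum_{u,v} c_{u,v}\,x^v P^{(u)}$, hence lies in the ideal $(P,P',\ldots,P^{(i)})$, which is principal with generator $G_i$. (The paper records this as Lemma~\ref{lem:R_lincomb}.) Your perturbation-and-cluster argument is aimed at this, but as written it conflates two different kinds of vanishing --- vanishing in the perturbation parameter versus producing a factor of $(x-\beta_j)$ --- and is not needed for divisibility alone.

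Second, once $G_i\mid R_i$ and $\deg R_i\le\deg G_i$, the entire content of the theorem is that the principal coefficient $\overline{R_{(\bar\mu_1,\ldots,\bar\mu_i)}}$ is nonzero. You correctly flag this as ``the main obstacle'' but only offer a plan. The paper's resolution is the following: after passing to the confluent form of the root formula (your Hermite-column limit, made precise via divided differences --- this is Lemma~\ref{lemma:Sdelta}), one performs a specific \emph{column reordering}. Instead of grouping the columns by root, group them by ``level'' $j=1,\ldots,i$: the level-$j$ group contains, for each root $r_\ell$ with $\mu_\ell\ge j$, the column corresponding to derivative order $\mu_\ell-j$. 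Because $(x^\xi P^{(u)})^{(k)}(r_\ell)=0$ whenever $u+k<\mu_\ell$, this reordering makes the matrix block lower triangular. The $j$-th diagonal block ($j\le i$) is a $\bar\mu_j\times\bar\mu_j$ Vandermonde in $r_1,\ldots,r_{\bar\mu_j}$ with columns scaled by the nonzero numbers $P^{(\mu_\ell)}(r_\ell)$; the final block is a confluent Vandermonde in $(r_1,\ldots,r_{\bar\mu_{i+1}},x)$ whose determinant contributes exactly $\prod_{\mu_\ell>i}(x-r_\ell)^{\mu_\ell-i}$ by Lemma~\ref{eqs:V}. This reordering-by-level is the key combinatorial trick that makes the choice $\boldsymbol{\delta}=(\bar\mu_1,\ldots,\bar\mu_i)$ work, and it is what your sketch is missing.
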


\begin{remark}
It is observed that when $\delta_i=0$, $F_i$ is not involved in $R_{\boldsymbol{\delta}}(\boldsymbol{F})$. Thus for simplicity, we make  the convention that $$R_{(\delta_{1},\ldots, \delta_{i})}(\boldsymbol{F}):=R_{(\delta_{1},\ldots, \delta_{i})}(F_0,\ldots,F_i)$$ 
In other words, we view $\delta_{i+1}=\dots=\delta_n=0$. Under this convention, $G_i$ in Theorem \ref{thm:icgcd} can be written as 
\[G_i=R_{(\bar{\mu}_{1},\ldots, \bar{\mu}_{i})}(\boldsymbol{F})\]
where $\boldsymbol{F}=\big(P^{(0)},P^{(1)},\ldots,$
$P^{(n)}\big)$. Moreover, from now on, we will assume $\boldsymbol{F}=\big(P^{(0)},P^{(1)},\ldots,$
$P^{(n)}\big)$ in the rest of the paper unless specified in the context. 
\end{remark}

\begin{example}\label{ex:icgcd}
Consider $P = (x - 1)^{3}(x +1)^{2} (x^{2}+1)= x^7 - x^6 - x^5 + x^4 - x^{3} + x^{2} + x - 1$. Let $\boldsymbol{F} = (P^{(0)}, P^{(1)}, P^{(2)}, P^{(3)}, P^{(4)}, P^{(5)}, P^{(6)}, P^{(7)})$. It is easy to see that 
\begin{equation}\label{eqs:ex_icgcd}
\operatorname{icgcd}\boldsymbol{F}=\left((x - 1)^2(x + 1), x-1, 1, 1, 1, 1, 1\right)
\end{equation}
In what follows, we calculate $\operatorname{icgcd}\boldsymbol{F}$ by Theorem \ref{thm:icgcd}.

Since $\boldsymbol{\mu}=(3,2,1,1)$, $\boldsymbol{\bar{\mu}} = (\bar{\mu}_{1}, \bar{\mu}_{2}, \bar{\mu}_{3}, \bar{\mu}_{4}, \bar{\mu}_{5}, \bar{\mu}_{6}, \bar{\mu}_{7}) = (4, 2, 1, 0, 0, 0, 0)$. Then
\[
\begin{array}{rl}
G_1 &= R_{(4)}(\boldsymbol{F}) \\[2pt]
&= \operatorname*{dp}(x^{2}P^{(0)}, x^{1}P^{(0)}, x^{0}P^{(0)}, x^{3}P^{(1)}, x^{2}P^{(1)}, x^{1}P^{(1)}, x^{0}P^{(1)}) \\[2pt] 
&= \operatorname*{dp}\begin{bmatrix}
1& -1 & -1 & 1 & -1 & 1 & 1 & -1 & & \\
 & 1& -1 & -1 & 1 & -1 & 1 & 1 & -1 &\\
 & & 1& -1 & -1 & 1 & -1 & 1 & 1 & -1 \\
7 & -6 & -5 & 4 & -3 & 2 & 1 & & & \\
 & 7 & -6 & -5 & 4 & -3 & 2 & 1 & & \\
 & & 7 & -6 & -5 & 4 & -3 & 2 & 1 & \\
 & & & 7 & -6 & -5 & 4 & -3 & 2 & 1
\end{bmatrix}\\
\\[-6pt]
&= -1536(x - 1)^2(x + 1) \\[2pt]
 G_2 &= R_{(4,2)}(\boldsymbol{F}) \\[2pt]
 &= \operatorname*{dp}(x^{2}P^{(0)}, x^{1}P^{(0)}, x^{0}P^{(0)}, x^{3}P^{(1)}, x^{2}P^{(1)}, x^{1}P^{(1)}, x^{0}P^{(1)}, x^{1}P^{(2)}, x^{0}P^{(2)})\\[2pt]
&= \operatorname*{dp}\begin{bmatrix}
1& -1 & -1 & 1 & -1 & 1 & 1 & -1 & & \\
& 1& -1 & -1 & 1 & -1 & 1 & 1 & -1 &\\
& & 1& -1 & -1 & 1 & -1 & 1 & 1 & -1 \\
7 & -6 & -5 & 4 & -3 & 2 & 1 & & & \\
& 7 & -6 & -5 & 4 & -3 & 2 & 1 & & \\
& & 7 & -6 & -5 & 4 & -3 & 2 & 1 & \\
& & & 7 & -6 & -5 & 4 & -3 & 2 & 1 \\
& & & 42 & -30 & -20 & 12 & -6 & 2 & \\
& & & & 42 & -30 & -20 & 12 & -6 & 2 
\end{bmatrix} \\\\[-6pt]
&= -1179648(x - 1)\\[2pt]
 G_3 &= R_{(4,2,1)}(\boldsymbol{F}) \\[2pt]
 &= \operatorname*{dp}(x^{2}P^{(0)}, x^{1}P^{(0)}, x^{0}P^{(0)}, x^{3}P^{(1)}, x^{2}P^{(1)}, x^{1}P^{(1)}, x^{0}P^{(1)}, x^{1}P^{(2)}, x^{0}P^{(2)}, x^{0}P^{(3)})\\[2pt]
&= \operatorname*{dp}\begin{bmatrix}
1& -1 & -1 & 1 & -1 & 1 & 1 & -1 & & \\
& 1& -1 & -1 & 1 & -1 & 1 & 1 & -1 &\\
& & 1& -1 & -1 & 1 & -1 & 1 & 1 & -1 \\
7 & -6 & -5 & 4 & -3 & 2 & 1 & & & \\
& 7 & -6 & -5 & 4 & -3 & 2 & 1 & & \\
& & 7 & -6 & -5 & 4 & -3 & 2 & 1 & \\
& & & 7 & -6 & -5 & 4 & -3 & 2 & 1 \\
& & & 42 & -30 & -20 & 12 & -6 & 2 & \\
& & & & 42 & -30 & -20 & 12 & -6 & 2 \\
& & & & & 210 & -120 & -60 & 24 & -6
\end{bmatrix} \\\\[-6pt]
&= -56623104 \\[2pt]
G_4 &= R_{(4,2,1,0)}(\boldsymbol{F})\ \ \ \ \ \  = R_{(4,2,1)}(\boldsymbol{F}) \\[2pt]
G_5 &= R_{(4,2,1,0,0)}(\boldsymbol{F})\ \ \ \ \, = R_{(4,2,1)}(\boldsymbol{F})\\[2pt]
G_6 &= R_{(4,2,1,0,0,0)}(\boldsymbol{F})\ \ \; = R_{(4,2,1)}(\boldsymbol{F})\\[2pt]
G_7 &= R_{(4,2,1,0,0,0,0)}(\boldsymbol{F})\  = R_{(4,2,1)}(\boldsymbol{F}) 
\end{array}
\]
Therefore,
\[\operatorname*{icgcd}\boldsymbol{F}=(G_1,G_2,G_3,G_4,G_5,G_6,G_7)\]
which only differs from \eqref{eqs:ex_icgcd} by constant factors.
\end{example}

\begin{remark}\ 
\begin{itemize}
\item It is noted that when $\bar{\mu}_i\ne0$ and $\bar{\mu}_{i+1}=\cdots=\bar{\mu}_{n}=0$, $$G_i=R_{(\bar{\mu}_{1},\ldots, \bar{\mu}_{i})}(\boldsymbol{F})$$
is a nonzero constant. Obviously, in this case,$$G_{i+1}=\cdots=G_{n}=G_i$$
\item Since $k!$ is a common factor of the coefficients for $P^{(k)}$, we can use $P^{(k)}/k!$ instead of $P^{(k)}$ to simplify the computation in practice. So the gcds in Example \ref{ex:icgcd} can be simplified into the followings:
\begin{align*}
G_1 &= -1536(x - 1)^2(x + 1),\quad G_2 = -589824(x - 1),\quad \\
G_3 &= G_4 = G_5 = G_6 = G_7 = -9437184 
\end{align*}
\end{itemize}
\end{remark}

One may wonder whether there are inherent relationships among $R_{\boldsymbol{\delta}}(\boldsymbol{F})$'s. Indeed, we succeed to find such a relationship for polynomials with formal coefficients, which reveals the hidden structures in the potential  $G_i$'s and converts the computation of subresultants into that of pseudo-remainders. We hope this relationship could be useful for exploring more hidden structures among the potential $G_i$'s and 
 can be applied to enhance the efficiency of computing  $R_{\boldsymbol{\delta}}(\boldsymbol{F})$'s.
The proof of the proposition can be found in Appendix A.

\begin{proposition}\label{prop:sres_prem}
Let $P=\sum_{i=0}^na_ix^i\in\mathbb{Z}[a_0,\ldots,a_n][x]$ and $\boldsymbol{F}=(P^{(0)},P^{(1)},\ldots,$
$P^{(n)})$. Let $\boldsymbol{\delta}=(\delta_1,\ldots,\delta_t)\in\mathcal{P}(n,t) $ be such that \begin{itemize}
\item $\delta_1\ge\cdots\ge\delta_t\ge1$, and
\item $\delta_j>\delta_{j+1}$ holds for some $1\le j<t$. 
\end{itemize}
Then 
\[\overline{R_{\boldsymbol{\delta}-\boldsymbol{{e}}_j-\boldsymbol{{e}}_t}(\boldsymbol{F})}\cdot R_{\boldsymbol{\delta}}(\boldsymbol{F})=\operatorname*{prem}(R_{\boldsymbol{\delta}-\boldsymbol{{e}}_j}(\boldsymbol{F}),R_{\boldsymbol{\delta}-\boldsymbol{{e}}_t}(\boldsymbol{F}))\]
where $\boldsymbol{{e}}_k$ is the $k$-th unit vector of length $t$.
\end{proposition}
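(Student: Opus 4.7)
The plan is to recast the statement as a Desnanot--Jacobi type identity among minors of a single Sylvester-type master matrix. First, I would simplify the pseudo-remainder: since $|\boldsymbol{\delta}-\boldsymbol{e}_j| = |\boldsymbol{\delta}-\boldsymbol{e}_t| = |\boldsymbol{\delta}|-1$, both $R_{\boldsymbol{\delta}-\boldsymbol{e}_j}(\boldsymbol{F})$ and $R_{\boldsymbol{\delta}-\boldsymbol{e}_t}(\boldsymbol{F})$ are polynomials in $x$ of the same degree $n - |\boldsymbol{\delta}| + 1$, and the pseudo-remainder of two equal-degree polynomials collapses to $\operatorname{prem}(A,B) = \overline{B}\,A - \overline{A}\,B$. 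Thus the proposition reduces to the polynomial identity
\[
\overline{R_{\boldsymbol{\delta}-\boldsymbol{e}_j-\boldsymbol{e}_t}}\cdot R_{\boldsymbol{\delta}} \;=\; \overline{R_{\boldsymbol{\delta}-\boldsymbol{e}_t}}\cdot R_{\boldsymbol{\delta}-\boldsymbol{e}_j} \;-\; \overline{R_{\boldsymbol{\delta}-\boldsymbol{e}_j}}\cdot R_{\boldsymbol{\delta}-\boldsymbol{e}_t},
\]
in which both sides are polynomials in $x$ of degree $n - |\boldsymbol{\delta}|$, so the claim is at least dimensionally consistent.

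Second, I would realize each subresultant as a genuine determinant rather than a determinant polynomial by using the standard trick: for a wide matrix $\boldsymbol{M}$ of shape $p \times q$ with $p \leq q$, $\operatorname{dp}\boldsymbol{M}$ equals the $p \times p$ determinant obtained by replacing the last $q-p+1$ columns of $\boldsymbol{M}$ with the single $x$-parameterized column $\sum_{j=p}^{q}x^{q-j}\boldsymbol{M}_j$. Applying this uniformly to $\boldsymbol{M}_{\boldsymbol{\delta}}$, $\boldsymbol{M}_{\boldsymbol{\delta}-\boldsymbol{e}_j}$, $\boldsymbol{M}_{\boldsymbol{\delta}-\boldsymbol{e}_t}$, and $\boldsymbol{M}_{\boldsymbol{\delta}-\boldsymbol{e}_j-\boldsymbol{e}_t}$ embeds all four objects, together with the scalar $\overline{R_{\boldsymbol{\delta}-\boldsymbol{e}_j-\boldsymbol{e}_t}}$, as minors of one square master matrix $\boldsymbol{N}$: passing from $\boldsymbol{\delta}$ to $\boldsymbol{\delta}-\boldsymbol{e}_j$ removes exactly the lowest-shift row of the $F_j$-block, and similarly for $\boldsymbol{e}_t$. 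The hypothesis $\delta_j > \delta_{j+1}$ is precisely what is needed so that the column parameter $\delta_0$ in the definition of the Sylvester matrix does not drift between the four cases, letting the four matrices live coherently inside the single $\boldsymbol{N}$.

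Third, I would invoke the Desnanot--Jacobi identity
\[
\det \boldsymbol{N}\cdot \det \boldsymbol{N}^{r_1,r_2}_{c_1,c_2} \;=\; \det \boldsymbol{N}^{r_1}_{c_1}\cdot \det \boldsymbol{N}^{r_2}_{c_2} \;-\; \det \boldsymbol{N}^{r_1}_{c_2}\cdot \det \boldsymbol{N}^{r_2}_{c_1},
\]
with $r_1, r_2$ the two rows that distinguish the four Sylvester matrices and $c_1, c_2$ two judiciously chosen columns of $\boldsymbol{N}$. The main obstacle will lie exactly in this choice: $R_{\boldsymbol{\delta}}$ is an entire polynomial in $x$ while $\overline{R_{\boldsymbol{\delta}-\boldsymbol{e}_j-\boldsymbol{e}_t}}$ is only its leading scalar, so the four factors on the two sides are not all of the same ``type.'' The delicate bookkeeping is to select $c_1, c_2$ so that, on the left-hand side of the Desnanot--Jacobi identity, one of the two deleted columns is the $x$-parameterized one (yielding the scalar $\overline{R_{\boldsymbol{\delta}-\boldsymbol{e}_j-\boldsymbol{e}_t}}$) while the other is a fixed coefficient column (leaving $R_{\boldsymbol{\delta}}$ with its $x$-column intact), and so that the two complementary pairings on the right-hand side each keep the $x$-column exactly once, producing the cross-product $\overline{R_{\boldsymbol{\delta}-\boldsymbol{e}_t}}R_{\boldsymbol{\delta}-\boldsymbol{e}_j} - \overline{R_{\boldsymbol{\delta}-\boldsymbol{e}_j}}R_{\boldsymbol{\delta}-\boldsymbol{e}_t}$. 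Tracking the signs from the associated row/column permutations, and verifying that they collapse to the clean identity above, is where the real work will be.
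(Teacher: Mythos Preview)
Your approach is viable and genuinely different from the paper's. Both routes begin with the same observation that $R_{\boldsymbol{\delta}-\boldsymbol{e}_j}$ and $R_{\boldsymbol{\delta}-\boldsymbol{e}_t}$ have equal degree $n-|\boldsymbol{\delta}|+1$, so the pseudo-remainder collapses to $\overline{B}A-\overline{A}B$. From there the paper proceeds via a cofactor-type lemma (Lemma~\ref{lem:R_lincomb}) expressing each $R_{\boldsymbol{\gamma}}$ as an explicit $\mathbb{Z}[a]$-combination of the rows $x^vP^{(u)}$ with leading coefficient $(-1)^{\sigma}a_n^{\tau}\overline{R_{\boldsymbol{\gamma}-\boldsymbol{e}_u}}$; it then substitutes the two distinguished rows of $\boldsymbol{M}_{\boldsymbol{\delta}}$ by $R_{\boldsymbol{\delta}-\boldsymbol{e}_j}$ and $R_{\boldsymbol{\delta}-\boldsymbol{e}_t}$ using multilinearity, moves them to the bottom, and reads off a block-triangular factorization $\overline{R_{\boldsymbol{\delta}-\boldsymbol{e}_j-\boldsymbol{e}_t}}\cdot\operatorname{dp}(R_{\boldsymbol{\delta}-\boldsymbol{e}_j},R_{\boldsymbol{\delta}-\boldsymbol{e}_t})$. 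Your Desnanot--Jacobi plan bypasses that auxiliary lemma entirely and reduces the identity, coefficient by coefficient in $x$, to the standard $2\times2$ Sylvester relation among minors of a single master matrix; this is more conceptual and makes the determinantal origin of the formula transparent, whereas the paper's route is more explicit about the constants involved.

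There is, however, a real inaccuracy in your setup. The hypothesis $\delta_j>\delta_{j+1}$ is \emph{not} what freezes $\delta_0$; its role is to keep $\boldsymbol{\delta}-\boldsymbol{e}_j$ and $\boldsymbol{\delta}-\boldsymbol{e}_j-\boldsymbol{e}_t$ weakly decreasing, which is what guarantees that $\overline{R_{\boldsymbol{\delta}-\boldsymbol{e}_j-\boldsymbol{e}_t}}$ is not identically zero as a polynomial in the $a_i$ (the paper uses this for a cancellation step). In fact $\delta_0$ \emph{does} drift when $j=1$: since here $\delta_0=\delta_1-1$, passing to $\boldsymbol{\delta}-\boldsymbol{e}_1$ drops $\delta_0$ by one, so $\boldsymbol{M}_{\boldsymbol{\delta}-\boldsymbol{e}_1}$ is not literally a row-submatrix of $\boldsymbol{M}_{\boldsymbol{\delta}}$. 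This is repairable rather than fatal: deleting the top row of the $P^{(1)}$-block from $\boldsymbol{M}_{\boldsymbol{\delta}}$ yields a matrix whose first column is supported only in $B_0$, so its determinant polynomial is $a_n\cdot R_{\boldsymbol{\delta}-\boldsymbol{e}_1}$; the spurious $a_n$ then appears on every term of your cross-product identity and cancels. (This is exactly the $a_n^{\tau}$ that surfaces in the paper's argument.) One minor correction: the row removed when decrementing $\delta_j$ is the \emph{highest}-shift row $x^{\delta_j-1}P^{(j)}$, not the lowest.
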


\begin{example}
Consider $P=a_3x^3+a_2x^2+a_1x+a_0$ and $\boldsymbol{\delta}= (\delta_{1}, \delta_{2}) = (2, 1)$.
Suppose we have
\[
 R_{(1, 1)} 
(\boldsymbol{F})= 6\,a_{3}( 3\,a_{3}x+a_{2}) ,\ \ 
 R_{(2, 0)} (\boldsymbol{F})= a_{3} \big( (6\,a_{{1}}a_{{3}}-2\,{a_{{2}}^{2}})x+(9\,a_{{0}}a_{{3}}-
a_{{1}}a_{{2}}) \big) 
,\ \ \overline{R_{(1,0)}(\boldsymbol{F})}= 3\,a_{3}.
\]
By Proposition \ref{prop:sres_prem}, we can calculate $R_{(2, 1)}(\boldsymbol{F})$ via
\begin{align*}
R_{(2, 1)}
(\boldsymbol{F})&= \operatorname{prem}(R_{(1,1)}(\boldsymbol{F}), R_{(2, 0)}(\boldsymbol{F}))\big/\overline{R_{(1,0)}(\boldsymbol{F})}\\
&= -6\,a_{{3}}^{2} \left( 27\,a_{{0}}a_{{3}}^{2}-9\,a_{{1}}a_{{2}}a_{
{3}}+2\,a_{{2}}^{3} \right) 
/(3\,a_{3})\\
&=-2\,a_{{3}} \left( 27\,a_{{0}}a_{{3}}^{2}-9\,a_{{1}}a_{{2}}a_{{3}}+2
\,{a_{{2}}^{3}} \right) 
\end{align*}
\end{example}

\section{Proof of Theorem \ref{thm:icgcd}}
\label{sec:proof}

This section is devoted to proving the paper's main result (Theorem \ref{thm:icgcd}). To achieve this goal, we first rewrite $R_{\boldsymbol{\delta}}(\boldsymbol{F})$ 
from an expression in coefficients to that in multiple roots through divided difference. Then we simplify the resulting expression in roots with the multiplicity information (i.e., the evaluation of the $k$-th derivative of $P$ at its multiple root $r_i$ is zero for $0\le k\le \mu_i-1$ and is nonzero when $k=\mu_i$ where $\mu_i$ is the multiplicity of $r_i$). It turns out that the simplified expression corresponds to a certain gcd in $\operatorname{icgcd}\boldsymbol{F}$.

\subsection{Converting \texorpdfstring{$R_{\boldsymbol{\delta}}(\boldsymbol{F})$}{} into an expression in multiple roots}
\begin{lemma}[Subresultant in multiple roots] \label{lemma:Sdelta}
 Let $P$ be of degree $n$ with $m$ distinct roots $r_{1}, \dots, r_{m}$ whose multiplicities are $\mu_{1},\ldots,\mu_{m}$, respectively.
 Let $\boldsymbol{F} = (P^{(0)},P^{(1)},\ldots,P^{(n)})$ and $\boldsymbol{\delta} = (\delta_{1}, \ldots, \delta_{n})\in\mathcal{P}(n,n)$. Then we have
 $$ R_{\boldsymbol{\delta}}(\boldsymbol{F}) = c_{\boldsymbol{\delta},\boldsymbol{\mu}} \cdot \frac{\begin{vmatrix}
 \boldsymbol{H}^{(0)}(r_1)&\cdots&\boldsymbol{H}^{(\mu_1-1)}(r_1)&\cdots&\cdots&\boldsymbol{H}^{(0)}(r_m)&\cdots&\boldsymbol{H}^{(\mu_m-1)}(r_m)&\\
 \boldsymbol{X}_{\varepsilon}^{(0)}(r_1)&\cdots&\boldsymbol{X}_{\varepsilon}^{(\mu_1-1)}(r_1)&\cdots&\cdots&\boldsymbol{X}_{\varepsilon}^{(0)}(r_m)&\cdots&\boldsymbol{X}_{\varepsilon}^{(\mu_m-1)}(r_m)&\boldsymbol{X}_{\varepsilon}
 \end{vmatrix}}{\prod_{1 \leq u < v \leq m}^{} (r_{v} - r_{u})^{\mu_{u}\mu_{v}}}$$
 where
\begin{itemize}
 \item $ c_{\boldsymbol{\delta},\boldsymbol{\mu}} = (- 1)^{\binom{n}{2} + \varepsilon}\dfrac{a^{\delta_{0}}_{n}}{\prod_{u = 1}^{m} \prod_{v = 0}^{\mu_{u} - 1} v!}$,
 
\item $\varepsilon=n-|\boldsymbol{\delta}|$,

 \item $\delta_0$ is as in \eqref{eqs:delta0},
 \item $\boldsymbol{X}_{\varepsilon}=(x^{\varepsilon},\ldots,x^0)^T$,  \item $\boldsymbol{X}_{\varepsilon}^{(k)}=\left((x^{\varepsilon})^{(k)},\ldots,(x^0)^{(k)}\right)^T$, and
 \item $\boldsymbol{H}^{(k)}=\left(\left(x^{\delta_{1} - 1}P^{(1)}\right)^{(k)},\ldots,\left(x^{0}P^{(1)}\right)^{(k)},\ldots\ldots,\left(x^{\delta_{n} - 1}P^{(n)}\right)^{(k)},\ldots,\left(x^{0}P^{(n)}\right)^{(k)}\right)^T$.
\end{itemize}
\end{lemma}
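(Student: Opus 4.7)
The plan is to derive the multiple-root formula by specializing the simple-root formula of Theorem \ref{thm:Subrespoly} via a divided-difference (equivalently, confluent Vandermonde) argument. I would regard $P=a_n\prod_{i=1}^{n}(x-\alpha_i)$ as a generic polynomial whose $n$ roots $\alpha_1,\ldots,\alpha_n$ are distinct formal variables, so that Theorem \ref{thm:Subrespoly} expresses $R_{\boldsymbol{\delta}}(\boldsymbol{F})$ as a rational function in the $\alpha_i$. The goal is to rewrite the right-hand side so that the resulting identity survives when the roots are organized into $m$ clusters of sizes $\mu_1,\ldots,\mu_m$ collapsing to $r_1,\ldots,r_m$, yielding the lemma's formula.

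The key steps, in order, are as follows. First, relabel the formal roots as $\alpha_{i,k}$ with $1\le i\le m$ and $1\le k\le\mu_i$ according to their cluster. Second, in the numerator determinant of Theorem \ref{thm:Subrespoly}, apply for each cluster $i$ the standard upper-triangular column transformation that replaces the $\mu_i$ columns corresponding to $\alpha_{i,1},\ldots,\alpha_{i,\mu_i}$ by the divided differences $f[\alpha_{i,1},\ldots,\alpha_{i,k}]$ ($k=1,\ldots,\mu_i$) applied to each row function $f$; the extra last column carrying $\boldsymbol{X}_{\varepsilon}$ is fixed by this transformation. Because the transformation has determinant $1/V(\alpha_{i,1},\ldots,\alpha_{i,\mu_i})$, the original numerator equals $\bigl(\prod_{i=1}^{m}V_i\bigr)$ times the new determinant, where $V_i:=V(\alpha_{i,1},\ldots,\alpha_{i,\mu_i})$. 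Third, factor the denominator Vandermonde as $V(\boldsymbol{\alpha})=(-1)^{\binom{n}{2}}\prod_{i=1}^{m}V_i\cdot\prod_{u<v}\prod_{k=1}^{\mu_u}\prod_{l=1}^{\mu_v}(\alpha_{v,l}-\alpha_{u,k})$, where the sign comes from the decreasing-power convention in $V(\boldsymbol{\alpha})=|\boldsymbol{X}_{n-1}(\boldsymbol{\alpha})|$; the intra-cluster factors $\prod_iV_i$ now cancel against those produced in Step 2. Fourth, specialize $\alpha_{i,k}\to r_i$: each surviving divided difference tends to $f^{(k-1)}(r_i)/(k-1)!$, and the inter-cluster product tends to $\prod_{u<v}(r_v-r_u)^{\mu_u\mu_v}$. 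Fifth, pull the factorials $(k-1)!$ out of each column of the resulting determinant and collect them, together with the sign from Step 3 and the prefactor $(-1)^{\varepsilon}a_n^{\delta_0}$ carried over from Theorem \ref{thm:Subrespoly}, into the constant $c_{\boldsymbol{\delta},\boldsymbol{\mu}}$.

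The conceptual work is minor; the main obstacle is bookkeeping. One must verify that the column operations commute cleanly with the block structure $\boldsymbol{M}_1;\ldots;\boldsymbol{M}_t;[\boldsymbol{X}_{\varepsilon}(\boldsymbol{\alpha}),\boldsymbol{X}_{\varepsilon}]$ (they do, since they act uniformly on the first $n$ columns and leave the trailing $\boldsymbol{X}_{\varepsilon}$ column alone) and reconcile all signs and factorials exactly with $c_{\boldsymbol{\delta},\boldsymbol{\mu}}$. The specialization $\alpha_{i,k}\to r_i$ itself needs brief justification: after the cancellations in Step 3 both sides are polynomials in $\alpha_1,\ldots,\alpha_n$, and they agree on the Zariski-dense set of distinct tuples, so they coincide identically and in particular under any specialization. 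A small auxiliary lemma, namely the confluent limit $f[\alpha_1,\ldots,\alpha_k]\to f^{(k-1)}(r)/(k-1)!$ as all $\alpha_j\to r$, is used repeatedly in Steps 2 and 4 and should be invoked up front.
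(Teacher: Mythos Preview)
Your proposal is correct and follows essentially the same route as the paper's own proof: start from Theorem~\ref{thm:Subrespoly}, replace the columns within each cluster by successive divided differences so that the intra-cluster Vandermonde factors cancel against those in $V(\boldsymbol{\alpha})$, then specialize $\alpha_{i,k}\to r_i$ using $Q[\underbrace{r,\ldots,r}_{k}]=Q^{(k-1)}(r)/(k-1)!$ and collect the resulting factorials and signs into $c_{\boldsymbol{\delta},\boldsymbol{\mu}}$. The paper carries out the divided-difference step as a sequence of explicit column subtractions rather than packaging it as a single upper-triangular transformation, and it does not spell out the Zariski-density justification you mention, but the mathematical content is the same.
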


\begin{proof}
 Let $P = a_{n}(x - \alpha_{1}) \cdots (x - \alpha_{n})$.   When $\alpha_{1}, \dots, \alpha_{n}$ are treated as numbers, without loss of generality, we assume that $\alpha_{1}, \dots, \alpha_{n}$ are grouped into $m$ sets as follows:
 \begin{align*}
 I_{1} :=&\ \{\alpha_{1} \ \dots \ \dots \ \dots \ \dots \ \alpha_{\mu_{1}}\},\\
 I_{2} :=&\ \{\alpha_{\mu_{1} + 1} \ \dots \ \dots \ \dots \ \alpha_{\mu_{1} + \mu_{2}}\},\\
 \vdots& \\
 I_{m} :=&\ \{\alpha_{\mu_{1} + \dots + \mu_{m - 1} + 1} \ \dots \ \alpha_{\mu_{1} + \dots + \mu_{m - 1} + \mu_{m}}\},
 \end{align*}
 where elements in $I_{i}$ are all equal to $r_{i}$.
 
Now we treat $\alpha_{1}, \dots, \alpha_{n}$ as indeterminates.  Let
$$\boldsymbol{H}=\left(x^{\delta_{1} - 1}P^{(1)},\ldots,x^{0}P^{(1)},\ldots,x^{\delta_{n} - 1}P^{(n)},\ldots,x^{0}P^{(n)}\right)^T$$ and $\boldsymbol{X}_{\varepsilon}=(x^{\varepsilon},\ldots,x^0)^T$. By Theorem \ref{thm:Subrespoly}, $$ R_{\boldsymbol{\delta}}(\boldsymbol{F})=(-1)^{\varepsilon}\cdot a^{\delta_{0}}_{n} \cdot
 \begin{vmatrix}
 \boldsymbol{H}(\alpha_1)&\cdots&\boldsymbol{H}(\alpha_{n})&\\
 \boldsymbol{X}_{\varepsilon}(\alpha_1)&\cdots&\boldsymbol{X}_{\varepsilon}(\alpha_n)& \boldsymbol{X}_{\varepsilon}
 \end{vmatrix}\Big/ V(\boldsymbol{\alpha}) $$
 where $\delta_0$ is as in \eqref{eqs:delta0}, $\varepsilon=n - | \boldsymbol{\delta}|$, and
 $ \boldsymbol{X}_{\varepsilon}$ and $V(\boldsymbol{\alpha})$ are as in Notation \ref{notation:root}.
 It follows that
\[
R_{\boldsymbol{\delta}}\ =\ \frac{(-1)^{\varepsilon}\cdot a^{\delta_{0}}_{n} \cdot
\begin{vmatrix}
 \boldsymbol{H}(\alpha_1)&\cdots&\boldsymbol{H}(\alpha_{n})&\\
 \boldsymbol{X}_{\varepsilon}(\alpha_1)&\cdots&\boldsymbol{X}_{\varepsilon}(\alpha_n)&\boldsymbol{X}_{\varepsilon}
 \end{vmatrix}}{V(\boldsymbol{\alpha})}
\ =\ \frac{(-1)^{\varepsilon}\cdot a^{\delta_{0}}_{n} \cdot
\begin{vmatrix}
 \boldsymbol{H}(\alpha_1)&\cdots&\boldsymbol{H}(\alpha_{n})&\\
 \boldsymbol{X}_{\varepsilon}(\alpha_1)&\cdots&\boldsymbol{X}_{\varepsilon}(\alpha_n)&\boldsymbol{X}_{\varepsilon}
 \end{vmatrix}}{(- 1)^{\binom{n}{2}}\prod_{1 \leq u < v \leq n}(\alpha_v-\alpha_u)}
\]

  Next, we carry out the exact division so that the differences between the collapsed $\alpha_i$'s do not appear in the denominator.
  For this purpose, we let $Q[x_{1}, \dots, x_{u}]$ denote the ($u - 1$)-th divided difference of $Q \in \mathbb{R}[x]$ at $x_{1}, x_{2}, \dots, x_{u}$ which is defined recursively as follows:
 $$ Q[x_{1}, \dots, x_{u}] = \left\{\begin{array}{ll}
 Q(x_{1}), & \text{if} \ \ u = 1; \\
 \dfrac{Q[x_{1}, \dots, x_{u - 2}, x_{u}] - Q[x_{1}, \dots, x_{u - 2}, x_{u - 1}]}{x_{u} - x_{u - 1}}, & \text{if} \ \ u > 1.
 \end{array}\right.$$
Now we eliminate the factors of the form $\alpha_v-\alpha_u$ where $\alpha_u,\alpha_v\in I_1$ by using successive divided differences. For the sake of simplicity, let $c'=(-1)^{\varepsilon+{\binom{n}{2}}}\cdot a^{\delta_{0}}_{n} $. Then we have
\begin{align*}
R_{\boldsymbol{\delta}}\ =\ &\frac{c'
\begin{vmatrix}
 \boldsymbol{H}(\alpha_1)&\cdots&\boldsymbol{H}(\alpha_{n})&\\
 \boldsymbol{X}_{\varepsilon}(\alpha_1)&\cdots&\boldsymbol{X}_{\varepsilon}(\alpha_n)&\boldsymbol{X}_{\varepsilon}
 \end{vmatrix}}{\prod_{1 \leq u < v \leq n}(\alpha_v-\alpha_u)}\\
 =\ &\frac{c'
\left|\begin{array}{llll|lll|l}
 \boldsymbol{H}[\alpha_{1}]&\boldsymbol{H}[\alpha_{1}, \alpha_{2}]&\cdots&\boldsymbol{H}[\alpha_{\mu_{1} - 1}, \alpha_{\mu_{1}}]&\boldsymbol{H}[\alpha_{\mu_{1} + 1}]&\cdots&\boldsymbol{H}[\alpha_{n}]&\\
\boldsymbol{X}_{\varepsilon}[\alpha_{1}]&\boldsymbol{X}_{\varepsilon}[\alpha_{1}, \alpha_{2}]&\cdots&\boldsymbol{X}_{\varepsilon}[\alpha_{\mu_{1} - 1}, \alpha_{\mu_{1}}]&\boldsymbol{X}_{\varepsilon}[\alpha_{\mu_{1} + 1}]&\cdots&\boldsymbol{X}_{\varepsilon}[\alpha_{n}]&\boldsymbol{X}_{\varepsilon}
 \end{array}\right|}
 {\prod_{\underset{v - u > 1}{\alpha_{u}, \alpha_{v} \in I_{1}}}^{}(\alpha_{v} - \alpha_{u})\prod_{\underset{v - u > 0}{\alpha_{u}, \alpha_{v} \notin I_{1}}}^{}(\alpha_{v} - \alpha_{u})\prod_{\underset{\alpha_{v} \notin I_{1}}{\alpha_{u} \in I_{1}}}^{}(\alpha_{v} - \alpha_{u})}\\
\ =\ &\frac{c'
\left|
\setlength{\arraycolsep}{2pt}
\begin{array}{lllll|lll|l}
 \boldsymbol{H}[\alpha_{1}]&\boldsymbol{H}[\alpha_{1}, \alpha_{2}]&\boldsymbol{H}[\alpha_{1}, \alpha_{2},\alpha_{3}]&\cdots&\boldsymbol{H}[\alpha_{\mu_{1} - 2},\alpha_{\mu_{1} - 1}, \alpha_{\mu_{1}}]&\boldsymbol{H}[\alpha_{\mu_{1} + 1}]&\cdots&\boldsymbol{H}[\alpha_{n}]&\\
\boldsymbol{X}_{\varepsilon}[\alpha_{1}]&\boldsymbol{X}_{\varepsilon}[\alpha_{1}, \alpha_{2}]&\boldsymbol{X}_{\varepsilon}[\alpha_{1}, \alpha_{2},\alpha_{3}]&\cdots&\boldsymbol{X}_{\varepsilon}[\alpha_{\mu_{1} - 2},\alpha_{\mu_{1} - 1}, \alpha_{\mu_{1}}]&\boldsymbol{X}_{\varepsilon}[\alpha_{\mu_{1} + 1}]&\cdots&\boldsymbol{X}_{\varepsilon}[\alpha_{n}]&\boldsymbol{X}_{\varepsilon}
 \end{array}\right|}
 {\prod_{\underset{v - u > 2}{\alpha_{u}, \alpha_{v} \in I_{1}}}^{}(\alpha_{v} - \alpha_{u})\prod_{\underset{v - u > 0}{\alpha_{u}, \alpha_{v} \notin I_{1}}}^{}(\alpha_{v} - \alpha_{u})\prod_{\underset{\alpha_{v} \notin I_{1}}{\alpha_{u} \in I_{1}}}^{}(\alpha_{v} - \alpha_{u})}\\
\ \vdots\ &\\
=\ &\frac{c'
\left|\begin{array}{llll|lll|l}
 \boldsymbol{H}[\alpha_{1}]&\boldsymbol{H}[\alpha_{1}, \alpha_{2}]&\cdots&\boldsymbol{H}[\alpha_{1},\ldots, \alpha_{\mu_{1}}]&\boldsymbol{H}[\alpha_{\mu_{1} + 1}]&\cdots&\boldsymbol{H}[\alpha_{n}]&\\
\boldsymbol{X}_{\varepsilon}[\alpha_{1}]&\boldsymbol{X}_{\varepsilon}[\alpha_{1}, \alpha_{2}]&\cdots&\boldsymbol{X}_{\varepsilon}[\alpha_1,\ldots, \alpha_{\mu_{1}}]&\boldsymbol{X}_{\varepsilon}[\alpha_{\mu_{1} + 1}]&\cdots&\boldsymbol{X}_{\varepsilon}[\alpha_{n}]&\boldsymbol{X}_{\varepsilon}
 \end{array}\right|}
 {\prod_{\underset{v - u > 0}{\alpha_{u}, \alpha_{v} \notin I_{1}}}^{}(\alpha_{v} - \alpha_{u})\prod_{\underset{\alpha_{v} \notin I_{1}}{\alpha_{u} \in I_{1}}}^{}(\alpha_{v} - \alpha_{u})}
\end{align*}
 Repeating the same procedure for $\alpha_{v}'s$ in each $I_{u}$, for $u = 2, \dots, m$, successively, we have
$$R_{\boldsymbol{\delta}} =\dfrac{c'\left|
\setlength{\arraycolsep}{2pt}
\begin{array}{lll|l|lll|l}
\boldsymbol{H}[\alpha_{1}]&\cdots&\boldsymbol{H}[\alpha_{1}, \dots, \alpha_{\mu_{1}}]&\cdots&\boldsymbol{H}[\alpha_{\mu_{1} + \dots + \mu_{m - 1} + 1}]&\cdots &\boldsymbol{H}[\alpha_{\mu_{1} + \cdots + \mu_{m - 1} + 1}, \dots, \alpha_{n}]&\\
 \boldsymbol{X}_{\varepsilon}[\alpha_{1}]&\cdots&\boldsymbol{X}_{\varepsilon}[\alpha_{1}, \dots, \alpha_{\mu_{1}}]&\cdots&\boldsymbol{X}_{\varepsilon}[\alpha_{\mu_{1} + \dots + \mu_{m - 1} + 1}]&\cdots &\boldsymbol{X}_{\varepsilon}[\alpha_{\mu_{1} + \cdots + \mu_{m - 1} + 1}, \ldots, \alpha_{n}]&
\boldsymbol{X}_{\varepsilon}
\end{array}\right|}
 {\prod_{1 \leq u < v \leq m}^{}\prod_{\underset{\alpha_{q} \in I_{v}}{\alpha_{p} \in I_{u}}}^{}(\alpha_{q} - \alpha_{p})}
$$

Now we substitute
$$
\begin{array}{l}
r_1=\alpha_{1} = \dots = \alpha_{\mu_{1}}, \\
\ \ \ \ \vdots\\
r_m=\alpha_{\mu_{1} + \dots + \mu_{m - 1} + 1} = \dots = \alpha_{n}
\end{array}$$
 into $R_{\boldsymbol{\delta}}$ and obtain
 \begin{align}
 R_{\boldsymbol{\delta}} =\dfrac{c'
\left|\begin{array}{lll|l|lll|l}
 \boldsymbol{H}[r_{1}]&\cdots&\boldsymbol{H}[r_{1}, \dots, r_{1}]&\cdots&\boldsymbol{H}[r_{m}]&\cdots &\boldsymbol{H}[r_{m}, \dots, r_{m}]&\\
 \boldsymbol{X}_{\varepsilon}[r_{1}]&\cdots&\boldsymbol{X}_{\varepsilon}[r_{1}, \ldots, r_1]&\cdots&\boldsymbol{X}_{\varepsilon}[r_{m}]&\cdots &\boldsymbol{X}_{\varepsilon}[r_{m}, \dots, r_{m}]&
\boldsymbol{X}_{\varepsilon}
\end{array}\right|}
 {\prod_{1 \leq u < v \leq m}^{}(r_{v} - r_{u})^{\mu_{u}\mu_{v}}} \label{eqs: S1}
 \end{align}

 For any given polynomial $Q \in \mathbb{R}[x]$, we have
 $$ Q[\underset{k \ r_{u}'s}{\underbrace{r_{u}, \dots, r_{u}}}] = \frac{Q^{(k - 1)}(r_{u})}{(k - 1)!} $$

 Hence
\begin{align}
\boldsymbol{H}[\underset{k \ r_{u}\text{'s}}{\underbrace{r_{u}, \dots, r_{u}}}] \ =\
\dfrac{1}{(k - 1)!}&\left((x^{\delta_{1} - 1}P^{(1)})^{(k - 1)}(r_{u}), \dots, (x^{0}P^{(1)})^{(k - 1)}(r_{u}),\right. \notag\\[-10pt]
&\ \ \ \ \ \ \ \ \ \ \ \cdots \cdots, \notag\\
&\left.\ \ \ (x^{\delta_{n} - 1}P^{(n)})^{(k - 1)}(r_{u}),\dots, (x^{0}P^{(n)})^{(k - 1)}(r_{u})\right)^{T}\notag\\
=\ \dfrac{1}{(k - 1)!}&\boldsymbol{H}^{(k-1)}(r_u)\label{eqs: H}\\
\boldsymbol{X}_{\varepsilon}[\underset{k \ r_{u}\text{'s}}{\underbrace{r_{u}, \dots, r_{u}}}] \ =\ \frac{1}{(k - 1)!}&\left((x^{\varepsilon})^{(k - 1)}(r_{u}), \dots, (x^{0})^{(k - 1)}(r_{u})\right)^T\notag\\
=\ \dfrac{1}{(k - 1)!}&\boldsymbol{X}_{\varepsilon}^{(k-1)}(r_u)\label{eqs: X3}
\end{align}

The substitution of \eqref{eqs: H} and \eqref{eqs: X3} into \eqref{eqs: S1} yields
 $$ R_{\boldsymbol{\delta}} = \dfrac{c_{\boldsymbol{\delta},\boldsymbol{\mu}}
\left|\begin{array}{lll|l|lll|l}
 \boldsymbol{H}^{(0)}(r_{1})&\cdots&\boldsymbol{H}^{(\mu_1-1)}(r_{1})&\cdots&\boldsymbol{H}^{(0)}(r_{m})&\cdots &\boldsymbol{H}^{(\mu_m-1)}(r_{m})&\\
 \boldsymbol{X}_{\varepsilon}^{(0)}(r_{1})&\cdots&\boldsymbol{X}_{\varepsilon}^{(\mu_1-1)}(r_{1})&\cdots&\boldsymbol{X}_{\varepsilon}^{(0)}(r_{m})&\cdots &\boldsymbol{X}_{\varepsilon}^{(\mu_m-1)}(r_{m})&
\boldsymbol{X}_{\varepsilon}
\end{array}\right|}
 {\prod_{1 \leq u < v \leq m}^{}(r_{v} - r_{u})^{\mu_{u}\mu_{v}}} $$
where $$c_{\boldsymbol{\delta},\boldsymbol{\mu}} =\frac{c'}{\prod_{u = 1}^{m}\prod_{v = 0}^{\mu_{u} - 1}v!}= (- 1)^{{\binom{n}{2}} + \varepsilon} \frac{a^{\delta_{0}}_{n}}{\prod_{u = 1}^{m}\prod_{v = 0}^{\mu_{u} - 1}v!}$$
The proof is completed. 
\end{proof}

\subsection{Property of confluent Vandermonde matrices}
It is seen in Lemma \ref{lemma:Sdelta} that  the numer of the expression $R_{\boldsymbol{\delta}}(\boldsymbol{F})$ in multiple roots is a generalization of the confluent Vandermonde matrix, which inspires us to consider this particular type of matrices.

Let $p$ and $q$ be positive integers. The $ p \times q$ Vandermonde block in terms of $x$ is defined as
$$ \boldsymbol{U}(x;p,q) = (c_{ij})_{\substack{1 \leq i \leq p\\1 \leq j \leq q}}$$
where $c_{ij} =\binom{i-1}{j-1}x^{i-j}$ with the convention $\binom{i-1}{j-1}:=0$ for $i<j$. 

\begin{definition}[Confluent Vandermonde matrix]
Given $\boldsymbol{x}=(x_1,\ldots,x_{\ell})$ and  $\boldsymbol{\tau}=(\tau_1,\ldots,\tau_{\ell})$,
the $\boldsymbol{\tau}$-\emph{confluent Vandermonde matrix} in terms of $\boldsymbol{x}$ is defined as the ${k}  \times {k} $ matrix
$$ \boldsymbol{V}(\boldsymbol{x}; \boldsymbol{\tau}) := \begin{bmatrix}
\boldsymbol{U}(x_{1}; {k} , \tau_{1})& \cdots &\boldsymbol{U}(x_{\ell}; {k} ,\tau_{\ell})
\end{bmatrix} $$
where ${k} =\tau_1+\cdots+\tau_{\ell}$ and $\boldsymbol{U}(x_i; {k} , \tau_{i})$ is the $ {k}  \times \tau_i$ Vandermonde block in terms of $x_i$.
\end{definition}

\begin{example}
Consider $\boldsymbol{x} = (x_1,x_2,x_3), \boldsymbol{\tau} = (3,1,2)$ and $k = |\boldsymbol{\tau}| = 6$. It is easy to see that 
\begin{align*}
\boldsymbol{V}(\boldsymbol{x};\boldsymbol{\tau}) &= \boldsymbol{V}((x_1,x_2,x_3);(3,1,2)) \\
&=\begin{bmatrix}
\boldsymbol{U}(x_1;6,3)& \boldsymbol{U}(x_2;6,1) &\boldsymbol{U}(x_3;6,2)
\end{bmatrix} \\
&=\begin{bmatrix}
1& 0 & 0 & 1 & 1 &0 \\
x_1& 1 & 0 & x_2 & x_3 &1 \\
x_1^2& 2x_1 & 1 & x_2^2 & x_3^2 &2x_3 \\
x_1^3& 3x_1^2 & 3x_1 & x_2^3 & x_3^3 &3x_3^2 \\
x_1^4& 4x_1^3 & 6x_1^2 & x_2^4 &x_3^4  &4x_3^3 \\
x_1^5& 5x_1^4 & 10x_1^3 & x_2^5 &x_3^5  &5x_3^4
\end{bmatrix}
\end{align*}
\end{example}

\begin{lemma}[\!\!\cite{1991_Horn_Johnson}] \label{eqs:V}
With the above settings, we have
 \[
{V}(\boldsymbol{x}; \boldsymbol{\tau}):=|\boldsymbol{V}(\boldsymbol{x}; \boldsymbol{\tau})| = \prod_{1 \leq i < j \leq \ell}(x_{j} - x_{i})^{\tau_{i}\tau_{j}}.
\]
\end{lemma}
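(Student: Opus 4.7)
The plan is to derive the identity by degeneration from the classical Vandermonde determinant, coalescing variables with divided differences — precisely the trick the paper already employs in the proof of Lemma \ref{lemma:Sdelta}. To set up, I would introduce $k=\tau_1+\cdots+\tau_\ell$ pairwise distinct auxiliary indeterminates $\boldsymbol{y}=(y_{i,s})$ with $i=1,\ldots,\ell$ and $s=1,\ldots,\tau_i$, doubly indexed so that $i$ marks the group and $s$ the position within the group. The starting point is the classical identity
\[
\det\bigl(y_{i,s}^{\,r-1}\bigr)_{r;(i,s)}
\;=\;\prod_{(i,s)\prec (j,t)} (y_{j,t}-y_{i,s}),
\]
where $\prec$ is the lexicographic order on the index pairs, which lists all within-group pairs before any cross-group pair.

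Next, within each group $i$, I would apply successive rounds of pairwise divided-difference column operations on the columns labelled $(i,1),\ldots,(i,\tau_i)$, exactly in the style used in the proof of Lemma \ref{lemma:Sdelta}. After all rounds, the $(r,(i,s))$-entry becomes $f[y_{i,1},\ldots,y_{i,s}]$ for $f(y)=y^{r-1}$, and every within-group difference $(y_{i,t}-y_{i,s})$ with $s<t$ has been absorbed into a divided-difference denominator precisely once. What remains of the identity is
\[
\det\bigl(f[y_{i,1},\ldots,y_{i,s}]\bigr)_{r;(i,s)}
\;=\;\prod_{i<j}\;\prod_{s=1}^{\tau_i}\prod_{t=1}^{\tau_j}(y_{j,t}-y_{i,s}).
\]

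Finally, I would take the limit $y_{i,s}\to x_i$ for every $s=1,\ldots,\tau_i$. On the right, each of the $\tau_i\tau_j$ cross-group factors becomes $(x_j-x_i)$, collapsing the product to $\prod_{i<j}(x_j-x_i)^{\tau_i\tau_j}$. On the left, the confluent divided difference evaluates to
\[
f[\underbrace{x_i,\ldots,x_i}_{s\ \text{times}}] \;=\; \frac{f^{(s-1)}(x_i)}{(s-1)!},\qquad f(y)=y^{r-1},
\]
so the $(r,(i,s))$-entry becomes $\binom{r-1}{s-1}x_i^{\,r-s}$, which is exactly the $(r,s)$-entry of the block $\boldsymbol{U}(x_i;k,\tau_i)$. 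Concatenating over all $(i,s)$ rebuilds $\boldsymbol{V}(\boldsymbol{x};\boldsymbol{\tau})$, and the claimed formula follows.

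The main obstacle I expect is the combinatorial bookkeeping for the divided-difference rounds: one has to verify that, after all operations, every within-group pair $(y_{i,s},y_{i,t})$ with $s<t$ has been divided out exactly once, so that no spurious factor is left over in either numerator or denominator. This is a routine induction that mirrors the denominator-simplification step already carried out inside the proof of Lemma \ref{lemma:Sdelta}. Once this is checked, the specialization $y_{i,s}\to x_i$ is harmless because divided differences extend continuously to the diagonal and all entries involved are polynomials in the $y_{i,s}$.
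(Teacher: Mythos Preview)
The paper does not actually prove this lemma: it is stated with a citation to Horn and Johnson \cite{1991_Horn_Johnson} and used as a black box, so there is no ``paper's own proof'' to compare against. Your proposal is therefore supplying a proof where the paper supplies none.

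As a proof, your approach is correct and in fact rather elegant in context: you reuse exactly the divided-difference coalescence trick that the paper itself deploys in the proof of Lemma~\ref{lemma:Sdelta}, so the argument fits naturally into the surrounding exposition. One small imprecision: the lexicographic order on pairs $(i,s)$ does not literally ``list all within-group pairs before any cross-group pair''; rather, the product over all pairs simply factors as (within-group)$\times$(cross-group), and only the within-group factors get absorbed by the divided-difference operations. This does not affect the argument, since all you need is that factorization. The bookkeeping you flag as the main obstacle is indeed routine---each round of divided differences on group $i$ removes exactly the factors $(y_{i,t}-y_{i,s})$ with $t-s$ equal to the current gap, and after $\tau_i-1$ rounds every within-group pair has been divided out once---and the confluent limit is unproblematic because everything is polynomial.
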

\begin{remark}
In particular, when $\tau_{1} = \cdots = \tau_{\ell}=1$, $V(\boldsymbol{x}; \boldsymbol{\tau})= \prod_{1 \leq i < j \leq {k} }^{}(x_{j} - x_{i})$, which is the Vandermonde determinant and thus can be abbreviated as $V(\boldsymbol{x})$.
\end{remark}

\subsection{Proof of Theorem \ref{thm:icgcd}}
\begin{proof}[Proof of Theorem \ref{thm:icgcd}]
Let $\boldsymbol{\mu} = (\mu_1,\ldots,\mu_m)$, i.e., $P = a_{n}(x - r_{1})^{\mu_{1}}\dots(x - r_{m})^{\mu_{m}}$. It is easy to see that
$$G_{i} =\gcd(P^{(0)},\ldots,P^{(i)})
= \prod_{\mu_{j} > i}^{}(x - r_{j})^{\mu_{j} - i}.$$
On the other hand, let $\boldsymbol{\bar{\mu}} = (\bar{\mu}_{1}, \ldots, \bar{\mu}_{n}) $ where $\bar{\mu}_j=\#\{j:\,\mu_j\ge i\}$ and $\bar{\boldsymbol{\mu}}_i=(\bar{\mu}_1,\ldots,\bar{\mu}_i)$. We only need to show that $ R_{\bar{\boldsymbol{\mu}}_i}(P^{(0)}, \ldots,P^{(i)})=c \prod_{\mu_{j} > i}(x - r_{j})^{\mu_{j} - i}$ for some constant $c$. For the sake of simplicity, we will omit $(P^{(0)}, \ldots,P^{(i)})$ in the proof when $R_{\bar{\boldsymbol{\mu}}_i}$ is mentioned.

By Lemma \ref{lemma:Sdelta},
 \begin{align}
 R_{\bar{\boldsymbol{\mu}}_i}\,
=\,&R_{(\bar{\mu}_1,\ldots,\bar{\mu}_i,0,\ldots,0)}(\boldsymbol{F}) \notag\\
 =\,&
 \dfrac{c_{\bar{\boldsymbol{\mu}}_i}
\left|\begin{array}{lll|l|lll|l}
 \boldsymbol{H}^{(0)}(r_{1})&\cdots&\boldsymbol{H}^{(\mu_1-1)}(r_{1})&\cdots&\boldsymbol{H}^{(0)}(r_{m})&\cdots&\boldsymbol{H}^{(\mu_{m} - 1)}(r_{m})&\\
 \boldsymbol{X}_{\varepsilon}^{(0)}(r_{1})&\cdots&\boldsymbol{X}_{\varepsilon}^{(\mu_{1} - 1)}(r_{1})&\cdots&\boldsymbol{X}_{\varepsilon}^{(0)}(r_{m})&\cdots&\boldsymbol{X}_{\varepsilon}^{(\mu_{m} - 1)}(r_{m})&\boldsymbol{X}_{\varepsilon}
\end{array}\right|}
 {\prod_{1 \leq u < v \leq m}^{}(r_{v} - r_{u})^{\mu_{u}\mu_{v}}}
 \label{eqs:SSS}
 \end{align}
 for some constant $c_{\bar{\boldsymbol{\mu}}_i}$ where
 \begin{align*}
 \boldsymbol{H}^{(k)} &=\left(\begin{array}{ccc|c|ccc}
 \left(x^{\bar{\mu}_{1} - 1}P^{(1)}\right)^{(k)}&\cdots&\left(x^{0}P^{(1)}\right)^{(k)}&\cdots\cdots&\left(x^{\bar{\mu}_{i} - 1}P^{(i)}\right)^{(k)}&\cdots&\left(x^{0}P^{(i)}\right)^{(k)}
\end{array}\right)^T,\\
 \boldsymbol{X}_{\varepsilon} &=\left(x^{\varepsilon},\ldots,x^0\right)^T,\quad \boldsymbol{X}_{\varepsilon}^{(k)}=\left((x^{\varepsilon})^{(k)},\ldots,(x^0)^{(k)}\right)^T,\quad\text{and}\quad
 \varepsilon= n - |\boldsymbol{\bar{\mu}}_{i}|.
\end{align*}
After reordering the columns, we obtain
\begin{align}
R_{(\bar{\mu}_1,\ldots,\bar{\mu}_i)}\ &=\ \pm\dfrac{c_{\bar{\boldsymbol{\mu}}_i}
 \left|\begin{array}{llllll l|l}
 \boldsymbol{M}^{(1)}&\cdots& \boldsymbol{M}^{(i)}&\boldsymbol{M}^{(i+1)}
 \end{array}\right|}{\prod_{1 \leq u < v \leq m}^{}(r_{v} - r_{u})^{\mu_{u}\mu_{v}}}
\label{eqs:reordered_S}
\end{align}
where 
\[\boldsymbol{M}^{(j)}=\begin{bmatrix}
\boldsymbol{H}^{(\mu_{1} - j)}(r_{1})&\cdots&\boldsymbol{H}^{(\mu_{\bar{\mu}_j} - j)}(r_{\bar{\mu}_j})\\
\boldsymbol{X}_{\varepsilon}^{(\mu_{1} - j)}(r_{1})&\cdots&\boldsymbol{X}_{\varepsilon}^{(\mu_{\bar{\mu}_j} - j)}(r_{\bar{\mu}_j})
\end{bmatrix}
\]
for $j=1,\ldots,i$, and 
\[\boldsymbol{M}^{(i+1)}=\left[\begin{array}{lll|l|lll|l}
\boldsymbol{H}^{(0)}(r_{1})&\cdots&\boldsymbol{H}^{(\mu_{1} -( i+1))}(r_{1})&\cdots&\boldsymbol{H}^{(0)}(r_{\bar{\mu}_{i+1}})&\cdots&\boldsymbol{H}^{(\mu_{\bar{\mu}_{i+1}} - (i+1))}(r_{\bar{\mu}_{i+1}})&\\
\boldsymbol{X}_{\varepsilon}^{(0)}(r_{1})&\cdots&\boldsymbol{X}_{\varepsilon}^{(\mu_{1} - (i+1))}(r_{1})&\cdots&\boldsymbol{X}_{\varepsilon}^{(0)}(r_{\bar{\mu}_{i+1}})&\cdots&\boldsymbol{X}_{\varepsilon}^{(\mu_{\bar{\mu}_{i+1}} - (i+1))}(r_{\bar{\mu}_{i+1}})&\boldsymbol{X}_{\varepsilon}
\end{array}\right] 
\]
The column reordering is inspired by the following observation.

Since $P$ and its first $\mu_{v} - 1$ derivatives are equal to zero at $x = r_{v}$, by the Leibniz rule, we immediately know that for $u = 1, \dots, i$ and $v$ satisfying $\mu_{v} \geq u,$
 $$ (x^{\xi}P^{(u)})^{(\ell)}(r_v) = \left\{\begin{matrix}
 0& \text{if}\ \ell+u < \mu_{v};\\
 r^{\xi}_{v}P^{(\mu_{v})}(r_v)& \text{if}\ \ell+u = \mu_{v};\\
 *& \text{if}\ \ell+u > \mu_{v}.
 \end{matrix}\right.$$
Thus for $k = 0, \dots, \mu_{v} - 1$,
\begin{equation}\label{eqs:H_k(r_j)}
\boldsymbol{H}^{(k)}(r_v)=\left( \begin{array}{c|c|c|ccc|ccccc}\boldsymbol{0}_{1\times \bar{\mu}_1}&\cdots&\boldsymbol{0}_{1\times \bar{\mu}_{\mu_v-k-1}}
&r^{\bar{\mu}_{\mu_v-k}-1}_{v}P^{(\mu_v)}(r_{j})
& \cdots&r^{0}_{v}P^{(\mu_v)}(r_{v})
& *&\cdots&*
\end{array}\right)^T,
\end{equation}
Moreover, we can also derive that
\begin{align}
\boldsymbol{X}_{\varepsilon}^{(k)}(r_v)&=\left(\dfrac{\varepsilon!}{(\varepsilon-k)!}\cdot r^{\varepsilon - k}_{v}, \dots, {k!}\cdot r^{0}_{v}, 0, \dots, 0\right)^T=k!\left(\binom{\varepsilon}{k}\cdot r^{\varepsilon - k}_{v}, \dots,  \binom{k}{k}r^{0}_{v}, 0, \dots, 0\right)^T\label{eqs:X_k(r_j)} 
\end{align}

Next we further simplify $\boldsymbol{M}^{(j)}$'s by making use of \eqref{eqs:H_k(r_j)} and \eqref{eqs:X_k(r_j)} and obtain the followings.
\begin{enumerate}[(a)]
\item When $j\le i$,
we have
\begin{align*}
\boldsymbol{H}^{(\mu_1-j)}(r_1)\ &=\ 
\setlength{\arraycolsep}{2pt}
\left( \begin{array}{c|c|c|ccc|ccccc}\boldsymbol{0}_{1\times \bar{\mu}_1}&\cdots&\boldsymbol{0}_{1\times \bar{\mu}_{j-1}}
&r^{\bar{\mu}_{j}-1}_{1}P^{(\mu_1)}(r_{1})
& \cdots&r^{0}_{1}P^{(\mu_1)}(r_{1})
& *&\cdots&*
\end{array}\right)^T\\
&\ \,\vdots\\
\boldsymbol{H}^{(\mu_{\bar{\mu}_{j}}-j)}(r_{\bar{\mu}_j})\ &=\ \left( 
\setlength{\arraycolsep}{2pt}
\begin{array}{c|c|c|ccc|ccccc}\boldsymbol{0}_{1\times \bar{\mu}_1}&\cdots&\boldsymbol{0}_{1\times \bar{\mu}_{j-1}}
&r^{\bar{\mu}_{j}-1}_{\bar{\mu}_j}P^{(\mu_{\bar{\mu}_j})}(r_{\bar{\mu}_j})
& \cdots&r^{0}_{\bar{\mu}_j}P^{(\mu_{\bar{\mu}_j})}(r_{\bar{\mu}_j})
& *&\cdots&*
\end{array}\right)^T
\end{align*}
Thus
\[
\boldsymbol{M}^{(j)}=\left[
\begin{array}{c}
\boldsymbol{0}_{\bar{\mu}_1\times\bar{\mu}_j}\\
\vdots\\
\boldsymbol{0}_{\bar{\mu}_{j-1}\times\bar{\mu}_j}\\[3pt]
\boldsymbol{M}_j\\
*_{\bar{\mu}_{j+1}\times\bar{\mu}_j}\\
\vdots\\
*_{\bar{\mu}_{i}\times\bar{\mu}_j}
\end{array}\right]
\]
where
\begin{equation}\label{eqs:M_j}
\boldsymbol{M}_j=\begin{bmatrix}
r^{\bar{\mu}_{j}-1}_{1}P^{(\mu_1)}(r_{1})
& \cdots&r^{\bar{\mu}_{j}-1}_{\bar{\mu}_j}P^{(\mu_{\bar{\mu}_j})}(r_{\bar{\mu}_j})\\
\vdots
& &\vdots\\
r^{0}_{1}P^{(\mu_1)}(r_{1})
& \cdots&r^{0}_{\bar{\mu}_j}P^{(\mu_{\bar{\mu}_j})}(r_{\bar{\mu}_j})
\end{bmatrix}
\end{equation}

\item The simplification of $\boldsymbol{M}^{(i+1)}$ relies on the following observations:
\begin{itemize}
\item $\boldsymbol{H}^{(0)}(r_{\ell})=\cdots=\boldsymbol{H}^{(\mu_{\ell} - (i+1))}(r_\ell)=\boldsymbol{0}_{\sum_{j=1}^i{\bar{\mu}_j}\times1}=\boldsymbol{0}_{|{\boldsymbol{\bar\mu}}_i|\times1}$
for $\ell=1,\ldots,\bar{\mu}_{i+1}$, and
\item $\begin{bmatrix}
\boldsymbol{X}_{\varepsilon}^{(0)}(r_{\ell})&\cdots& \boldsymbol{X}_{\varepsilon}^{(\mu_{\ell} - (i+1))}(r_{\ell})
\end{bmatrix}=\boldsymbol{J}\boldsymbol{U}(r_{\ell};\varepsilon+1,\mu_{\ell}-i)\boldsymbol{D}_{\mu_{\ell}-i}$
\end{itemize}
where 
\[
\boldsymbol{J}=\begin{bmatrix}
    &&1\\
    &\adots&\\
    1&&
\end{bmatrix}_{(\varepsilon+1)\times(\varepsilon+1)}\qquad
\boldsymbol{D}_{\mu_{\ell}-i}=\begin{bmatrix}
    0!\\
    &\ddots&\\
    &&(\mu_{\ell}-(i+1))!
\end{bmatrix}_{(\mu_{\ell}-i)\times(\mu_{\ell}-i)}
\]
Hence
\begin{align*}
\boldsymbol{M}^{(i+1)}\ =&\ \left[\begin{array}{l|l|l|l}
\boldsymbol{0}_{|\boldsymbol{\bar{\mu}_i}|\times (\mu_1-i)}&\cdots&\boldsymbol{0}_{|\boldsymbol{\bar{\mu}_i}|\times (\mu_{\bar{\mu}_{i+1}}-i)}&\\
\boldsymbol{JU}(r_{1};\varepsilon+1,\mu_{1}-i)\boldsymbol{D}_{\mu_{1}-i}&\cdots&\boldsymbol{JU}(r_{\bar{\mu}_{i+1}};\varepsilon+1,\mu_{\bar{\mu}_{i+1}}-i)\boldsymbol{D}_{\mu_{\bar{\mu}_{i+1}}-i}&\boldsymbol{X}_{\varepsilon}
\end{array}\right]
\end{align*}
\end{enumerate}
Let $$\boldsymbol{N}=\begin{bmatrix}\boldsymbol{JU}(r_{1};\varepsilon+1,\mu_{1}-i)\boldsymbol{D}_{\mu_{1}-i}&\cdots&\boldsymbol{JU}(r_{\bar{\mu}_{i+1}};\varepsilon+1,\mu_{\bar{\mu}_{i+1}}-i)\boldsymbol{D}_{1}&\boldsymbol{X}_{\varepsilon}\end{bmatrix}$$
Using (a) and (b), we simplify \eqref{eqs:reordered_S} and get
\[
R_{(\bar{\mu}_1,\ldots,\bar{\mu}_i)}\ =\ \pm\dfrac{c_{\bar{\boldsymbol{\mu}}_i}
 \left|\begin{array}{llllll l|l}
 \boldsymbol{M}_1\\
 \vdots&\ddots\\
 \cdot&\cdots& \boldsymbol{M}_i\\
 \cdot&\cdots&\cdot&\boldsymbol{N}
 \end{array}\right|}{\prod_{1 \leq u < v \leq m}^{}(r_{v} - r_{u})^{\mu_{u}\mu_{v}}}
 =\pm\dfrac{c_{\bar{\boldsymbol{\mu}}_i}\prod_{j=1}^i|\boldsymbol{M}_j|\cdot |\boldsymbol{N}|
 }{\prod_{1 \leq u < v \leq m}^{}(r_{v} - r_{u})^{\mu_{u}\mu_{v}}}
\]
where $\boldsymbol{M}_j$ is as in \eqref{eqs:M_j} for $j=1,\ldots,i$.

It is easy to derive that
\[|\boldsymbol{M}_j|=\begin{vmatrix}
r^{\bar{\mu}_{j}-1}_{1}P^{(\mu_1)}(r_{1})
& \cdots&r^{\bar{\mu}_{j}-1}_{\bar{\mu}_j}P^{(\mu_{\bar{\mu}_j})}(r_{\bar{\mu}_j})\\
\vdots
& &\vdots\\
r^{0}_{1}P^{(\mu_1)}(r_{1})
& \cdots&r^{0}_{\bar{\mu}_j}P^{(\mu_{\bar{\mu}_j})}(r_{\bar{\mu}_j})
 \end{vmatrix}=\left(\prod_{1\le \ell\le \bar{\mu}_j}P^{(\mu_\ell)}(r_{\ell})\right)\cdot V(r_1,\ldots,r_{\bar{\mu}_j})\ne0\]
Furthermore, noting that
\begin{align*}
\boldsymbol{N}=\,&\boldsymbol{J}\begin{bmatrix}\boldsymbol{U}(r_{1};\varepsilon+1,\mu_{1}-i)&\cdots&\boldsymbol{U}(r_{\bar{\mu}_{i+1}};\varepsilon+1,\mu_{\bar{\mu}_{i+1}}-i)&\boldsymbol{JX}_{\varepsilon}\end{bmatrix}\\
&\cdot\operatorname{diag}\begin{bmatrix}
\boldsymbol{D}_{\mu_{1}-i}&\cdots&\boldsymbol{D}_{\mu_{\bar{\mu}_{i+1}}-i}&\boldsymbol{I}_1
\end{bmatrix}
\end{align*}
and $\boldsymbol{JX}_{\varepsilon}=\begin{bmatrix}
x^0&\cdots&x^{\varepsilon}
\end{bmatrix}^T$
where $\boldsymbol{I}_1$ is the identity matrix of size $1$,
we have
\begin{align*}
|\boldsymbol{N}|&\ =\ \pm \left(\prod_{j=1}^{\bar{\mu}_{i+1}}|\boldsymbol{D}_{\mu_j-i}|\right)V((r_1,\ldots,r_{\bar{\mu}_{i+1}},x); (\mu_{1}-i,\ldots,\mu_{\bar{\mu}_{i+1}}-i,1))\\
&\ =\ \pm \left(\prod_{j=1}^{\bar{\mu}_{i+1}}|\boldsymbol{D}_{\mu_j-i}|\right)\left(\prod_{1 \leq u < v \leq \bar{\mu}_{i+1}}(r_{v} - r_{u})^{(\mu_u-i)(\mu_v-i)}\right)\cdot \left(\prod_{1\le u\le \bar{\mu}_{i+1}}(x-r_u)^{\mu_u-i}\right).
\end{align*}
Therefore,
\begin{align*}
R_{(\bar{\mu}_1,\ldots,\bar{\mu}_i)} &= c\prod_{1\le u\le \bar{\mu}_{i+1}}(x-r_u)^{\mu_u-i}
=c\prod_{\mu_u>i}(x-r_u)^{\mu_u-i}=G_i\end{align*}
where 
$$c=\pm\dfrac{c_{\bar{\boldsymbol{\mu}}_i} \left(\prod_{j=1}^{\bar{\mu}_{i+1}}|\boldsymbol{D}_{\mu_j-i}|\right)\prod_{j=1}^i\left(V(r_1,\ldots,r_{\bar{\mu}_j})\prod_{1\le \ell\le \bar{\mu}_j}P^{(\mu_\ell)}(r_{\ell}) \right)\cdot \left(\prod_{1 \leq u < v \leq \bar{\mu}_{i+1}}(r_{v} - r_{u})^{(\mu_u-i)(\mu_v-i)}\right)}{\prod_{1 \leq u < v \leq m}^{}(r_{v} - r_{u})^{\mu_{u}\mu_{v}}} $$
which is obviously non-zero due to the assumption $r_1, \ldots, r_m$ are distinct.
\end{proof}

\section{Algorithm}\label{sec:algorithm}

In order to design an algorithm to compute the conditions for discriminating all the potential complete multiplicities $P$ may have, we need two more tools developed by Hong et al., which are non-nested multiplicity discriminant \cite{2024_Hong_Yang:non-nested} and discriminant sequence \cite{1996_Yang_Hou_Zeng}.\subsection{Non-nested multiplicity discriminants for univariate polynomials}

\begin{definition}[$\boldsymbol{\lambda}$-discriminant]
Let $P = \sum_{i = 0}^{n}a_{i}x^{i}\in\mathbb{R}[x]$ where $a_{n} \ne 0$ and $\boldsymbol{F} = (P^{(0)},$ $P^{(1)},\ldots,
P^{(n)})$ and $\boldsymbol{\lambda}\in\mathcal{M}(n)$. Then we call $R_{\boldsymbol{\lambda}}(\boldsymbol{F})$ the \emph{$\boldsymbol{\lambda}$-discriminant} of $P$.
\end{definition}

\begin{theorem}[\!\!\cite{2024_Hong_Yang:non-nested}]\label{thm:nonnested}
We have
$$\operatorname{mult}(P) = \boldsymbol{\mu}\ \ \ \Longleftrightarrow\ \ \ \boldsymbol{\bar{\mu}}=\max_{\substack{\boldsymbol{\lambda}\in\mathcal{M}(n)\\R_{\boldsymbol{\lambda}}(\boldsymbol{F})\ne0}}\boldsymbol{\lambda}, $$
where $\max$ is w.r.t. the lexicographical ordering.
\end{theorem}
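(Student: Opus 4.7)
The plan is to establish only the forward direction \emph{$\operatorname{mult}(P)=\boldsymbol{\mu}\Rightarrow\boldsymbol{\bar{\mu}}=\max\{\boldsymbol{\lambda}\in\mathcal{M}(n):R_{\boldsymbol{\lambda}}(\boldsymbol{F})\ne 0\}$}; the reverse direction is then automatic, because applying the forward direction to the actual multiplicity vector $\boldsymbol{\mu}'$ of $P$ yields $\boldsymbol{\bar{\mu}}'=\boldsymbol{\bar{\mu}}$, and conjugation is an involution on partitions. The forward direction splits into two sub-claims: (i) $R_{\boldsymbol{\bar{\mu}}}(\boldsymbol{F})\ne 0$, and (ii) $R_{\boldsymbol{\lambda}}(\boldsymbol{F})=0$ whenever $\boldsymbol{\lambda}\in\mathcal{M}(n)$ is strictly lex-greater than $\boldsymbol{\bar{\mu}}$. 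Claim (i) is handed to us by Theorem \ref{thm:icgcd}: under the stated convention $R_{\boldsymbol{\bar{\mu}}}(\boldsymbol{F})=G_{\mu_1}$, and the formula $G_i=c\prod_{\mu_j>i}(x-r_j)^{\mu_j-i}$ established in its proof shows $G_{\mu_1}$ is a nonzero constant (the product is empty since no $\mu_j$ exceeds $\mu_1$).

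For Claim (ii) I plan to use the multiple-root determinantal formula of Lemma \ref{lemma:Sdelta} and exhibit a linear dependence among the rows of the numerator matrix. Let $k$ be the smallest index where $\boldsymbol{\lambda}$ and $\boldsymbol{\bar{\mu}}$ disagree, so $\lambda_j=\bar{\mu}_j$ for $j<k$ and $\lambda_k>\bar{\mu}_k$; a short check combining the weak monotonicity of $\boldsymbol{\lambda}$ with $|\boldsymbol{\lambda}|=n$ rules out $k>\mu_1$, ensuring $\bar{\mu}_k\geq 1$. After reordering the root columns by the value of $\mu_i-\ell$ exactly as in the proof of Theorem \ref{thm:icgcd}, column group $j'$ contains precisely $\bar{\mu}_{j'}$ columns, and the Leibniz identity $(x^{s}P^{(j)})^{(\ell)}(r_i)=0$ for $j+\ell<\mu_i$ forces the row block for $P^{(j)}$ to vanish on every column group $j'>j$. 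Consequently the $\sum_{j\leq k}\lambda_j$ rows coming from $P^{(1)},\ldots,P^{(k)}$ are supported only on the first $k$ column groups (together $\sum_{j\leq k}\bar{\mu}_j$ columns) and are zero on the final $\boldsymbol{X}_{\varepsilon}$ column. Since $\lambda_j=\bar{\mu}_j$ for $j<k$ and $\lambda_k>\bar{\mu}_k$, we get $\sum_{j\leq k}\lambda_j>\sum_{j\leq k}\bar{\mu}_j$, so those rows are linearly dependent as vectors; extending the dependence by zero to the whole matrix yields $\det=0$ identically in $x$, i.e., $R_{\boldsymbol{\lambda}}(\boldsymbol{F})=0$.

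The principal obstacle I anticipate is cleanly re-executing the Leibniz/block-zero analysis for a general $\boldsymbol{\delta}=\boldsymbol{\lambda}$ (the proof of Theorem \ref{thm:icgcd} handled only the special $\boldsymbol{\delta}=\boldsymbol{\bar{\mu}}$) and precisely dispatching the lex-comparison edge cases that guarantee $\bar{\mu}_k\geq 1$ in every situation. Once the zero pattern of the $P^{(1)},\ldots,P^{(k)}$ row blocks is secured, the argument becomes purely combinatorial rank-counting: neither the nonzero entries of the matrix nor any surviving subdeterminants need to be computed explicitly.
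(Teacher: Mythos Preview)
The paper does not prove Theorem~\ref{thm:nonnested}; it is quoted from \cite{2024_Hong_Yang:non-nested} as an external result, so there is no in-paper argument to compare against. Your proposal is sound and has the merit of being self-contained within the present paper's toolkit.

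Claim~(i) follows exactly as you say from Theorem~\ref{thm:icgcd}: under the paper's convention $R_{\boldsymbol{\bar\mu}}(\boldsymbol F)=G_{\mu_1}$, and the explicit formula in the proof of Theorem~\ref{thm:icgcd} shows this is a nonzero constant since the product $\prod_{\mu_j>\mu_1}(\cdot)$ is empty. For Claim~(ii), your rank-counting on the numerator determinant of Lemma~\ref{lemma:Sdelta} is correct. With $|\boldsymbol\lambda|=n$ one has $\varepsilon=0$, so the matrix is square of size $n+1$; the $n$ root-columns regroup into blocks of sizes $\bar\mu_1,\dots,\bar\mu_{\mu_1}$ (this column structure depends only on $\boldsymbol\mu$, not on $\boldsymbol\lambda$, so the reordering from the proof of Theorem~\ref{thm:icgcd} carries over verbatim), and the Leibniz identity yields the staircase zero pattern you describe. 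The $\sum_{j\le k}\lambda_j$ rows coming from $P^{(1)},\dots,P^{(k)}$ are then supported on only $\sum_{j\le k}\bar\mu_j$ columns (the final $\boldsymbol X_\varepsilon$ column is zero on all $\boldsymbol H$-rows by the block structure in Lemma~\ref{lemma:Sdelta}), and the strict inequality $\sum_{j\le k}\lambda_j>\sum_{j\le k}\bar\mu_j$ forces the determinant to vanish identically. Your dispatch of the edge case $k>\mu_1$ is also right: equality of the first $\mu_1$ coordinates already sums to $n$, so $\boldsymbol\lambda$ cannot differ from $\boldsymbol{\bar\mu}$ beyond position~$\mu_1$. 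The reverse implication via the involution property of conjugation is standard and works as stated.
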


\subsection{The discriminant sequence of a univariate polynomial}
In \cite{1996_Yang_Hou_Zeng}, Yang et al. proposed a method for determining the number of distinct real/imaginary roots for univariate polynomials with parametric coefficients.
In this subsection, we give a brief review on the work.

\begin{definition} [Discrimination matrix] Given a polynomial
$$ P(x) = a_{n}x^{n} + a_{n - 1}x^{n - 1} + \cdots + a_{1}x + a_{0},$$
we write the derivative of $P(x)$ as
$$ P'(x) = 0\cdot x^{n} +na_{n}x^{n-1} + (n-1)a_{n - 1}x^{n-2} + \cdots + a_{1}. $$
Then the Sylvester matrix of $P(x)$ and $P'(x)$,
$$ {\rm Syl}(P)=\begin{bmatrix}
 a_{n} & a_{n - 1} & a_{n - 2} & \cdots & a_{0} & & & \\
 & na_{n} & (n - 1)a_{n - 1} & \cdots & a_{1} & & & \\
 & a_{n} & a_{n - 1} & \cdots & a_{2} & a_{0} & & \\
 & & na_{n} & \cdots & 2a_{2} & a_{1} & & \\
 & & & \ddots & &&\ddots & \\
% & & & \cdots & \cdots & & & \\
 & & & & a_{n} & a_{n - 1} & \cdots & a_{0}\\
 & & & & & na_{n} & \cdots &a_{1}
\end{bmatrix} $$
is called the discrimination matrix of $P(x)$.
\end{definition}

\begin{definition} [Discriminant sequence]
Let $D_{i}$ denote the $2i\times 2i$ principal minor of ${\rm Syl}(P)$ for $i = 1, \dots, n$.
Then the $n$-tuple
 $$ \left[ D_{1}, D_{2}, \dots, D_{n} \right ] $$
 is called the discriminant sequence of $P(x)$.
\end{definition}

In order to determine the number of distinct real/imaginary roots of $P$ with its discriminant sequence, we recall the concept of revised sign list \cite{1996_Yang_Hou_Zeng}. 

Given a sign list $[\sigma_{1}, \sigma_{2}, \dots, \sigma_{n}]\in\{-1,0,1\}^n$, its revised sign list
$ [\sigma'_{1}, \sigma'_{2}, \dots, \sigma'_{n}]$ can be constructed as follows.
\begin{itemize}
 \item If $[\sigma_{i}, \sigma_{i+1}, \dots, \sigma_{i+j}]$ is a section of the given list, where $\sigma_{i}\neq 0$, $\sigma_{i+1} = \cdots = \sigma_{i+ j -1} = 0$, $\sigma_{i+j}\neq 0$, then we replace the section
 $$ [\sigma_{i+1}, \sigma_{i+2}, \dots, \sigma_{i+j-1}] $$
 by
 $$ [-\sigma_{i}, -\sigma_{i}, \sigma_{i}, \sigma_{i}, -\sigma_{i}, -\sigma_{i}, \sigma_{i}, \sigma_{i}, \dots]; $$
 i.e. let
 $$ \sigma'_{i+k} = (-1)^{\frac{k+1}{2}}\cdot \sigma_{i}$$
 for $k= 1, 2, \dots, j-1$.
 \item Otherwise, let $\sigma'_{i} = \sigma_{i}$.
 \end{itemize}

\begin{theorem} \label{thm:YHZ}
 Given a polynomial $P\in\mathbb{R}[x]$,
 let $\sigma$ be the sign list of the discriminant sequence of $P$ and $\nu$ be the number of the sign changes of the revised sign list of $\sigma$. Then
 we have
 \begin{itemize}
\item the number of the pairs of distinct conjugate imaginary roots of $P$ is $\nu$;

\item
 the number of the distinct real roots of $P$ equals $\eta-2\nu$ where $\eta$ is the number of non-vanishing members in the revised sign list of $\sigma$.
\end{itemize}
\end{theorem}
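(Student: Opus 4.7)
The plan is to route the argument through the Hermite quadratic form attached to $P$. Let $\alpha_1,\ldots,\alpha_n$ denote the complex roots of $P$ (counted with multiplicity), let $s_k=\sum_{i=1}^n\alpha_i^k$ be the $k$-th Newton power sum, and consider the Hankel matrix $H=(s_{i+j})_{0\le i,j\le n-1}$. The proof would rest on two classical pillars: first, Hermite's theorem, which states that the rank of $H$ equals the number of distinct complex roots of $P$ while the signature of the symmetric form defined by $H$ equals the number of distinct real roots; and second, a Sylvester/Bezoutian identity expressing the leading $i\times i$ principal minor of $H$ as $D_i$ up to a positive power of $a_n$. First I would establish this identity by exploiting the block structure of $\operatorname{Syl}(P,P')$ together with Newton's identities, converting the coefficient-based minors $D_i$ into the power-sum minors of $H$. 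This reduces the theorem to a purely linear-algebraic statement about the quadratic form $H$.

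Next I would invoke Hermite's theorem. Let $p$ and $q$ denote the numbers of positive and negative eigenvalues of $H$, and set $r=p+q=\operatorname{rank}H$. Then the number of distinct real roots is $p-q$, while the number of pairs of distinct conjugate imaginary roots is $\tfrac{1}{2}(r-(p-q))=q$. It therefore suffices to show that the number $\nu$ of sign changes in the revised sign list equals $q$, and that the number $\eta$ of non-vanishing entries equals $r$; once these identifications are in place, the count of distinct real roots becomes $\eta-2q=\eta-2\nu$, exactly as claimed.

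The final and hardest step is the signature computation. In the non-degenerate case, where every $D_i$ is nonzero, the classical Jacobi theorem identifies the number of negative eigenvalues of $H$ with the number of sign variations in the sequence $(1,D_1,\ldots,D_n)$, and $\eta=n$ automatically. The principal obstacle is the degenerate case, when consecutive $D_i$'s vanish. Here Frobenius's rule prescribes the correct replacement pattern for the missing signs under a generic perturbation of $P$; the cyclic pattern $-\sigma_i,-\sigma_i,\sigma_i,\sigma_i,\ldots$ built into the revised sign list is designed to match this rule exactly. I would verify the match by perturbing the coefficients of $P$ so that all $D_i$ become nonzero, applying the non-degenerate Jacobi count to the perturbed sequence, and tracking the limit as the perturbation tends to zero. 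The parity of each zero-block combined with the signs of its surrounding nonzero minors must interact with Frobenius's rule precisely as the revised sign list encodes, and verifying this combinatorial identification — together with the corresponding drop in rank witnessed by the trailing zeros — is the technical heart of the argument.
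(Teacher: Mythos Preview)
The paper does not supply its own proof of this theorem: it is quoted from \cite{1996_Yang_Hou_Zeng} as a known result and used as a black box in the algorithm of Section~\ref{sec:algorithm}. There is therefore nothing in the paper to compare your proposal against.

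On its own merits, your outline is the classical route and is sound. The reduction to the Hermite power-sum Hankel form $H$, Hermite's theorem identifying $\operatorname{rank}H$ and $\operatorname{sig}H$ with the numbers of distinct complex and real roots, and Jacobi's rule for reading the signature from the sequence of leading principal minors are all standard. The delicate step, as you correctly flag, is the degenerate case with interior vanishing minors; the period-four pattern $-\sigma_i,-\sigma_i,\sigma_i,\sigma_i,\ldots$ in the revised sign list is exactly the combinatorial encoding of the Gundelfinger--Frobenius correction, and the original argument in \cite{1996_Yang_Hou_Zeng} (phrased through subresultant and Sturm--Habicht chains rather than an explicit Hermite matrix) has the same substance.

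Two points deserve care in a full write-up. First, the proportionality between $D_i$ and the $i$-th leading principal minor of $H$ must be shown to carry a \emph{positive} constant (an even power of $a_n$, not merely ``a power of $a_n$''); otherwise the sign-variation count could be off by a systematic shift. Second, the identification $\eta=\operatorname{rank}H$ relies on the fact that for this particular Hankel form the last nonvanishing leading minor occurs precisely at index $\operatorname{rank}H$; this fails for general real symmetric matrices and holds here because of the gcd structure of $P$ and $P'$ (equivalently, by the basic degree property of subresultants). Both are routine once noticed, but should be made explicit.
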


\subsection{Algorithm}

Combining Theorems \ref{thm:icgcd}, \ref{thm:nonnested} and \ref{thm:YHZ}, we can derive a new necessary and sufficient condition for a polynomial $P=\sum_{i=0}^na_ix^i$ having a designated complete multiplicity $(\boldsymbol{\mu_R};\boldsymbol{\mu_I})$. This new condition is based on an observation that if $r_{k}$ is a root of $P$ with multiplicity $\mu_k$, then it is a root of $G_i$ with multiplicity $\max(\mu_k-i,0)$. 

Let $\boldsymbol{\mu}$ be the complex multiplicity of $P$ corresponding to $(\boldsymbol{\mu_R};\boldsymbol{\mu_I})$. By Theorem \ref{thm:YHZ}, with the premise $\operatorname{mult}(P)=\boldsymbol{\mu}$,  $G_i$ has $\nu$ pairs of distinct imaginary roots if and only if \[\nu={\rm Var}\left(
D_1(G_i,G_{i}'),\ldots,
D_{\bar{\mu}_{i+1}}(G_i,G_{i}')
\right)\]
where $G_0:=P$ and $\operatorname{Var}(L)$ is the short-hand for the number of sign changes for the revised list of the sign list of $L$. Therefore,
we have
\begin{align}
\operatorname*{cmult}\left(  P\right)  =(\boldsymbol{\mu_R};\boldsymbol{\mu_I})\ \ \ \Longleftrightarrow
\ \ \ &\left(\bigwedge_{\substack{\boldsymbol{\gamma}\succ\boldsymbol{\bar{\mu}}\\\boldsymbol{\gamma}\in\mathcal{M}(n)}}R_{\boldsymbol{\gamma}}(\boldsymbol{F})=0\right)  \ \ \ \wedge
\ \ R_{\boldsymbol{\bar{\mu}}}(\boldsymbol{F})\ne0\notag\\
 \wedge\ &\left(  \bigwedge\limits_{i=0}^{\mu_{1}-1}\ \ {\rm Var}\left(
 D_1(G_i,G_{i}'),\ldots,
 D_{\bar{\mu}_{i+1}}(G_i,G_{i}')
 \right)=\bar{\mu}_{I,i+1}/2\right)\label{eqs:QXYcond}
\end{align}
where $\bar{\mu}_{I,i}$ is the $i$-th element of $\boldsymbol{\bar{\mu}_I}$.

Now we propose
the following algorithm as a solution to  Problem \ref{problem}.

\begin{algorithm}[H]
\caption{${\sf ParametricCompleteMultiplicity}(P)$}\label{alg:rpm}
\begin{description}
\item[In:\ \ ] $P=\sum_{i=0}^na_ix^i\in\mathbb{R}[a_0,\ldots,a_n][x]$ 
\item[Out:] $\mathcal{C}=\left\{(\boldsymbol{\mu_{c}},  C_{\boldsymbol{\mu_{c}}}):\boldsymbol{\mu_{c}}\in\overline{\mathcal{M}}(n)\right\}$ such that $\operatorname{cmult}(P)=\boldsymbol{\mu_{c}}\ \Longleftrightarrow\ C_{\boldsymbol{\mu_{c}}}$.
\end{description}

\begin{algorithmic}[1]
\STATE $\boldsymbol{F}\leftarrow(P^{(0)},\ldots,P^{(n)})$\\ $\mathcal{C}\leftarrow\emptyset$

\STATE $\mathcal{M}(n)\leftarrow\left\{(\delta_1,\ldots,\delta_t):\,\delta_1+\cdots+\delta_t=n, \delta_1\ge\ldots\ge\delta_t\ge1\right\}$\\
 $\mathcal{N}(n)\leftarrow\left\{(\delta_1,\ldots,\delta_t):\,\delta_1+\cdots+\delta_t< n, \delta_1\ge\ldots\ge\delta_t\ge1\right\}$

\FOR{$\boldsymbol{\delta}\in\mathcal{M}(n)\cup\mathcal{N}(n)$}
\STATE $R_{\boldsymbol{\delta}}\leftarrow R_{\boldsymbol{\delta}}(\boldsymbol{F})$
\ENDFOR

\FOR{$\boldsymbol{\mu}\in\mathcal{M}(n)$}
\STATE $\boldsymbol{\bar{\mu}}=(\bar{\mu}_1,\bar{\mu}_2,\ldots)\leftarrow$ the conjugate of $\boldsymbol{\mu}$
\FOR{$i=0,\ldots,\mu_1-1$}
\STATE
$G_i\leftarrow \left\{
\begin{array}{ll}
P&\text{if}\ i=0\\
R_{(\bar{\mu}_1,\ldots,\bar{\mu}_i)}&\text{if}\ i>0
\end{array}\right.$
\FOR{$j=1,\ldots,\bar{\mu}_{i+1}$}
\STATE $
 D_{j}\leftarrow \text{the $j$-th polynomial in the discriminant sequence of}\ G_i$
\ENDFOR
\ENDFOR
\FOR{each 2-partition of $\boldsymbol{\mu}$, say $\boldsymbol{\mu_c}=(\boldsymbol{\mu_{R}};\boldsymbol{\mu_{I}})$,}
\STATE  $\boldsymbol{\bar{\mu}_I}=(\bar{\mu}_{I,1},\ldots)\leftarrow$ the conjugate of $\boldsymbol{\mu_I}$;

\STATE $\mathcal{C}(\boldsymbol{\mu_c})\ \ \ \leftarrow\ \ \ \left(\bigwedge_{\substack{\boldsymbol{\gamma}\succ\boldsymbol{\bar{\mu}}\\\boldsymbol{\gamma}\in\mathcal{M}(n)}}R_{\boldsymbol{\gamma}}(\boldsymbol{F})=0\right)  \ \ \ \wedge
\ \ \left(R_{\boldsymbol{\bar{\mu}}}(\boldsymbol{F})\ne0\right)$

\hspace{5em}$\wedge\ \left(  \bigwedge\limits_{i=0}^{\mu_{1}-1}\ \ {\rm Var}\left(
 D_1,\ldots,
 D_{\bar{\mu}_{i+1}}
 \right)=\bar{\mu}_{I,i+1}/2\right)$

\STATE $\mathcal{C}\leftarrow \mathcal{C}\bigcup\{(\boldsymbol{\mu_c},\mathcal{C}(\boldsymbol{\mu_c}))\}$
\ENDFOR
\ENDFOR
\RETURN $\mathcal{C}$.
\end{algorithmic}
\end{algorithm}

\begin{remark}
\rm\text{In} \rm\cite{1996_Yang_Hou_Zeng}, Yang et al. proposed a method for discriminating the complete multiplicities of a parametric univariate polynomial by counting the numbers of distinct real/imaginary roots of the repeated gcds (called multiple factors there) of the given polynomials. In the ${\sf ParametricCompleteMultiplicity}$ {algorithm}, we use the incremental gcds of $\boldsymbol{F} = (P^{(0)},P^{(1)},\ldots,P^{(n)})$ instead of repeated gcds. Their equivalence seems to be quite straightforward. 
However, we could not find a reference for its proof and thereby we provide a detailed one in Appendix B for readers' reference (see Lemma \ref{lem:equiv_rpgcd_icgcd}).
\end{remark}

\section{Comparison}\label{sec:comparison}

In this section, we compare the ``sizes" of the conditions generated by the Algorithm {\sf Parametric\-Complete\-Multiplicity} and those by  \cite{1996_Yang_Hou_Zeng} (abbreviated as YHZ's method).  

\subsection{Review on YHZ's condition}
We start by reproducing the result given by YHZ's method
\label{lem:YHZ} for readers' convenience. 

Assume $P$ is of degree $n$ with the complete multiplicity $\boldsymbol{\mu_c}=(\boldsymbol{\mu_R};\boldsymbol{\mu_I})$. Let $\boldsymbol{\mu}$ be the complex multiplicity of $P$. 
For the sake of simplicity, we introduce the following shorthand notations.

\begin{notation}\ 
        \label{notation:comparison}
\begin{itemize}

\item $\tilde{G}_{i}=\left\{
\begin{array}
[c]{lll}%
P & \text{if } & i=0;\\
R_{\bar{\mu}_{i}}\left(  \tilde{G}_{i-1},\tilde{G}_{i-1}'\right)  & \text{if} & i>0;
\end{array}
\right.  $

\item $s_{i}=\sum\limits_{k=i+1}^{\mu_1}\bar{\mu}_k$, i,e., $s_i$ is the degree of $\tilde{G}_i$ in $x$;

\item Let $\overline{R_{k}\left(
\tilde{G}_i,\tilde{G}'_i\right)}$ represent the principal coefficient of $R_{k}\left(
\tilde{G}_i,\tilde{G}'_i\right)  $, that is, the coefficient of the term $x^{s_i-k}$ in $R_{k}\left(
\tilde{G}_i,\tilde{G}'_i\right)  $.
\end{itemize}
\end{notation}

It is easy to see that 
$\tilde{G}_i$ is the multiple factor of $P$ at the $i$-th level. 
By Theorem \ref{thm:YHZ}, we have%
\begin{align}
&\operatorname*{cmult}\left(  P\right)  =(\boldsymbol{\mu_R};\boldsymbol{\mu_I})\notag\\
\Longleftrightarrow
\ \ \ &\left(  \bigwedge\limits_{i=0}^{\mu_{1}-2}\ \ \bigwedge\limits_{j=\bar{\mu}_{i+1}+1}%
^{s_i}D_{j}(\tilde{G}_i,\tilde{G}_{i}')=0\right)  \ \ \ \wedge
\ \ D_{s_{\mu_1-1}\left(  \tilde{G}_{\mu_{1}-1},\tilde{G}_{\mu_{1}-1}'\right)  }\neq0\notag\\
 \wedge\ &\left(  \bigwedge\limits_{i=0}^{\mu_{1}-1}\ \ {\rm Var}\left(
 D_1(\tilde{G}_i,\tilde{G}_{i}'),\ldots,
 D_{\bar{\mu}_{i+1}}(\tilde{G}_i,\tilde{G}_{i}')
 \right)=\bar{\mu}_{I,i+1}/2
\right)\label{eqs:YHZcond}
\end{align}

Theoretical analysis on YHZ's method in the rest of this section is based on the following natural assumption which is also assumed in  \rm\cite{2021_Hong_Yang_arxiv}.
\begin{assumption}
\label{assumption:nonzero} 
We will assume that  $\forall\,\boldsymbol{\mu}\in\mathcal{M}(n)$, $\overline{R_{s_i}\left(
\tilde{G}_{i},\tilde{G}_{i}'\right)  }$ in the above YHZ's condition is not identically $0$ as a polynomial
in terms of the coefficients of $P$.
\end{assumption}

\subsection{Comparison}
We compare the ``sizes" of the polynomial conditions generated by the ${\sf ParametricCompleteMultiplicity}$  algorithm and those given by YHZ's method from the following two aspects:
\begin{itemize}
\item the numbers of polynomials for partitioning the parameter set, and 
\item the degrees of polynomials appearing in the conditions.  
\end{itemize}
Furthermore, we compare the computational efficiency of the two methods via practical examples.

\subsubsection{Comparison on the number of polynomials}
For simplicity, we introduce the following notations:
\begin{itemize}
\item $T_{\rm YHZ}(n)$:  the number of polynomials appearing in YHZ's condition;
\item $T_{\rm QXY}(n)$:  the number of polynomials appearing in the new condition (produced by Algorithm \ref{alg:rpm}).
\end{itemize}
As usual, we assume $P$ to be a univariate polynomial of degree $n\ge2$ with formal coefficients. 

We start by deducing a recursive formula for computing $T_{\rm YHZ}(n)$.
In YHZ's method, the main idea is to restore the complete multiplicity of $P$ by counting the number of distinct real/imaginary roots of $\tilde{G}_0,\ldots,\tilde{G}_{n-1}$. Thus one only needs to discriminate the cases when $\tilde{G}_i$ has different numbers of distinct real roots/pairs of imaginary roots. For this purpose, only the polynomials in the discriminant sequence of $\tilde{G}_i$'s are needed. 

In the first step, we compute  all the potential gcds of $\tilde{G_0}$ and $\tilde{G}_0'$, i.e.,  $R_n(\tilde{G}_0,\tilde{G}_0')$, \ldots, $R_{1}(\tilde{G}_0,\tilde{G}_0')$. We also need $D_1(\tilde{G}_0,\tilde{G}_0'),\ldots,D_n(\tilde{G}_0,\tilde{G}_0')$ to discriminate the numbers of real roots/pairs of imaginary roots $\tilde{G}_0$ may have.
In this step, we need $n$ polynomials in the parameters.
Note that each $R_i(\tilde{G}_0,\tilde{G}_0')$ is a potential gcd of $\tilde{G}_0$ and $\tilde{G}_0'$ with degree $n-i$ where $i=1,\ldots,n-1$. When $i=n-1$, $\deg R_i(\tilde{G}_0,\tilde{G}_0')=1$. We immediately know that $R_i(\tilde{G}_0,\tilde{G}_0')$ has only one (real) root and there are no other cases to be discriminated. So we do not need extra polynomials. 
 On the other hand, to discriminate all the complete multiplicities of $\tilde{G}_i$, we need $T_{\rm YHZ}(n-i)$ polynomials. As a result, when $n\ge2$, we have
\[T_{\rm YHZ}(n)=n+\sum_{i=1}^{n-2}T_{\rm YHZ}(n-i)=n+\sum_{i=2}^{n-1}T_{\rm YHZ}(i)\]
One can easily derive from the relation that when $n\ge3$,
\[T_{\rm YHZ}(n)-T_{\rm YHZ}(n-1)
=1+T_{\rm YHZ}(n-1)
\]
Therefore,
\[
T_{\rm YHZ}(n)
=\frac{3}{4}\cdot2^n-1
\]

Now we consider the number of polynomials required by the new method to discriminate all the complete multiplicities of $P$, which is denoted by $T_{\rm QXY}(n)$. To discriminant all the complex multiplicities
of $P$, we need $R_{\boldsymbol{\mu}}$'s where $\boldsymbol{\mu}\in\mathcal{M}(n)$ and the total number of  $R_{\boldsymbol{\mu}}$'s is $\#\mathcal{M}(n)$. To discriminate all the complete multiplicities of $P$, we need to consider all the possibilities of $G_i$, which form a set $\{R_{\boldsymbol{\delta}}: \boldsymbol{\delta}\in\mathcal{N}(n)\}$ by Theorem \ref{thm:icgcd} with $$\mathcal{N}(n)=\left\{(\delta_1,\ldots,\delta_t):\,\delta_1+\cdots+\delta_t< n, \delta_1\ge\ldots\ge\delta_t\ge1\right\}.$$ 
Since $\mathcal{N}(n)=\bigcup_{i=0}^{n-1}\mathcal{M}(i)$, we have \[\{R_{\boldsymbol{\delta}}: \boldsymbol{\delta}\in\mathcal{N}(n)\}=\bigcup_{i=0}^{n-1}\{R_{\boldsymbol{\delta}}: \boldsymbol{\delta}\in\mathcal{M}(i)\}.\]
In each subset of the right-hand side, $R_{\boldsymbol{\delta}}$'s have the same degree $n-i$. When $i=n-1$, $R_{\boldsymbol{\delta}}$'s are of degree $1$ and we will not add any polynomial to the condition. When $0\le i\le n-2$, to discriminate all the possible numbers of distinct real/imaginary roots for each $R_{\boldsymbol{\delta}}$, we need $D_1(R_{\boldsymbol{\delta}}'), R_{\boldsymbol{\delta}}, \ldots, D_{n-i}(R_{\boldsymbol{\delta}}, R_{\boldsymbol{\delta}}')$ and thus altogether $\#\mathcal{M}(i)\cdot (n-i)$ polynomials are required to discriminate the numbers of distinct real/imaginary roots for all $R_{\boldsymbol{\delta}}$'s of degree $n-i$. In summary, in the condition generated by the proposed method, the number of polynomials is
\[T_{\rm QXY}(n)=\#\mathcal{M}(n)+\sum_{i=0}^{n-2}\#\mathcal{M}(i)\cdot (n-i)\]

\begin{table}[htb]
\begin{center}
\caption{Comparison on the numbers of polynomials in the conditions generated by YHZ's method and the proposed method, respectively.}\label{tab:comparison_num_polys}
\begin{tabular}{|c|r|r|r|r|r|r|r|r|}\hline
$n$ & 3 & 4 & 5 & 6 & 7 & 8 & 9 & 10  \\\hline
$T_{\rm YHZ}$ & 5& 11& 23& 47& 95& 191& 383& 767 \\\hline
$T_{\rm QXY}$ & 8& 16& 28& 49& 79& 127& 195& 296 \\\hline
$\dfrac{T_{\rm YHZ}}{T_{\rm QXY}}$ & 0.625& 0.688& 0.821& 0.959& 1.203& 1.504& 1.964& 2.591 \\\hline\hline
$n$ & 11 & 12 & 13 & 14 & 15 & 16 & 17 & 18  \\\hline
$T_{\rm YHZ}$ & 1535& 3071& 6143& 12287& 24575& 49151& 98303& 196607 \\\hline
$T_{\rm QXY}$ & 437& 639& 914& 1297& 1812& 2510& 3436& 4670 \\\hline
$\dfrac{T_{\rm YHZ}}{T_{\rm QXY}}$ & 3.513& 4.806& 6.721& 9.473& 13.562& 19.582& 28.610& 42.100 \\\hline
\end{tabular} 
\end{center}
\end{table}

In Table \ref{tab:comparison_num_polys}, we list the two numbers, i.e., $T_{\rm YHZ}$ and $T_{\rm QXY}$, for $n=3,\ldots,18$.
In Figure \ref{fig:comparison_num_polys}, we show the trend of the two numbers as $n$ increases. It is observed that
\begin{itemize}
\item 
As expected, $T_{\rm YHZ}$ increases exponentially with $n$ while $T_{\rm QXY}$ is not. 
In fact, it is seen from Figure \ref{fig:number} that $T_{\rm QXY}$ increases significantly slower than $T_{\rm YHZ}$ as $n$ increases. 

\item From  Fig. \ref{fig:ratio}, we can further see that the ratio of $T_{\rm YHZ}$  to  $T_{\rm QXY}$ is always greater than $1$ for $n\ge7$, which indicates that $T_{\rm QXY}<T_{\rm YHZ}$ when $n\ge7$.

\item Moreover, Fig. \ref{fig:ratio} exhibits an exponential increase on the ratio  $T_{\rm YHZ}/T_{\rm QXY}$ along with $n$. Thus,  $T_{\rm YHZ}$ becomes dramatically smaller than $T_{\rm YHZ}$ when $n$ grows.
\end{itemize}

\begin{figure}[htb]
  \centering
  \begin{subfigure}[b]{0.49\textwidth}
    \includegraphics[width=\textwidth]{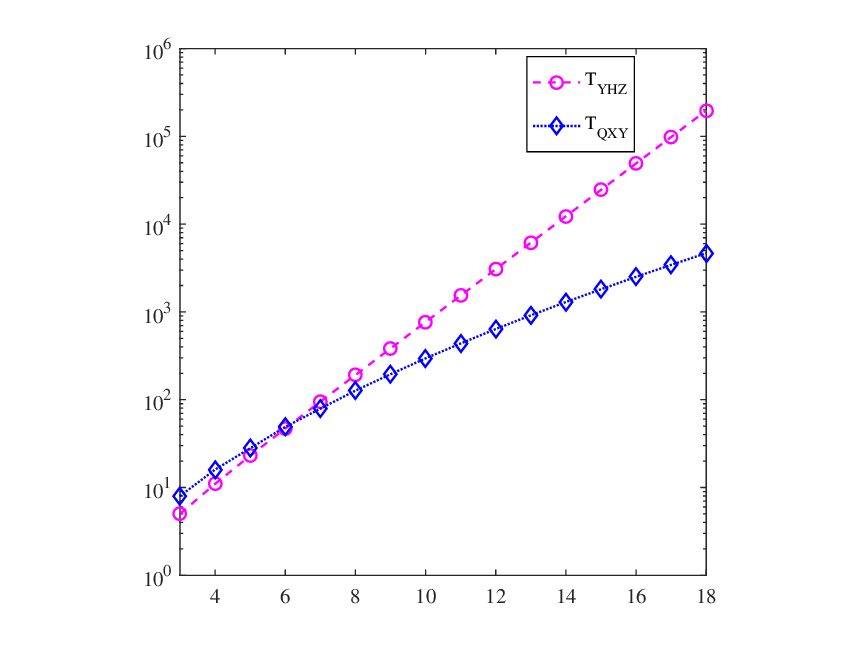}
    \caption{The plots of numbers of polynomials in the conditions generated by the two methods}
    \label{fig:number}
  \end{subfigure}
  \hfill 
  \begin{subfigure}[b]{0.49\textwidth}
    \includegraphics[width=\textwidth]{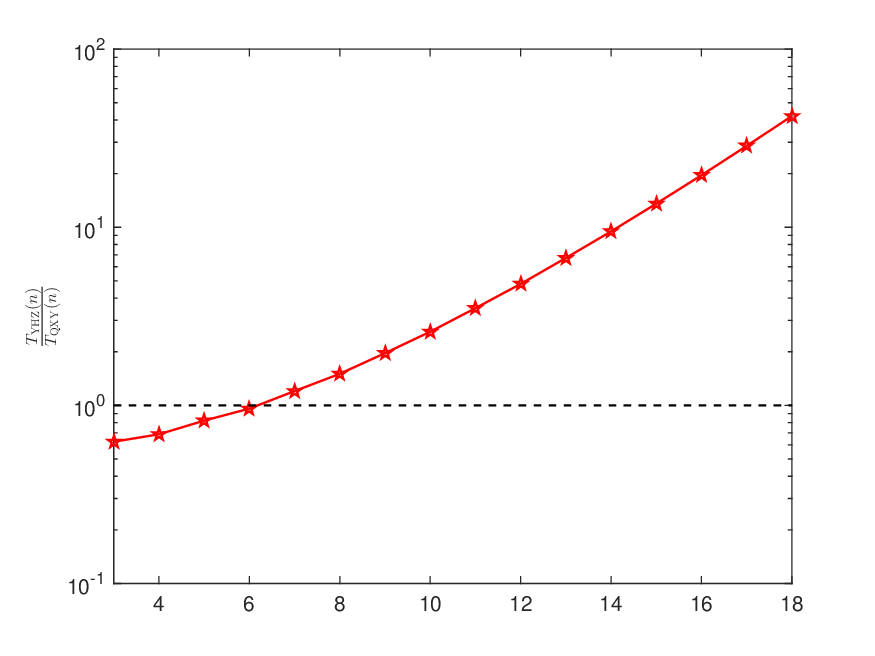}
    \caption{The ratio of the two numbers of polynomials in the conditions generated by the two methods }
    \label{fig:ratio}
  \end{subfigure}
  \caption{Comparison on the numbers of polynomials in the conditions generated by YHZ's method and the proposed method.}\label{fig:comparison_num_polys}
\end{figure}

\subsubsection{Comparison on the degree of polynomials}

In this subsubsection, we compare the degrees of polynomials appearing in the conditions. We choose the maximal degree as an indicator for this assessment. For simplicity, we introduce the following notations:
\begin{itemize}
\item $d_\text{YHZ}$: the maximal degree of polynomials appearing in YHZ's condition;
\item $d_\text{QXY}$: the maximal degree of polynomials appearing in the new condition (produced by Algorithm \ref{alg:rpm}).
\end{itemize}

Under Assumption \ref{assumption:nonzero}, we have the following proposition.

\begin{proposition} \label{prop:maxdegree_YHZ}
 $d_{\rm YHZ}\ge3^{\lfloor{n}/{2}\rfloor}$.  
\end{proposition}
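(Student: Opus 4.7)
The plan is to pick a convenient complex multiplicity $\boldsymbol{\mu}\in\mathcal{M}(n)$ and isolate a single polynomial of the corresponding YHZ condition whose degree in the parameters $a_0,\ldots,a_n$ is at least $3^{\lfloor n/2\rfloor}$. I would take $\boldsymbol{\mu}=(a,b)$ with $a=\lceil n/2\rceil$ and $b=\lfloor n/2\rfloor$; any 2-partition $\boldsymbol{\mu_c}=(\boldsymbol{\mu_R};\boldsymbol{\mu_I})$ of this $\boldsymbol{\mu}$ suffices. Then the conjugate satisfies $\bar{\mu}_i=2$ for $1\le i\le b$, so the first $b$ steps of YHZ's recursion take the uniform form $\tilde G_i=R_2(\tilde G_{i-1},\tilde G_{i-1}')$, a chain of $b$ successive $R_2$ operations.

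The core estimate is the growth of $e_i:=\deg_{a_0,\ldots,a_n}\tilde G_i$ (the largest parameter-degree of any $x$-coefficient of $\tilde G_i$) along this chain. For any $F$ of $x$-degree $p\ge 2$, the defining matrix of $R_2(F,F')$ is $3\times(p+1)$, with one row of $F$-coefficients and two rows of $F'$-coefficients, so every $x$-coefficient of $R_2(F,F')$ is a $3\times 3$ determinant in integer multiples of the coefficients of $F$, giving $e_i\le 3e_{i-1}$. An explicit expansion of the topmost $3\times 3$ minor produces the leading $x$-coefficient
\[2p\,c_p^{\,2}c_{p-2}-(p-1)\,c_pc_{p-1}^{\,2},\]
which is a non-zero polynomial of degree exactly $3$ in $c_0,\ldots,c_p$. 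Iterating from $e_0=1$ and invoking Assumption \ref{assumption:nonzero} to keep the principal subresultant coefficient from collapsing at each level, I conclude $e_i=3^i$ for every $0\le i\le b$.

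Finally, I would exhibit the witnessing polynomial. Every principal minor $D_j(F,F')$ of the discrimination matrix of $F$ is a $2j\times 2j$ determinant whose entries are scaled coefficients of $F$, hence of total degree $2j$ in those coefficients. For even $n=2b$, the polynomial $D_2(\tilde G_{b-1},\tilde G_{b-1}')=D_{s_{\mu_1-1}}(\tilde G_{\mu_1-1},\tilde G_{\mu_1-1}')$ appears in the condition \eqref{eqs:YHZcond} (since $s_{b-1}=2$ and $b-1=\mu_1-1$) and has degree $4\cdot 3^{b-1}=\tfrac{4}{3}\cdot 3^{n/2}>3^{\lfloor n/2\rfloor}$ in $a_0,\ldots,a_n$. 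For odd $n=2b+1$, the polynomial $D_1(\tilde G_b,\tilde G_b')=D_{s_{\mu_1-1}}(\tilde G_{\mu_1-1},\tilde G_{\mu_1-1}')$ appears (since $s_b=1$ and $b=\mu_1-1$) and has degree $2\cdot 3^b>3^{\lfloor n/2\rfloor}$. In either parity, this yields $d_{\rm YHZ}\ge 3^{\lfloor n/2\rfloor}$.

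The main obstacle is the non-cancellation step in the second paragraph. The upper bound $e_i\le 3^i$ is automatic from the Sylvester matrix size, but matching it from below requires ruling out algebraic identities on $a_0,\ldots,a_n$ that would collapse the degree of the leading $x$-coefficient of $\tilde G_i$. The explicit leading-minor formula handles the base case, and Assumption \ref{assumption:nonzero} then propagates the non-vanishing of the principal subresultant coefficient through all $b$ levels of the recursion.
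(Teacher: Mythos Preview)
Your approach differs from the paper's. The paper does not track degrees through the recursion at all; it simply invokes \cite[Lemma~17-3]{2021_Hong_Yang_arxiv}, which supplies the bound $d_{\boldsymbol{\mu}}\ge 2n+3^{\mu_2}-4\mu_2$ for any $\boldsymbol{\mu}\in\mathcal{M}(n)$ with $2\le m\le n-2$, and then specializes to $\mu_2=\lfloor n/2\rfloor$. You instead choose $\boldsymbol{\mu}=(\lceil n/2\rceil,\lfloor n/2\rfloor)$ and follow the coefficient degrees along the chain $\tilde G_i=R_2(\tilde G_{i-1},\tilde G_{i-1}')$ by hand, which is a reasonable self-contained strategy (and even handles the small cases $n=2,3$ that the citation-based route skips).

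There is, however, a real gap in your step ``$e_i=3^i$''. You correctly get $e_i\le 3e_{i-1}$ from the $3\times 3$ minor structure, and you correctly compute $L_i:=\overline{R_2(\tilde G_{i-1},\tilde G_{i-1}')}=2p\,c_p^{2}c_{p-2}-(p-1)c_pc_{p-1}^{2}$. Assumption~\ref{assumption:nonzero} is about $\overline{R_{s_i}(\tilde G_i,\tilde G_i')}$, and since that determinant has $L_i$ as a factor (expand along its first column, whose only nonzero entries are $L_i$ and $s_iL_i$), the assumption does force $L_i\ne 0$. But non-vanishing of $L_i$ does \emph{not} by itself rule out cancellation that drops $\deg_a L_i$ strictly below $3e_{i-1}$: the $c_j$ are complicated polynomials in $a_0,\dots,a_n$, and the top-degree parts of $2p\,c_p^{2}c_{p-2}$ and $(p-1)c_pc_{p-1}^{2}$ could in principle cancel. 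The same issue recurs when you assert that $D_2(\tilde G_{b-1},\tilde G_{b-1}')$ has degree exactly $4\cdot 3^{b-1}$.

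The missing ingredient is homogeneity. Every entry of the $3\times(p+1)$ matrix defining $R_2(F,F')$ is an integer multiple of some $c_j$, so each $x$-coefficient of $R_2(F,F')$ is a homogeneous form of degree $3$ in the $c_j$. By induction, every $x$-coefficient of $\tilde G_i$ is homogeneous of degree $3^i$ in $a_0,\dots,a_n$; likewise $D_j(\tilde G_i,\tilde G_i')$ is homogeneous of degree $2j\cdot 3^i$. Once this is recorded, ``non-zero'' and ``degree exactly $3^i$'' (respectively $2j\cdot 3^i$) become equivalent, and your argument goes through cleanly.
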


\begin{proof}
Note that when $1\le j\le \deg\tilde{G}_i-1$,
\[D_j\left(  \tilde{G}_{i},\tilde{G}_i'\right)=\pm \operatorname*{lcoef}(\tilde{G}_i)\cdot \overline{R_{j}\left(  \tilde{G}_{i},\tilde{G}_i'\right)  }\]
Thus $\deg_a D_j\left(  \tilde{G}_{i},\tilde{G}_i'\right)\ge\deg_a \overline{R_{j}\left(  \tilde{G}_{i},\tilde{G}_i'\right)  }$. Thereby,
\begin{align*}
d_{\rm YHZ}\ =&\ \max_{\boldsymbol{\mu}\in\mathcal{M}(n)}\max\Bigg(  \bigcup\limits_{i=0}^{\mu_{1}-2}\ \ \bigcup\limits_{j=\bar{\mu}_{i+1}+1}%
^{s_i}\left\{\deg_aD_j\left(  \tilde{G}_{i},\tilde{G}_i'\right)\right\} \ \cup
\ \ \left\{\deg D_{s_{\mu_1-1}}\left(  \tilde{G}_{\mu_{1}-1},\tilde{G}_{\mu_{1}-1}'\right) \right\}\notag\\
 &\ \qquad\qquad\qquad\cup\ \bigcup\limits_{i=0}^{\mu_{1}-1}\ \bigcup_{j=1}^{\bar{\mu}_{i+1}}
 \left\{\deg_a D_j\left(  \tilde{G}_{i},\tilde{G}_i'\right)\right\}
\Bigg)\\
\ge&\ \max_{\boldsymbol{\mu}\in\mathcal{M}(n)}\max\Bigg(  \bigcup\limits_{i=0}^{\mu_{1}-2}\ \ \bigcup\limits_{j=\bar{\mu}_{i+1}+1}%
^{s_i}\left\{\deg_a\overline{R_{j}\left(  \tilde{G}_{i},\tilde{G}_i'\right)  }\right\} \ \cup
\ \ \left\{\deg\overline{R_{s_{\mu_1-1}}\left(  \tilde{G}_{\mu_{1}-1},\tilde{G}_{\mu_{1}-1}'\right)  }\right\}\notag\\
 &\ \qquad\qquad\qquad\cup\ \bigcup\limits_{i=0}^{\mu_{1}-1}\ \bigcup_{j=1}^{\bar{\mu}_{i+1}}
 \left\{\deg_a D_j\left(  \tilde{G}_{i},\tilde{G}_i'\right)\right\}
\Bigg)
\end{align*}
For the sake of simplicity, we use the following shorthand
notation:
\begin{align*}
d_{\boldsymbol{\mu}}\ =&\ \max\Bigg(  \bigcup\limits_{i=0}^{\mu_{1}-2}\ \ \bigcup\limits_{j=\bar{\mu}_{i+1}+1}%
^{s_i}\left\{\deg_a\overline{R_{j}\left(  \tilde{G}_{i},\tilde{G}_i'\right)  }\right\} \ \cup
\ \ \left\{\deg\overline{R_{s_{\mu_1-1}}\left(  \tilde{G}_{\mu_{1}-1},\tilde{G}_{\mu_{1}-1}'\right)  }\right\}\notag\\
 &\ \qquad\quad\cup\ \bigcup\limits_{i=0}^{\mu_{1}-1}\ \bigcup_{j=1}^{\bar{\mu}_{i+1}}
 \left\{\deg_a D_j\left(  \tilde{G}_{i},\tilde{G}_i'\right)\right\}
\Bigg)
\end{align*}
Then $d_{\rm YHZ}\ge\max_{\boldsymbol{\mu}\in\mathcal{M}(n)}d_{\boldsymbol{\mu}}$.

By \cite[Lemma 17-3]{2021_Hong_Yang_arxiv}, when $2\le m\le n-2$,
$d_{\boldsymbol{\mu}}\ge2n+3^{\mu_2}-4\mu_2$.
Thus
\begin{align*}
d_{\rm YHZ}\ &\ge\ \max_{\boldsymbol{\mu}\in\mathcal{M}(n)}d_{\boldsymbol{\mu}}\\
 &\ge\ \max_{\substack{\boldsymbol{\mu}\in\mathcal{M}(n)\\2\le m\le n-2}}d_{\boldsymbol{\mu}}\\
 &\ge\ 2n+3^{\mu_2}-4\mu_2\\ 
 &=\ 2n+3^{\lfloor{n}/{2}\rfloor}-4\lfloor{n}/{2}\rfloor\\
 &\ge\ 3^{\lfloor{n}/{2}\rfloor}
\end{align*}
\end{proof}

For the maximal degree of polynomials in the new condition presented in this paper, we  prove the following proposition.
\begin{proposition}\label{prop:maxdegree_QXY}
When $n\ge3$, $d_{\rm QXY} = n(n-1).$
\end{proposition}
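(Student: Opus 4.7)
The plan is to enumerate all polynomials appearing in the conditions produced by Algorithm~\ref{alg:rpm}, bound their $a$-degrees, and then exhibit one that attains the upper bound $n(n-1)$. Inspecting \eqref{eqs:QXYcond} one sees that these polynomials split into two families: (F1) the scalar subresultants $R_{\boldsymbol{\gamma}}(\boldsymbol{F})$ with $\boldsymbol{\gamma}\in\mathcal{M}(n)$, coming from the ``$\operatorname{mult}(P)=\boldsymbol{\mu}$'' conjuncts, and (F2) the discriminant-sequence entries $D_j(G_i,G_i')$ for $\boldsymbol{\mu}\in\mathcal{M}(n)$, $0\leq i\leq\mu_1-1$, $1\leq j\leq\bar\mu_{i+1}$, where $G_i:=R_{(\bar\mu_1,\ldots,\bar\mu_i)}(\boldsymbol{F})$ with the convention $G_0:=P$.

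Every row of $\boldsymbol{M}_{\boldsymbol{\delta}}(\boldsymbol{F})$ is linear in $a_0,\ldots,a_n$, and for any partition $\boldsymbol{\delta}$ the formula in~\eqref{eqs:delta0} simplifies to $\delta_0=\delta_1-1$ (because $\max_i(\delta_i-i)$ is attained at $i=1$). Hence $\deg_a R_{\boldsymbol{\delta}}(\boldsymbol{F})\leq\delta_0+|\boldsymbol{\delta}|\leq 2n-1$, which is strictly below $n(n-1)$ for $n\geq 3$, so (F1) is controlled. For (F2), $D_j(G_i,G_i')$ is a $2j\times 2j$ principal minor of the Sylvester matrix of $G_i$ and $G_i'$, whence $\deg_a D_j(G_i,G_i')\leq 2j\cdot\deg_aG_i$. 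Applying the (F1) bound to $G_i$ and using $\sum_{k=1}^{\mu_1}\bar\mu_k=n$ gives $\deg_a G_i\leq(\bar\mu_1-1)+(n-\bar\mu_{i+1})$, so with $a:=\bar\mu_1$ and $b:=\bar\mu_{i+1}$,
\[
\deg_a D_j(G_i,G_i')\ \leq\ 2b(a+n-b-1),
\]
subject to $1\leq b\leq a$ together with the admissibility constraint $a+ib\leq n$ (forced by $\bar\mu_2,\ldots,\bar\mu_i\geq b$). A short optimization shows that the $i\geq 2$ slices are strictly dominated by $i=1$; in the latter case $a+b=n$ is saturable by choosing $\boldsymbol{\mu}$ with only parts of size $1$ and $2$, and substituting $a=n-b$ reduces the bound to $2b(2n-2b-1)$, whose integer maximum over $[1,\lfloor n/2\rfloor]$ equals $n(n-1)$ at $b=\lfloor n/2\rfloor$. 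This proves the upper bound $d_{\rm QXY}\leq n(n-1)$.

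For the matching lower bound I take $\boldsymbol{\mu}=(2,\ldots,2)$ when $n$ is even, or $(2,\ldots,2,1)$ when $n$ is odd, so that $a=\lceil n/2\rceil$, $b=\lfloor n/2\rfloor$, and focus on $D_b(G_1,G_1')$. I first confirm $\deg_a G_1=2a-1$ by inspecting the leading-in-$x$ coefficient of $G_1=R_{(a)}(P,P')$, which is a $(2a-1)\times(2a-1)$ determinant and can be shown to be non-vanishing of full degree in $a_0,\ldots,a_n$. I then show that $\deg_a D_b(G_1,G_1')=2b(2a-1)=n(n-1)$ is realized by isolating a non-cancelled monomial of this degree in the Leibniz expansion of the Sylvester determinant. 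The anticipated main obstacle is precisely this non-cancellation step; I would tackle it by selecting a monomial order (e.g., lex with $a_n>\cdots>a_0$) under which the leading monomials of the coefficients $c_0,\ldots,c_{n-a}$ of $G_1$ in $x$ are pairwise distinct, so that distinct Leibniz terms produce distinct leading monomials and no top-degree cancellation can occur. The identity $D_2(G_1,G_1')=c_2^2(c_1^2-4c_0c_2)$ for $n=4$ illustrates this mechanism: its two top-degree terms $c_2^2c_1^2$ and $c_0c_2^3$ expand under lex into distinct monomials in $a_0,\ldots,a_4$, confirming $\deg_a D_2=12=n(n-1)$.
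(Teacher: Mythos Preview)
Your overall architecture matches the paper's: split the polynomials in \eqref{eqs:QXYcond} into the scalar subresultants $R_{\boldsymbol{\gamma}}$ and the discriminant entries $D_j(G_i,G_i')$, bound the former by $2n-1$, bound the latter by $2j\cdot\deg_a G_i$, and optimize. Your upper bound is correct; your parameterization $(a,b)=(\bar\mu_1,\bar\mu_{i+1})$ together with the observation that the $i\ge 2$ feasible region is contained in the $i=1$ region is a clean way to reach $n(n-1)$, and is essentially the same optimization the paper carries out (the paper phrases it with $c=\bar\mu_{\mu_1}$ and a Young-tableau constraint, arriving at the same extremal $\boldsymbol{\mu}$).

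The gap is in your lower-bound step. The implication ``pairwise distinct leading monomials of $c_0,\ldots,c_b$ $\Rightarrow$ distinct leading monomials of the Leibniz terms $c^{\alpha}$'' is false in general: if $\mathrm{lm}(c_0)=u^2$, $\mathrm{lm}(c_1)=uv$, $\mathrm{lm}(c_2)=v^2$ then $\mathrm{lm}(c_0c_2)=\mathrm{lm}(c_1^2)$. Your $n=4$ check succeeds only because of the stronger structural fact $\mathrm{lm}(c_k)=(\text{const})\,a_4^{2}a_k$, which makes $\alpha\mapsto\mathrm{lm}(c^\alpha)$ injective; you would have to \emph{prove} this pattern $\mathrm{lm}(c_k)\propto a_n^{2a-2}a_k$ for all $n$, not merely pairwise distinctness.

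A much shorter route---and the one the paper is tacitly using when it writes the degree identities as equalities---is homogeneity plus nonvanishing. You already observed that every row of the subresultant matrix is linear in $a_0,\ldots,a_n$; hence each coefficient $c_k$ of $G_1$ is either zero or \emph{homogeneous} of exact degree $2a-1$, and $D_b$ is homogeneous of degree $2b$ in the $c_k$'s. Therefore $D_b(G_1,G_1')$, as a polynomial in $a$, is either identically zero or homogeneous of exact degree $2b(2a-1)=n(n-1)$. So the lower bound reduces to exhibiting a single $P$ for which $G_1=R_{(a)}(P,P')$ has degree $b$ in $x$ and is square-free (equivalently $D_b\neq 0$); any generic square-free $P$ will do, and a single numerical instance settles it. This replaces your leading-monomial bookkeeping entirely.
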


\begin{proof}

Recall \eqref{eqs:QXYcond}. We have
\begin{align*}
d_\text{QXY}\ =&\ \max_{\boldsymbol{\mu}\in\mathcal{M}(n)}\Bigg(\bigcup_{\substack{\boldsymbol{\gamma}\succ\boldsymbol{\bar{\mu}}\\\boldsymbol{\gamma}\in\mathcal{M}(n)}}\left\{
\deg_aR_{\boldsymbol{\gamma}}(\boldsymbol{F})\right\} \ \cup
\ \{\deg_aR_{\boldsymbol{\bar{\mu}}}(\boldsymbol{F})\}\ \cup\ \bigcup\limits_{i=0}^{\mu_{1}-1}\bigcup\limits_{j=1}^{\bar{\mu}_{i+1}}\left\{\deg_a D_j(G_i,G_{i}')\right\}\Bigg)\\
=&\ \max_{\boldsymbol{\mu}\in\mathcal{M}(n)}\Bigg(\max_{\substack{\boldsymbol{\gamma}\succeq\boldsymbol{\bar{\mu}}\\\boldsymbol{\gamma}\in\mathcal{M}(n)}}
\deg_aR_{\boldsymbol{\gamma}}(\boldsymbol{F}), \ \ \ \max_{0\le i\le\mu_{1}-1}\deg_a D_{\bar{\mu}_{i+1}}(G_i,G_{i}')\Bigg)
\end{align*}
It was shown in \cite{2024_Hong_Yang:non-nested} that \[\max_{\substack{\boldsymbol{\gamma}\succeq\boldsymbol{\bar{\mu}}\\\boldsymbol{\gamma}\in\mathcal{M}(n)}}
\deg_aR_{\boldsymbol{\gamma}}(\boldsymbol{F})=2n-\mu_m\]
Thus
\begin{align*}
d_\text{QXY}\ &=\ \max_{\boldsymbol{\mu}\in\mathcal{M}(n)}\bigg(2n-\mu_m,\ \  \max_{0\le i\le\mu_{1}-1}\deg_a D_{\bar{\mu}_{i+1}}(G_i,G_{i}')\bigg)\\
&=\ \max\bigg(2n-1,\ \  \max_{\boldsymbol{\mu}\in\mathcal{M}(n)}\max_{0\le i\le\mu_{1}-1}\deg_a D_{\bar{\mu}_{i+1}}(G_i,G_{i}')\bigg)\\
&=\ \max\bigg(2n-1,\ \  \max_{\boldsymbol{\mu}\in\mathcal{M}(n)}\max_{0\le i\le\mu_{1}-1}2\bar{\mu}_{i+1}\deg_a G_i\bigg)
\end{align*}

Note that 
\[
\deg_a G_i=\left\{
\begin{array}{ll}
1 &  \text{if $i=0;$}\\
(\bar{\mu}_1-1)+(\bar{\mu}_1+\cdots+\bar{\mu}_i) & \text{if $i>0$}. \\
\end{array}
\right.
\]
Hence we can derive that 
\begin{align*}
&\ \max_{\boldsymbol{\mu}\in\mathcal{M}(n)}\max_{0\le i\le\mu_{1}-1}\deg_a D_{\bar{\mu}_{i+1}}(G_i,G_{i}')\\ =&\ 2\max_{\boldsymbol{\mu} \in \mathcal{M}(n)}\max\left(\bar{\mu}_{1},\ \max_{1\le i\le\mu_{1}-1}\bar{\mu}_{i+1}\big((\bar{\mu}_1-1)+(\bar{\mu}_1+\cdots+\bar{\mu}_i)\big)\right)
\end{align*}

Now we consider the following three cases.
\begin{enumerate}
\item 
When $m=1$, i.e., $\boldsymbol{\mu}=(n)$, $\bar{\mu}_1=\cdots=\bar{\mu}_{n}=1$. It follows that
\begin{align*}
&\max\left(\bar{\mu}_{1},\ \max_{1\le i\le\mu_{1}-1}\bar{\mu}_{i+1}\big((\bar{\mu}_1-1)+(\bar{\mu}_1+\cdots+\bar{\mu}_i)\big)\right)=\ \max(1,\ \max_{1\le i\le\mu_{1}-1}i)\ =\ n-1
\end{align*}
Thus,
\[\max_{0\le i\le\mu_{1}-1}\deg_a D_{\bar{\mu}_{i+1}}(G_i,G_{i}')\ =\ 2(n-1)\ < \ 2n-1\]

\item 
When $m=n$, i.e., $\boldsymbol{\mu}=(1,\ldots,1)$, 
\[\max_{0\le i\le\mu_{1}-1}\deg_a D_{\bar{\mu}_{i+1}}(G_i,G_{i}')\ =\ \deg_a D_{\bar{\mu}_{1}}(G_0,G_{0}')\ =\ 2n>2n-1\]

\item When  $2\le m\le n-1$, i.e.,  $\boldsymbol{\mu}\notin\{(n), (1,\ldots,1)\}$, let $b=\mu_1-1$ and $c=\bar{\mu}_{\mu_1}$. Thus $\bar{\mu}_1+\cdots+\bar{\mu}_{\mu_1-1} = n-c$. From the Young tableau of $\boldsymbol{\mu}$ below,
\ytableausetup{centertableaux}
\begin{equation}\label{eq:ytableau}
n-c\left\{
\begin{array}{c}
\\[4em]
\end{array}
\right.
\begin{ytableau}
        \none & \none & \none[b+1] & \none[]& \none[] \\
        \none[\bar{\mu}_1]&\none[] &  &  & &&& \\
        \none[\bar{\mu}_2] &\none[] &  &  & && \none\\
        \none & \none&\none[$\vdots$] & \none[$\vdots$]  \\
        \none[\bar{\mu}_{\mu_1-1}]&\none[] &  &  \\
        \none[\bar{\mu}_{\mu_1}(=c)] &\none[] &  &\none 
\end{ytableau}
\end{equation}
we have $b+c\le n-c$. Now we consider the following function 
$$f(c,b) = c(b+n-c)$$
in $c$ and $b$ whose feasible region (see Fig. \ref{fig:feasible_region}) is bounded by
\[
\left\{
\begin{array}{ll}
1\le c\le n-1, & \text{a natural condition} \\
1< b+1< n, & \text{since $1<\mu_1< n$} \\
b+1 \le n-c, & \text{see \eqref{eq:ytableau}}
\end{array}
\right.\]

\begin{figure}[ht]
\centering
\includegraphics[width=0.6\textwidth]{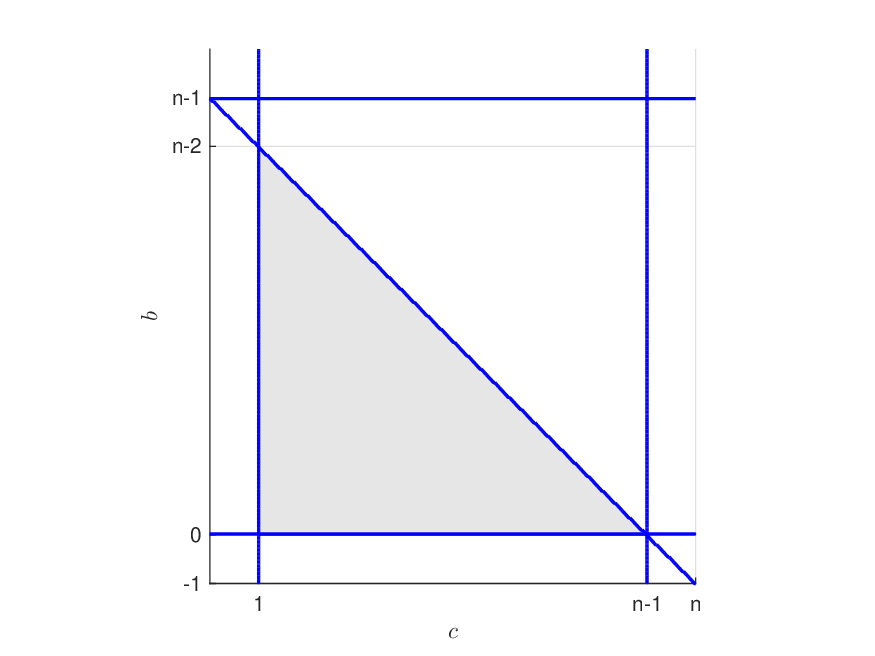}
\caption{The feasible region of $(c,b)$}
\label{fig:feasible_region}
\end{figure}

\noindent Then we have 
\[\dfrac{\partial f}{\partial c}\ =\ (b+n)-2c,\qquad \dfrac{\partial f}{\partial b}\ =\ c.\]
Obviously, $\dfrac{\partial f}{\partial b}\ge1$. Moreover,
\text{since $c\le \bar{\mu}_{\mu_1-1}\le n-c$ which can be deduced from \eqref{eq:ytableau}}, we derive the following:
\[\dfrac{\partial f}{\partial c}=b+\big((n-c)-c\big)\ge0\]
which indicates that $f(c,b)$ increases w.r.t. $c$ and $b$ in the feasible region. Hence, the maximal value of $f$ is achieved on the right boundary of the region, which is defined by $b+1=n-c$. Therefore, the maximal value of $f(c,b)$ over the feasible region is
\begin{align*}
f_{\max}\ =&\ \max_{c\in\{1,\ldots,n\}}f(c,n-c-1)\\
\ =&\ \max_{c\in\{1,\ldots,n\}}c(2n-2c-1)\\
\ =&\ \max_{c\in\{1,\ldots,n\}}\left(-2\Big(c-\frac{2n-1}{4}\Big)^2+\frac{(2n-1)^2}{8}\right).
\end{align*}
We consider the following two cases.
\begin{enumerate}
\item 
When $n$ is even, i.e., $n=2k$ for some integer $k\ge1$, 
\[\max_{c\in\{1,\ldots,n\}}\left(-2\Big(c-\frac{2n-1}{4}\Big)^2+\frac{(2n-1)^2}{8}\right)=\max_{c\in\{1,\ldots,n\}}\left(-2\Big(c-\frac{4k-1}{4}\Big)^2+\frac{(4k-1)^2}{8}\right).\]
The above maximal value is achieved at $c=k=\dfrac{n}{2}$, which immediately yields
\[f_{\max}=c(2n-2c-1)\Big|_{c=\frac{n}{2}}=\frac{n(n-1)}{2}.\]
In this case, $\boldsymbol{\mu}=\left(\dfrac{n}{2},\dfrac{n}{2}\right)$.

\item When $n$ is odd, i.e., $n=2k+1$ for some integer $k\ge0$, 
\[\max_{c\in\{1,\ldots,n\}}\left(-2\Big(c-\frac{2n-1}{4}\Big)^2+\frac{(2n-1)^2}{8}\right)=\max_{c\in\{1,\ldots,n\}}\left(-2\Big(c-\frac{4k+1}{4}\Big)^2+\frac{(4k+1)^2}{8}\right).\]
The above maximal value is achieved at $c=k=\dfrac{n-1}{2}$, which immediately yields
\[f_{\max}=c(2n-2c-1)\Big|_{c=\frac{n-1}{2}}=\frac{n(n-1)}{2}.\]
In this case, $\boldsymbol{\mu}=\left(\dfrac{n+1}{2},\dfrac{n-1}{2}\right)$.
\end{enumerate}
\end{enumerate}

Therefore,
\begin{align*}
d_\text{QXY}\ &=\ \max\bigg(2n,\ \  2\max_{\substack{\boldsymbol{\mu} \in \mathcal{M}(n)\\\boldsymbol{\mu}\notin\{(n),(1,\ldots,1)\}}} \bar{\mu}_{\mu_1}(\bar{\mu}_1-1+\bar{\mu}_1+\cdots+\bar{\mu}_{\mu_1-1})\bigg)\\
&=\ \max\big(2n,\ \  n(n-1) \big).
\end{align*}

Since $2n\le n(n-1)$ for $n\ge3$, we conclude that $d_{\rm QXY}\ =\ n(n-1)$.
\end{proof}

From Propositions \ref{prop:maxdegree_YHZ} and \ref{prop:maxdegree_QXY}, it is immediately seen that the maximal degree of polynomials in the condition given by YHZ's method increases in exponential scale w.r.t. $n$ while that given by QXY's method shows an increase in polynomial scale.
Furthermore, with the help of Propositions \ref{prop:maxdegree_YHZ} and \ref{prop:maxdegree_QXY}, one can easily show the following proposition.

\begin{proposition}
When $n\ge8$, $d_{\rm YHZ}\ge d_{\rm QXY}$.
\end{proposition}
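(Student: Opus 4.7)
The plan is to combine the two preceding propositions directly. By Proposition \ref{prop:maxdegree_YHZ} we have $d_{\rm YHZ} \ge 3^{\lfloor n/2 \rfloor}$, and by Proposition \ref{prop:maxdegree_QXY} we have $d_{\rm QXY} = n(n-1)$ for $n \ge 3$. Hence the statement reduces entirely to the purely numerical inequality
\[
3^{\lfloor n/2 \rfloor} \ge n(n-1) \quad \text{for all } n \ge 8.
\]
So the proof will have one substantive task: verify this inequality.

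I would handle this by splitting on the parity of $n$ and doing an induction of step $2$. For the even case $n = 2k$ with $k \ge 4$, the inequality becomes $3^k \ge 2k(2k-1)$; the base case $k=4$ gives $81 \ge 56$. For the inductive step, assuming $3^k \ge 2k(2k-1)$, one multiplies the left-hand side by $3$ and checks that $3 \cdot 2k(2k-1) \ge 2(k+1)(2k+1)$, which rearranges to $8k^2 - 12k - 2 \ge 0$ and holds for all $k \ge 4$. For the odd case $n = 2k+1$ with $k \ge 4$, one has $\lfloor n/2\rfloor = k$ and the inequality becomes $3^k \ge 2k(2k+1)$; the base case $k=4$ gives $81 \ge 72$, and the inductive step is a nearly identical computation.

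I expect no genuine obstacle here: the exponential $3^{\lfloor n/2\rfloor}$ dominates the quadratic $n(n-1)$ with a healthy margin once $n \ge 8$, so the only care needed is checking the two base cases ($n=8$ and $n=9$) and writing the single inductive inequality $8k^2-12k-2 \ge 0$ neatly. The whole argument fits in a few lines once the two propositions are invoked.
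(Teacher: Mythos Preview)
Your proposal is correct and follows exactly the route the paper indicates: the paper does not spell out a proof but simply remarks that the result follows easily from Propositions~\ref{prop:maxdegree_YHZ} and~\ref{prop:maxdegree_QXY}, which is precisely what you do by reducing to the elementary inequality $3^{\lfloor n/2\rfloor}\ge n(n-1)$ for $n\ge 8$. Your parity-split induction is a clean way to dispatch that numerical check.
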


In Table \ref{tab:comparison_maxdeg}, we list the two numbers, i.e., $d_{\rm YHZ}$ and $d_{\rm QXY}$, for $n=3,\ldots,8,$ while
Fig. \ref{fig:comparison_maxdeg} shows the trend of the two numbers as $n$ increases. From Table \ref{tab:comparison_maxdeg} and
Fig. \ref{fig:comparison_maxdeg}, we make the following observations.

\begin{itemize}
\item 
When $d< 6$, $d_{\rm YHZ}\le d_{\rm QXY}$ and their differences is relatively small (no more than 2, see Table \ref{tab:comparison_maxdeg}). 
Roughly speaking, $d_{\rm QXY}\approx d_{\rm YHZ}$.

\item  When $n$ surpasses $6$, $d_{\rm QXY}$ is significantly smaller than $d_{\rm YHZ}$ (see Fig. \ref{fig:maxdeg}).  
 When $n\ge 4$, the ratio $\dfrac{d_{\rm YHZ}}{d_{\rm QXY}}$ grows as $n$ increases (see Fig. \ref{fig:maxdeg_ratio}). Therefore, the size of the polynomial with the maximal degree in YHZ's method becomes even bigger than that in the proposed method. Hence, it is expected that when $n$ is large enough, the new method will be  far more efficient than YHZ's method.

\end{itemize}

\begin{table}[htb]
\begin{center}
\caption{Comparison on the maximal degrees of polynomials in the conditions generated by YHZ's method and the proposed method.}\label{tab:comparison_maxdeg}
\begin{tabular}{|c|r|r|r|r|r|r|}\hline
$n$ & 3 & 4 & 5 & 6 & 7 & 8   \\\hline
$d_{\rm YHZ}$ & 6& 10& 18& 30& 54& 90 \\\hline
$d_{\rm QXY}$ & 6& 12& 20& 30& 42& 56 \\\hline
$\dfrac{d_{\rm YHZ}}{d_{\rm QXY}}$ & 1.000& 0.833& 0.900& 1.000& 1.286& 1.607 \\\hline
\end{tabular} 
\end{center}
\end{table} 

\begin{figure}[htb]
  \centering
  \begin{subfigure}[b]{0.49\textwidth}
    \includegraphics[width=\textwidth]{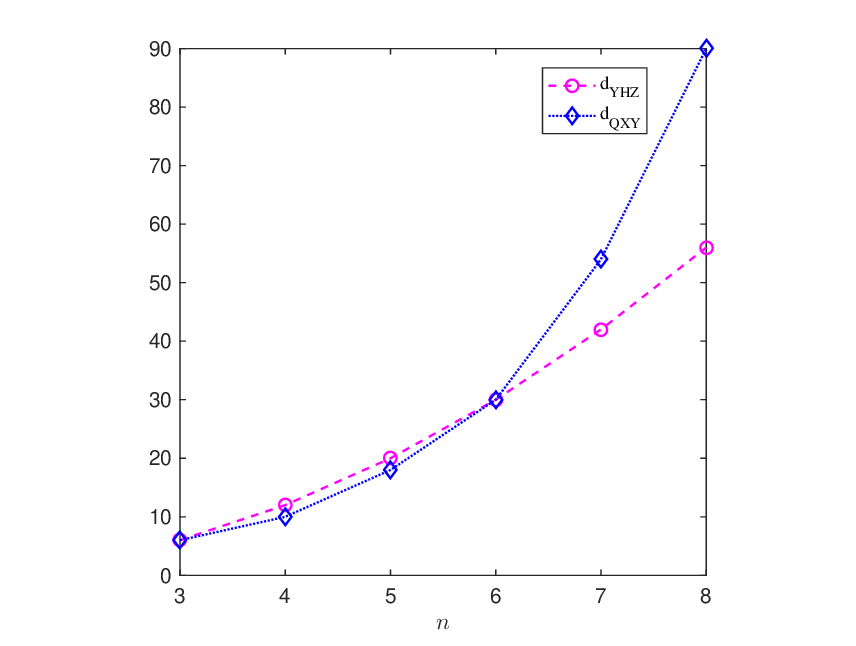}
    \caption{The plots of numbers of polynomials in the conditions generated by the two methods}
    \label{fig:maxdeg}
  \end{subfigure}
  \hfill 
  \begin{subfigure}[b]{0.49\textwidth}
    \includegraphics[width=\textwidth]{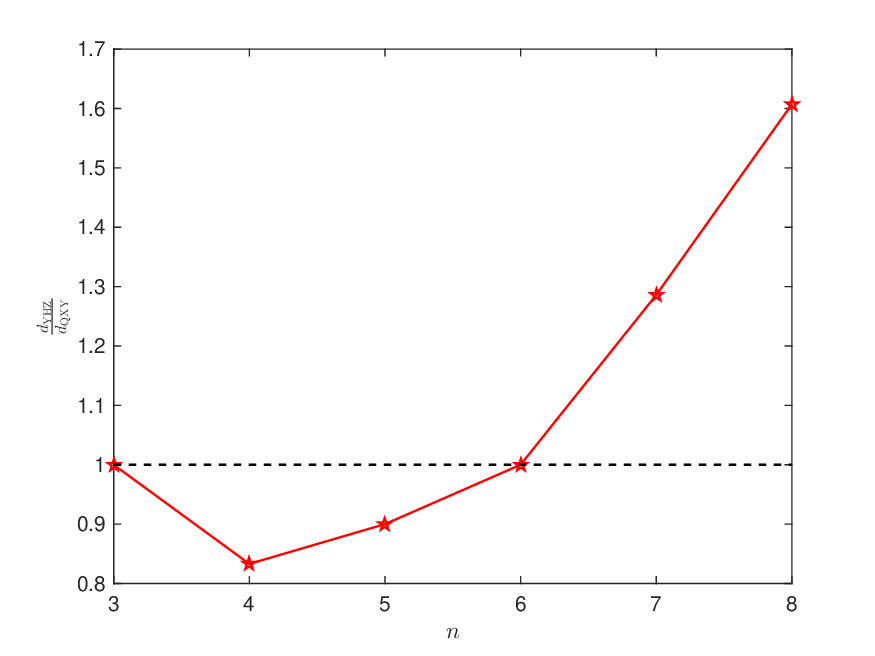}
    \caption{The ratio of the two numbers of polynomials in the conditions generated by the two methods }
    \label{fig:maxdeg_ratio}
  \end{subfigure}
  \caption{Comparison on the maximal degrees of polynomials in the conditions generated by YHZ's method and the proposed method, respectively.}\label{fig:comparison_maxdeg}
\end{figure}

\subsubsection{Comparison on performance}
In this part, we compare YHZ's method to the proposed method (i.e., QXY's method) in terms of computational efficiency.
For this purpose, we carry out several experiments for $n$ ranging from $3$ to $9$. The experiments are performed on a PC equipped with an Intel(R) Core(TM) i5-7300U processor and an 8GB RAM. 

\begin{table}[htb]
\begin{center}
\caption{Comparison on the time cost of YHZ's method and the proposed method (in seconds)}\label{tab:comparison_time}
\begin{tabular}{|c|r|r|r|r|r|r|r|}\hline
$n$ & 3~~ & 4~~ & 5~~ & 6~~ & 7\,~~ & 8~~~~~ & 9  \\\hline
$t_{\rm YHZ}$ & 0.000& 0.015& 0.031& 0.078& 4.046& 8772.687&?\\\hline
$t_{\rm QXY}$ & 0.015& 0.031& 0.093& 0.171& 1.046& 434.468&?\\\hline
$\dfrac{t_{\rm YHZ}}{t_{\rm QXY}}$ & 0.000& 0.484& 0.333& 0.456& 3.868& 20.192&$-$ \\\hline
\end{tabular} 
\end{center}
\end{table}

The experimental results are reported in Table \ref{tab:comparison_time} where
$t_\text{YHZ}$ and $t_\text{QXY}$ represent the time cost (in seconds) charged by YHZ's method and the proposed method, respectively. It is seen that the proposed method outperforms YHZ's method when solving relatively larger examples (see $n=7,8$ in Table \ref{tab:comparison_time}).

Although both methods fail on the case $n=9$, the new algorithm can still give some positive results. For example, if we take the polynomial of the following form 
$$P=x^9+a_7x^7+\cdots+a_0$$
(which can also be viewed as a generic form of polynomials with degree $9$) as input,
 the algorithm {\sf ParametricCompleteMultiplicity} succeeds after 3009.296 seconds ($<1$ hour) while YHZ's method does not terminate after running for 10 hours.

% \begin{figure}[ht]
%         \centering
%         \includegraphics[width=0.6\textwidth]{graph_tYHZtQXY.eps}
%         \caption{Comparison on the time efficiency of YHZ's method and the proposed method}
%         %       \label{fig:feasible_region}
% \end{figure}

\section{Conclusion}\label{sec:conclusion}
In this paper, we propose a new algorithm for solving the parametric complete multiplicity problem. Different from using the repeated gcd computation in the classical one \cite{1996_Yang_Hou_Zeng},
the new approach uses incremental gcds instead, which can effectively reduce the size of the polynomials appearing in the condition. 

It is noted that in the new approach, we also confront with nested determinants/subresul\-tants,
which requires intensive computation. Thus a natural question is: is it possible to use non-nested determinants to solve the parametric complete multiplicity problem? The investigation along this direction is very attractive not only because it can help to improve the efficiency of the algorithms but also provides a closed form for the parametric complete multiplicity problem.
Another interesting direction is to explore the hidden structures for subresultant used in the algorithm ${\sf ParametricCompleteMultiplicity}$ which may further enhance the computational efficiency.

\bigskip
\noindent{\bf Acknowledgements.} Bican Xia's work was supported by the National Key R \& D Program of China (No. 2022YFA1005102). Simin Qin and Jing Yang's work was supported by
National Natural Science Foundation of China (Grant Nos.: 12326353 and 12261010) and the National Science Cultivation Project of GXMZU (Grant No.: 2022MDKJ001).

\section*{Appendix A.}

Appendix A is dedicated to proving Proposition \ref{prop:sres_prem}. For this purpose, we need the following lemma which is a specialization of \cite[Lemma 29]{2024_Hong_Yang}. 

\begin{lemma}\label{lem:R_lincomb}
Given $P\in\mathbb{R}[x]$ with degree $n$, let $\boldsymbol{F} = (P^{(0)},P^{(1)},\ldots,P^{(t)})$ and $\boldsymbol{\delta} = (\delta_{1}, \ldots, \delta_{t})\in\mathcal{P}(n,t)$. Then we have
$R_{\boldsymbol{\delta}}\in\langle \boldsymbol{F}\rangle$. More explicitly, 
$$R_{\boldsymbol{\delta}} = \sum_{u = 0}^{t}\sum_{v = 0}^{\delta_{u} - 1}c_{u,v}x^{v}P^{(u)}$$
where $\delta_0$ is determined as in \eqref{eqs:delta0}, $c_{u,\delta_{u} - 1}=(-1)^{\sigma}a_n^{\tau}\overline{R_{\boldsymbol{\delta}-\boldsymbol{e}_{u}}}$ for $u\ge1$  and 
\[\sigma=1+\delta_u+\cdots+\delta_t,\ \ \text{and}\ \ \ 
\tau=\left\{
\begin{array}{ll}
1 &\text{if}\ d_u+\delta_u>\max\limits_{\substack{\delta_i\ne0\\i\ne 0,u}}(d_i+\delta_i);\\
0&\text{otherwise}.
\end{array}\right.\]
\end{lemma}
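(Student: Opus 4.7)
The plan is to realize $R_{\boldsymbol{\delta}}=\operatorname{dp}\boldsymbol{M}_{\boldsymbol{\delta}}(\boldsymbol{F})$ as a Laplace expansion in which each row of the generalized Sylvester matrix appears explicitly as the full polynomial $x^{v}P^{(u)}$ it encodes, rather than as the truncated polynomial that the definition of $\operatorname{dp}$ initially produces. Both the membership $R_{\boldsymbol{\delta}}\in\langle\boldsymbol{F}\rangle$ and the explicit formula for $c_{u,\delta_{u}-1}$ then fall out of this single expansion; the real work lives in the sign count and in identifying the resulting minor with $\overline{R_{\boldsymbol{\delta}-\boldsymbol{e}_{u}}}$.

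Set $p=\delta_{0}+|\boldsymbol{\delta}|$ and $q=\delta_{0}+d_{0}$, so $\boldsymbol{M}_{\boldsymbol{\delta}}(\boldsymbol{F})$ is $p\times q$. Multilinearity in the last column of $R_{\boldsymbol{\delta}}=\sum_{i=0}^{q-p}\det[\boldsymbol{M}_{1},\ldots,\boldsymbol{M}_{p-1},\boldsymbol{M}_{q-i}]x^{i}$ rewrites it as $\det[\boldsymbol{M}_{1},\ldots,\boldsymbol{M}_{p-1},\boldsymbol{v}]$, where $v_{r}=\sum_{i=0}^{q-p}M_{r,q-i}x^{i}$ is the degree-$\le(q-p)$ truncation of the polynomial $Q_{r}=x^{v}P^{(u)}$ labelling row $r$. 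Adding $x^{q-j}\boldsymbol{M}_{j}$ to this last column for $j=1,\ldots,p-1$ leaves the determinant unchanged and promotes each $v_{r}$ to the full $Q_{r}$, because the missing high-degree part of $Q_{r}$ is precisely $\sum_{j=1}^{p-1}M_{r,j}x^{q-j}$. Laplace expansion along the promoted last column then yields
\[
R_{\boldsymbol{\delta}}=\sum_{r=1}^{p}(-1)^{r+p}\det(N_{r})\,Q_{r},
\]
with $N_{r}$ the $(p-1)\times(p-1)$ matrix obtained from $[\boldsymbol{M}_{1},\ldots,\boldsymbol{M}_{p-1}]$ by deleting row $r$. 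This simultaneously exhibits $R_{\boldsymbol{\delta}}\in\langle\boldsymbol{F}\rangle$ and identifies $c_{u,v}=(-1)^{r+p}\det N_{r}$ for the row $r$ labelled $(u,v)$.

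For $c_{u,\delta_{u}-1}$ with $u\ge 1$, the row $r$ encoding $x^{\delta_{u}-1}P^{(u)}$ is the top row of block $\boldsymbol{B}_{u}$, so $r=\delta_{0}+\delta_{1}+\cdots+\delta_{u-1}+1$; a direct count gives $r+p\equiv 1+\delta_{u}+\cdots+\delta_{t}\pmod{2}=\sigma$. The sign is therefore correct, and it remains to match $\det N_{r}$ with $a_{n}^{\tau}\,\overline{R_{\boldsymbol{\delta}-\boldsymbol{e}_{u}}}$. When $\tau=0$, passing from $\boldsymbol{\delta}$ to $\boldsymbol{\delta}-\boldsymbol{e}_{u}$ does not alter $\delta_{0}$, so $\boldsymbol{M}_{\boldsymbol{\delta}-\boldsymbol{e}_{u}}(\boldsymbol{F})$ is literally $\boldsymbol{M}_{\boldsymbol{\delta}}(\boldsymbol{F})$ with row $r$ deleted, and its leading $(p-1)\times(p-1)$ minor is exactly $\det N_{r}$, which equals $\overline{R_{\boldsymbol{\delta}-\boldsymbol{e}_{u}}}$ by definition. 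When $\tau=1$, the index $u$ is the unique maximizer of $d_{i}+\delta_{i}$ among $i\ge 1$ with $\delta_{i}\ne 0$; consequently every row of $\boldsymbol{M}_{\boldsymbol{\delta}}(\boldsymbol{F})$ other than the top rows of $\boldsymbol{B}_{0}$ and $\boldsymbol{B}_{u}$ encodes a polynomial of degree strictly less than $q-1$, and hence has vanishing column-$1$ entry. After deleting the top row of $\boldsymbol{B}_{u}$, column $1$ of $N_{r}$ therefore has only $a_{n}$ as its nonzero entry (in the top row of $\boldsymbol{B}_{0}$), and expansion along column $1$ gives $\det N_{r}=a_{n}\det N_{r}'$ with $N_{r}'$ of size $(p-2)\times(p-2)$. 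A one-step column-index shift (reflecting that $\delta_{0}$ drops by one and thus the new matrix has one fewer column, so old column $j+1$ matches new column $j$) identifies $N_{r}'$ entry-by-entry with the first $p-2$ columns of $\boldsymbol{M}_{\boldsymbol{\delta}-\boldsymbol{e}_{u}}(\boldsymbol{F})$, giving $\det N_{r}=a_{n}\,\overline{R_{\boldsymbol{\delta}-\boldsymbol{e}_{u}}}$.

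The main obstacle is the $\tau=1$ bookkeeping: one must track simultaneously how the block sizes $\delta_{0},\delta_{u}$ and the column count $q$ change in passing from $\boldsymbol{\delta}$ to $\boldsymbol{\delta}-\boldsymbol{e}_{u}$, and verify that the residual $(p-2)\times(p-2)$ block after peeling off the $a_{n}$ matches the leading minor of the reduced Sylvester matrix column-for-column. The uniqueness-of-maximizer hypothesis encoded by $\tau=1$ is exactly what makes column $1$ of $N_{r}$ rank-one supported on the top row of $\boldsymbol{B}_{0}$, so the entire case distinction $\tau\in\{0,1\}$ is driven by a single combinatorial question about whether the maximum of $\{d_{i}+\delta_{i}\}_{i\ge 1,\ \delta_{i}\ne 0}$ is uniquely attained at $u$.
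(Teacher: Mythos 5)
The paper never proves this lemma: it is imported wholesale as ``a specialization of \cite[Lemma 29]{2024_Hong_Yang}'', so there is no in-paper argument to measure you against. Your route is the natural self-contained one and is essentially certainly the mechanism behind the cited result: rewrite $\operatorname{dp}\boldsymbol{M}_{\boldsymbol{\delta}}(\boldsymbol{F})$ by multilinearity as a single determinant whose last column holds the truncated row-polynomials, promote that column to the full polynomials $x^{v}P^{(u)}$ by adding $x^{q-j}$ times column $j$ for $j=1,\ldots,p-1$, and Laplace-expand along it. This yields both the membership $R_{\boldsymbol{\delta}}\in\langle\boldsymbol{F}\rangle$ and the cofactor description $c_{u,v}=(-1)^{r+p}\det N_{r}$ in one stroke. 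Your parity count $r+p\equiv 1+\delta_{u}+\cdots+\delta_{t}\pmod 2$ is correct, the $\tau=0$ identification (same $\delta_{0}$, so $\boldsymbol{M}_{\boldsymbol{\delta}-\boldsymbol{e}_{u}}$ is literally $\boldsymbol{M}_{\boldsymbol{\delta}}$ with row $r$ deleted and $\det N_{r}$ is its leading minor) is correct, and the $\tau=1$ peeling of the lone $a_{n}$ from column $1$ followed by the one-column shift is the right calculation.

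The one genuine gap is that your entire $\tau=1$ analysis silently assumes that $\delta_{0}$ is given by the first branch of \eqref{eqs:delta0} both for $\boldsymbol{\delta}$ and for $\boldsymbol{\delta}-\boldsymbol{e}_{u}$, i.e.\ that $q-1=d_{u}+\delta_{u}-1$ and that the maximum $\max_{i}(d_{i}+\delta_{i})$ decreases by exactly one, so that $\delta_{0}$ drops by exactly one. This can fail: if $\delta_{0}=1$ comes from the ``otherwise'' branch, or if $\delta_{u}=1$ and $u$ is the unique maximizer (so $u$ drops out of the maximum altogether and $\delta_{0}$ may fall by more than one or switch branches), then $\boldsymbol{M}_{\boldsymbol{\delta}-\boldsymbol{e}_{u}}$ is no longer obtained from $\boldsymbol{M}_{\boldsymbol{\delta}}$ by deleting one row and one column, and the clean factor $a_{n}^{\tau}$ does not emerge from your expansion; indeed for such degenerate $\boldsymbol{\delta}$ (e.g.\ $\boldsymbol{\delta}=(1)$) the stated formula itself needs reinterpretation. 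You should either add the standing hypothesis $d_{1}+\delta_{1}\ge d_{0}$ with $\delta_{1}\ge1$, or note that every invocation in Appendix~A has $\boldsymbol{\delta}$ decreasing with $\delta_{1}\ge\cdots\ge\delta_{t}\ge1$, which forces the first branch, makes $u=1$ the only index with $\tau=1$, and (since then $\delta_{1}\ge2$) guarantees that the maximizer survives the decrement and $\delta_{0}$ drops by exactly one --- so your bookkeeping does cover all the cases the paper actually uses.
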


\begin{proof}[Proof of Proposition \ref{prop:sres_prem}]
Recall $\boldsymbol{\delta}=(\delta_1,\ldots,\delta_t)$ satisfies $|\boldsymbol{\delta}|\le n$, $\delta_t\ge 1$ and $\delta_j>\delta_{j+1}$ for some $1\le j<t$. 
Let $\boldsymbol{\gamma}=\boldsymbol{\delta}-\boldsymbol{e}_j$ and $\boldsymbol{\gamma'}=\boldsymbol{\delta}-\boldsymbol{e}_t$. 
By Lemma \ref{lem:R_lincomb},
\begin{align*}
R_{\boldsymbol{\gamma}}=\sum_{u=0}^{t}\sum_{v = 0}^{\gamma_{u} - 1}c_{u,v}x^{v}P^{(u)},\quad R_{\boldsymbol{\gamma'}}=\sum_{u=0}^{t}\sum_{v = 0}^{\gamma_{u}' - 1}c'_{u,v}x^{v}P^{(u)}
\end{align*}
where $\gamma_0$ and $\gamma_0'$ are determined as in \eqref{eqs:delta0},
and 
\begin{align}
c_{t,\gamma_t-1}&=(-1)^{1+\gamma_t}\overline{R_{\boldsymbol{\gamma}-\boldsymbol{e}_t}}\notag\\
&\quad\text{($\tau=0$ is implied by $\gamma_t<\gamma_1$ and $d_t=n-t<n-1=d_1$)}\notag\\
&=(-1)^{1+\delta_t}\overline{R_{\boldsymbol{\delta}-\boldsymbol{e}_j-\boldsymbol{e}_t}(\boldsymbol{F})},\hspace{9em}\text{(since $\gamma_t=\delta_t$)}\label{eq:lc_t}\\
c'_{j,\gamma_j'-1}&=(-1)^{1+\gamma'_j+\cdots+\gamma'_t}a_n^{\tau}\overline{R_{\boldsymbol{\gamma'}-\boldsymbol{e}_j}}\notag\\
&=(-1)^{\delta_j+\cdots+\delta_t}a_n^{\tau}\overline{R_{\boldsymbol{\delta}-\boldsymbol{e}_t-\boldsymbol{e}_j}}
\label{eq:lc_j}\\
&\hspace{1em}\ \text{(since $\gamma_j'=\delta_j$ for $j<t$ and $\gamma_t'=\delta_t-1$)},\notag
\end{align}
where $\tau=\left\{
\begin{array}{ll}
1 &\text{if}\ j=1;\\
0&\text{otherwise}
\end{array}\right.
$ (which is a specialization of $\tau$ in Lemma \ref{lem:R_lincomb}).

Since $\gamma_t=\delta_t$ and $\gamma_j'=\delta_j$, the above linear combinations can be further rewritten as
\begin{align}
&c_{t,\delta_t-1}x^{\delta_t-1}P^{(t)}=R_{\boldsymbol{\gamma}}-\sum_{(u,v)\in C_1}c_{u,v}x^{v}P^{(u)},\label{eqs:C_t}\\ &c'_{j,\delta_j-1}x^{\delta_j-1}P^{(j)}=R_{\boldsymbol{\gamma'}}-\sum_{(u,v)\in C_2}c'_{u,v}x^{v}P^{(u)}.\label{eqs:C_j}
\end{align}
where 
\begin{align*}
C_1&=\{(u,v):\,0\le u\le t, 0\le v\le \gamma_{u}-1\}\backslash\{(t,\delta_t-1)\},\\
C_2&=\{(u,v):\,0\le u\le t, 0\le v\le \gamma'_{u}-1\}\backslash\{(j,\delta_j-1)\}.\\
\end{align*}
Now we consider $R_{\boldsymbol{\delta}}$, which is
\begin{align*}
R_{\boldsymbol{\delta}}=\operatorname{dp}(&x^{\delta_1-2}P^{(0)},\ldots,x^0P^{(0)},x^{\delta_1-1}P^{(1)},\ldots,x^0P^{(1)},\\
&\ldots,x^{\delta_j-1}P^{(j)},\ldots,x^0P^{(j)},\ldots\ldots,x^{\delta_t-1}P^{(t)},\ldots,x^0P^{(t)}).
\end{align*}
For the sake of simplicity, we introduce the following notations. Let
$$P_{u,v} = x^vP^{(u)}, Q_{u,v} = P_{u,v-1},\dots,P_{u,0}.$$ 
Then we have 
\begin{align*}
R_{\boldsymbol{\delta}} &= \operatorname*{dp}(Q_{0,\delta_1-1},Q_{1,\delta_1},\dots,Q_{j,\delta_j},\dots,Q_{t,\delta_t},\dots,Q_{n,\delta_n}) \\
&=\operatorname*{dp}(Q_{0,\delta_1-1},Q_{1,\delta_1},\dots,\underbrace{P_{j,\delta_j-1},Q_{j,\delta_j-1}}_{Q_{j,\delta_j}},\dots,\underbrace{P_{t,\delta_t-1},Q_{t,\delta_t-1}}_{Q_{t,\delta_t}},\dots,Q_{n,\delta_n}).
\end{align*}
Consider $c_{t,\delta_t-1}c'_{j,\delta_j-1}R_{\boldsymbol{\delta}}$. We derive the following:
\begin{align*}
&c_{t,\delta_t-1}c'_{j,\delta_j-1}R_{\boldsymbol{\delta}} \\
=\ &\operatorname*{dp}(Q_{0,\delta_1-1},Q_{1,\delta_1},\dots,c'_{j,\delta_j-1}P_{j,\delta_j-1},Q_{j,\delta_j-1},\dots,c_{t,\delta_t-1}P_{t,\delta_t-1},Q_{t,\delta_t-1},\dots,Q_{n,\delta_n})\\
&\hspace{25em}(\text{pushing $c_{t,\delta_t-1}c'_{j,\delta_j-1}$ into $\operatorname*{dp}$})\\
=\ &\operatorname*{dp} (Q_{0,\delta_1-1},Q_{1,\delta_1},\dots,R_{\boldsymbol{\delta}-\boldsymbol{e}_t},Q_{j,\delta_j-1},\dots,R_{\boldsymbol{\delta}-\boldsymbol{e}_j},Q_{t,\delta_t-1},\dots,Q_{n,\delta_n}) \\
&\quad(\text{using $\eqref{eqs:C_t}$--$\eqref{eqs:C_j}$ and the multi-linearity of determinant polynomial to simplify $R_{\boldsymbol{\delta}}$})\\
=\ &(-1)^{\delta_j+\dots+\delta_{t-1}} \operatorname*{dp}(Q_{0,\delta_1-1},Q_{1,\delta_1},\dots,Q_{j,\delta_j-1},\dots,Q_{t,\delta_t-1},\dots,Q_{n,\delta_n},R_{\boldsymbol{\delta}-\boldsymbol{e}_j},R_{\boldsymbol{\delta}-\boldsymbol{e}_t}) \\
&\hspace{22em}(\text{moving $R_{\boldsymbol{\delta}-\boldsymbol{e}_j}$ and $R_{\boldsymbol{\delta}-\boldsymbol{e}_t}$ to the end})\\
=\ &(-1)^{\delta_j+\dots+\delta_{t-1}}a_n^{\tau}\overline{R_{\boldsymbol{\delta}- \boldsymbol{e}_j-\boldsymbol{e}_t}} \operatorname*{dp}(R_{\boldsymbol{\delta}-\boldsymbol{e}_j},R_{\boldsymbol{\delta}-\boldsymbol{e}_t})\
\hspace{3.5em}(\text{using the block structure of dp})
\end{align*}
where $$\tau=\left\{
\begin{array}{ll}
1 &\text{if}\ j=1;\\
0&\text{otherwise}.
\end{array}\right.
$$

%\begin{align*}
%c_{t,\delta_t-1}c'_{j,\delta_j-1}R_{\boldsymbol{\delta}}(\boldsymbol{F})=&\operatorname{dp}(x^{\delta_1-2}P^{(0)},\ldots,x^0P^{(0)},x^{\delta_1-1}P^{(1)},\ldots,x^0P^{(1)},\\
%&\qquad\ \ \ldots,\underline{c'_{j,\delta_j-1}x^{\delta_j-1}P^{(j)}},c'_{j,\delta_j-2}x^{\delta_j-2}P^{(j)},\ldots,x^0P^{(j)},\\
%&\qquad\ \ \ldots,\underline{c_{t,\delta_t-1}x^{\delta_t-1}P^{(t)}},c_{t,\delta_t-2}x^{\delta_t-2}P^{(t)},\ldots,x^0P^{(t)})\\
%=&\operatorname{dp}(x^{\delta_1-2}P^{(0)},\ldots,x^0P^{(0)},x^{\delta_1-1}P^{(1)},\ldots,x^0P^{(1)},\\
%&\qquad\ \ \ldots,\underline{R_{\boldsymbol{\gamma'}}(\boldsymbol{F})},c'_{j,\delta_j-2}x^{\delta_j-2}P^{(j)},\ldots,x^0P^{(j)},\\
%&\qquad\ \ \ldots,\underline{R_{\boldsymbol{\gamma}}(\boldsymbol{F})},c_{t,\delta_t-2}x^{\delta_t-2}P^{(t)},\ldots,x^0P^{(t)})\\
%=&(-1)^{\sigma}\operatorname{dp}(x^{\delta_1-2}P^{(0)},\ldots,x^0P^{(0)},x^{\delta_1-1}P^{(1)},\ldots,x^0P^{(1)},\\
%&\qquad\ \ \ldots,c'_{j,\delta_j-2}x^{\delta_j-2}P^{(j)},\ldots,x^0P^{(j)},\\
%&\qquad\ \ \ldots,c_{t,\delta_t-2}x^{\delta_t-2}P^{(t)},\ldots,x^0P^{(t)},\underline{R_{\boldsymbol{\gamma}}(\boldsymbol{F})},\underline{R_{\boldsymbol{\gamma'}}(\boldsymbol{F})})\\
%=&(-1)^{\sigma}a_n^{\delta_1-1-\gamma_0}s_{\delta-e_j-t_t}\operatorname{dp}(R_{\boldsymbol{\gamma}}(\boldsymbol{F}),R_{\boldsymbol{\gamma'}}(\boldsymbol{F}))
%\end{align*}
%where $\sigma=\left((\delta_j-1)+\sum_{\ell=j+1}^{t-1}\delta_{\ell}+\delta_t\right)+(\delta_t-1)$.
Now we substitute \eqref{eq:lc_t} and \eqref{eq:lc_j} into the first line of the equation and obtain
\begin{equation}\label{eqs:2bcancelled}
(-1)^{\delta_j+\dots+\delta_{t-1}+1}a_n^{\tau}\left(\overline{R_{\boldsymbol{\delta}- \boldsymbol{e}_j-\boldsymbol{e}_t}(\boldsymbol{F})}\right)^{2} R_{\boldsymbol{\delta}}(\boldsymbol{F})=(-1)^{\delta_j+\dots+\delta_{t-1}}a_n^{\tau}\overline{R_{\boldsymbol{\delta}- \boldsymbol{e}_j-\boldsymbol{e}_t}(\boldsymbol{F})} \operatorname*{dp}(R_{\boldsymbol{\delta}-\boldsymbol{e}_j},R_{\boldsymbol{\delta}-\boldsymbol{e}_t}) \end{equation}
Note that $\boldsymbol{\delta}$ is a decreasing vector and so is $\boldsymbol{\delta}- \boldsymbol{e}_t-\boldsymbol{e}_j$, which implies that for some value of $a_{i}$'s, $\overline{R_{\boldsymbol{\delta}- \boldsymbol{e}_t-\boldsymbol{e}_j}(\boldsymbol{F})}\ne0$. Hence $\overline{R_{\boldsymbol{\delta}- \boldsymbol{e}_t-\boldsymbol{e}_j}(\boldsymbol{F})}$ is a nonzero polynomial in terms of $a_i$'s. Therefore, after canceling the common factors from both sides of \eqref{eqs:2bcancelled}, we have 
\[-\overline{R_{\boldsymbol{\delta}- \boldsymbol{e}_t-\boldsymbol{e}_j}(\boldsymbol{F})}R_{\boldsymbol{\delta}}(\boldsymbol{F})=\operatorname{dp}(R_{\boldsymbol{\gamma}}(\boldsymbol{F}),R_{\boldsymbol{\gamma'}}(\boldsymbol{F})).\]
Note that 
\begin{align*}
\deg R_{\boldsymbol{\gamma}}(\boldsymbol{F})&=n-|\boldsymbol{\gamma}|=n-|\boldsymbol{\delta}|+1,\\
\deg R_{\boldsymbol{\gamma'}}(\boldsymbol{F})&=n-|\boldsymbol{\gamma'}|=n-|\boldsymbol{\delta}|+1.
\end{align*}
which implies $\deg R_{\boldsymbol{\gamma}}(\boldsymbol{F})=\deg R_{\boldsymbol{\gamma'}}(\boldsymbol{F})$. Therefore,  
we deduce that
$$\operatorname{dp}(R_{\boldsymbol{\gamma}}(\boldsymbol{F}),R_{\boldsymbol{\gamma'}}(\boldsymbol{F}))=-\operatorname*{prem}(R_{\boldsymbol{\gamma}}(\boldsymbol{F}),R_{\boldsymbol{\gamma'}}(\boldsymbol{F})) $$
which indicates that
\[\overline{R_{\boldsymbol{\delta}- \boldsymbol{e}_j-\boldsymbol{e}_t}(\boldsymbol{F})}R_{\boldsymbol{\delta}}(\boldsymbol{F})=\operatorname*{prem}(R_{\boldsymbol{\delta}-\boldsymbol{e}_j}(\boldsymbol{F}),R_{\boldsymbol{\delta}-\boldsymbol{e}_t}(\boldsymbol{F})).\]
\end{proof}

\section*{Appendix B.}
Given a   polynomial $P$, the repeated gcd of $P$ and $P'$ is defined as $(\tilde{G}_1,\ldots,\tilde{G}_n)$ where 
\[\tilde{G}_{i} =\gcd(\tilde{G}_{i-1},\tilde{G}'_{i-1})\]
with
$\tilde{G}_{0}:=P$.
In this paper, we use the incremental gcds of $P^{(0)},P^{(1)},\ldots$ instead of repeated gcds. In the appendix, we provide a detailed proof on their equivalence.

\begin{lemma}\label{lem:equiv_rpgcd_icgcd}
 Let $G_0=P$. Then for $i\ge0$, we have $G_{i}\sim \tilde{G}_{i} $.
\end{lemma}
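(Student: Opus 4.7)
The plan is to compute, over the algebraic closure, the root--multiplicity structure of both sides explicitly and observe they coincide. Write $P = a_n\prod_{j=1}^{m}(x-r_j)^{\mu_j}$ with distinct $r_j$'s, so that Lemma \ref{lem:equiv_rpgcd_icgcd} reduces to showing both $G_i$ and $\tilde G_i$ are nonzero constant multiples of $\prod_{\mu_j>i}(x-r_j)^{\mu_j-i}$.

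First I would handle $G_i$ directly. Since $P^{(k)}$ vanishes at $r_j$ to order $\max(\mu_j-k,0)$, taking gcds yields
\[
G_i \;=\; \gcd\bigl(P^{(0)},\ldots,P^{(i)}\bigr) \;=\; c\cdot\!\!\prod_{\mu_j>i}\!(x-r_j)^{\mu_j-i},
\]
because the multiplicity of $r_j$ in the gcd is $\min_{0\le k\le i}\max(\mu_j-k,0) = \max(\mu_j-i,0)$. This is the easy half.

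Next I would establish the same formula for $\tilde G_i$ by induction on $i$. The base case $\tilde G_0 = P$ is immediate. For the inductive step, assume $\tilde G_{i-1} \sim \prod_{\mu_j>i-1}(x-r_j)^{\mu_j-i+1}$. The key observation is the elementary fact that if a polynomial $f$ has $r$ as a root of multiplicity exactly $k\ge 1$, then $f'$ has $r$ as a root of multiplicity exactly $k-1$ (write $f=(x-r)^k g$ with $g(r)\ne 0$ and differentiate). Applying this to each factor of $\tilde G_{i-1}$, we get that $\tilde G'_{i-1}$ has $r_j$ to order $\mu_j-i$ when $\mu_j\ge i$ and does not vanish at $r_j$ when $\mu_j=i$ (since then the exponent drops from $1$ to $0$). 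Taking gcd factor by factor gives $\tilde G_i \sim \prod_{\mu_j>i}(x-r_j)^{\mu_j-i}$, as required.

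Combining the two formulas yields $G_i\sim\tilde G_i$. The only mildly delicate point is the boundary case $\mu_j=i$ in the inductive step: one must check carefully that although $r_j$ survives in $\tilde G_{i-1}$ with multiplicity $1$, it is eliminated from the gcd because $\tilde G'_{i-1}(r_j)\ne 0$. Everything else is bookkeeping of exponents, so I expect no serious obstacle.
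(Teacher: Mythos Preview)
Your proposal is correct and follows essentially the same route as the paper: reduce both $G_i$ and $\tilde G_i$ to the explicit product $\prod_{\mu_j>i}(x-r_j)^{\mu_j-i}$, the first by reading off root multiplicities in the derivatives and taking a minimum, the second by induction using $\gcd(f,f')$. One small imprecision: for $k>\mu_j$ the order of vanishing of $P^{(k)}$ at $r_j$ need not be exactly $0$ (it can be larger), but this is harmless since $k=\mu_j$ already forces the minimum to be $\max(\mu_j-i,0)$.
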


\begin{proof}
 Without loss of generality, we assume 
 $$ P = a_n\prod_{k = 1}^{m} (x - r_{k})^{\mu_{k}}. $$
 To prove
$G_{i} \sim \tilde{G}_{i}$,
 we only need to show
 \begin{itemize}
 \item $G_{i} \sim \prod_{\mu_{k} > i}^{ } (x - r_{k})^{\mu_{k} - i}$, and
 \item $\tilde{G}_{i} \sim \prod_{\mu_{k} > i}^{ } (x - r_{k})^{\mu_{k} - i},$
 \end{itemize}
 whose proof will be given in an inductive manner.
 
 {\bf Base case.}
 When $i = 0$, we have
 \[
 G_{0} =\tilde{G}_{0} = P =a_{n}  \prod_{\mu_{k} > 0}^{ } (x - r_{k})^{\mu_{k}}
 \]
 which is obviously true.

{\bf Induction step.} 
 We assume the claim holds when $i = j$, i.e.
 \begin{align*}
 G_{j} &= \gcd(G_{j - 1} , P^{(j)})\sim \prod_{\mu_{k} > j}^{ } (x - r_{k})^{\mu_{k} - j},  \\
 \tilde{G}_{j} &= \gcd(\tilde{G}_{j - 1}, \tilde{G}'_{j - 1})\sim \prod_{\mu_{k} > j}^{ } (x - r_{k})^{\mu_{k} - j} .
 \end{align*}
 Then for $i = j + 1$, 
 \[
G_{j + 1} = \gcd(G_{j}, P^{(j + 1)}),\qquad 
\tilde{G}_{j + 1} = \gcd( \tilde{G}_{j},  \tilde{G}'_{j}) .
 \]
 Next we deduce the expressions for $G_{j+1}$ and $\tilde{G}_{j + 1}$, respectively.

By assumption, $G_{j} \sim \prod_{\mu_{k} > j}^{ } (x - r_{k})^{\mu_{k} - j}$.
Obviously,  we have
\begin{itemize}
\item $ \prod_{\mu_{k} > j + 1}^{}(x - r_{k})^{\mu_{k} - (j + 1)} \mid  G_{j} $, and
\item when $\mu_{k} > j$,
 $$ (x - r_{k})^{\mu_{k}}\mid P^{(0)}, (x - r_{k})^{\mu_{k} - 1}\mid P^{(1)}, \ldots, (x - r_{k})^{\mu_{k} - j}\mid P^{(j)} $$
\end{itemize}
Next we show that $ \prod_{\mu_{k} > j + 1}^{}(x - r_{k})^{\mu_{k} - (j + 1)} \mid  P^{(j + 1)} $. 
The key for verifying the claim is the observation that when $\mu_{k} > j + 1$, $$P^{(j + 1)}(r_{k}) = \cdots = P^{(\mu_{k} - 1)}(r_{k}) = 0,$$
but $P^{(\mu_{k})}(r_{k}) \neq 0$, which indicates that
 $ x = r_{k} $ is the multiple roots of $P^{(j + 1)}(x)$ with multiplicity  $\mu_{k} - (j + 1) $.  Thereby, $(x - r_{k})^{\mu_{k} - (j + 1)}\mid P^{(j + 1)}(x)$. Moreover, we also have $(x - r_{k})^{\mu_{k} - j}\nmid P^{(j + 1)}$. Hence

 $$ G_{j + 1} = \gcd(G_{j} , P^{(j + 1)}) \sim \prod_{\mu_{k} > j + 1}^{}(x - r_{k})^{\mu_{k} - (j + 1)} $$

Again, by assumption, $\tilde{G}_{j} \sim \prod_{\mu_{k} > j}^{ } (x - r_{k})^{\mu_{k} - j}$.
Thus
 \begin{align*}
\tilde{G}'_{j}\sim \left( \prod_{\mu_{k} > j}^{ } (x - r_{k})^{\mu_{k} - j}\right)' &= \sum_{\mu_{k} > j}\left((\mu_{k} - j)(x - r_{k})^{\mu_{k} - j - 1} \prod_{\underset{\mu_{\ell} > j}{\ell \neq k} }^{}(x - r_{\ell})^{\mu_{\ell} - j}\right) \\
 &= \left(\sum_{\mu_k > j}^{}(\mu_k - j)\prod_{\underset{\mu_{\ell} > j}{\ell \neq k}}(x - r_{\ell})\right) \left(\prod_{\mu_{k} > j}^{}(x - r_{k})^{\mu_{k} - j - 1} \right).
 \end{align*}
 It follows that
 \begin{align*}
 \tilde{G}_{j + 1} &= \gcd( \tilde{G}_{j},  \tilde{G}'_{j}) \\
 &\sim\gcd \left( \prod_{\mu_{k} > j}^{ } (x - r_{k})^{\mu_{k} - j}, \left( \prod_{\mu_{k} > j}^{ } (x - r_{k})^{\mu_{k} - j}\right)'\right) \\
\\
&=\gcd \left( \prod_{\mu_{k} > j}^{ } (x - r_{k})^{\mu_{k} - j}, \left(\sum_{\mu_k > j}^{}(\mu_k - j)\prod_{\underset{\mu_{\ell} > j}{\ell \neq k}}(x - r_{\ell})\right) \left(\prod_{\mu_{k} > j}^{}(x - r_{k})^{\mu_{k} - j - 1} \right)\right)\\
&=\gcd \left( \prod_{\mu_{k} > j}^{ } (x - r_{k})^{}, \sum_{\mu_k > j}(\mu_k - j)\prod_{\underset{\mu_{\ell} > j}{\ell \neq k}}(x - r_{\ell})\right)\cdot \left(\prod_{\mu_{k} > j}^{}(x - r_{k})^{\mu_{k} - j - 1} \right).
\end{align*}
It remains to show that
\[\gcd \left( \prod_{\mu_{k} > j}(x - r_{k}), \sum_{\mu_k > j}(\mu_k - j)\prod_{\underset{\mu_{\ell} > j}{\ell \neq k}}(x - r_{\ell})\right)=1\]
which can be proved with the following verification
$$\left(\sum_{\mu_{k} > j}^{}(\mu_{k} - j)\prod_{\underset{\mu_{l} > j}{l \neq k}}(x - r_{\ell})\right)\Bigg|_{x=r_u} =(\mu_{u} - j)\prod_{\underset{\mu_{\ell} > j}{\ell \neq k}}(r_{u} - r_{\ell}) \neq 0$$
for $u\in\{\mu_k:\,\mu_k>j\}$. Hence 
\[\tilde{G}_{j + 1} \sim\prod_{\mu_{k} > j}^{}(x - r_{k})^{\mu_{k} - j - 1}\sim G_{j+1}. \]
\end{proof}

%\bibliographystyle{abbrv}
%\bibliography{reference}
%

\end{document}